\newcommand{\field}[1]{\mathbb{#1}}
\newcommand{\remove}[1]{}
\newtheorem{thm}{Theorem}[section]
\newtheorem{claim}[thm]{Claim}
\newtheorem{lem}[thm]{Lemma}
\newtheorem{define}[thm]{Definition}
\newtheorem{cor}[thm]{Corollary}
\newtheorem{obs}[thm]{Observation}
\newtheorem{prop}[thm]{Proposition}
\def\F{{\mathbb{F}}}
\def\Z{{\mathbb{Z}}}
\def\A{{\mathcal{A}}}
\def\B{{\mathcal{B}}}
\def\E{{\mathbb E}}
\def\k{\tilde k}
\def\n{\tilde n}
\def\T{\mathcal T}
\def\imm{{IMM_{\tilde{n}, n}}}
\def\simm{\overline{{IMM}}}
\def\immv{{IMM}|_V}
\def\simmv{\overline{{IMM}}|_V}
\def\dis{\textsf{Dis}}
\def\1{\mathbf 1}
\def\_{\,\,\,\,\,}
\def\supp{\textsf{supp}}
\def\poly{\textsf{poly}}
\def\deg{\textsf{Deg}}
\def\spsp{\Sigma\Pi\Sigma\Pi}
\def\dim{\mathsf{Dim}}
\begin{document}
\def\e{{\mathbb E}}
\def\lm{\mathsf{Lead\mbox{-}Mon}}

\mathchardef\mhyphen="2D 

\title{On the power of homogeneous depth 4 arithmetic circuits }
\author{Mrinal Kumar \and Shubhangi Saraf}
\author{Mrinal Kumar\thanks{Department of Computer Science, Rutgers University.
Email: \texttt{mrinal.kumar@rutgers.edu}.}\and
Shubhangi Saraf\thanks{Department of Computer Science and Department of Mathematics, Rutgers University.
Email: \texttt{shubhangi.saraf@gmail.com}. Research supported by NSF grant CCF-1350572.}}

\date{}
\maketitle
\abstract{

We prove exponential lower bounds on the size of homogeneous depth 4 arithmetic circuits computing an explicit polynomial in $\VP$. Our results hold for the {\it Iterated Matrix Multiplication} polynomial - in particular we show that any homogeneous depth 4 circuit computing the $(1,1)$ entry in the product of $n$ generic matrices of dimension $n^{O(1)}$ must have size $n^{\Omega(\sqrt{n})}$. 

Our results strengthen previous works in two significant ways. 

\begin{enumerate}
\item Our lower bounds hold for a polynomial in $\VP$.  Prior to our work, Kayal et al~\cite{KLSS14} proved an exponential lower bound for homogeneous depth 4 circuits (over fields of characteristic zero) computing a poly in $\VNP$. The best known lower bounds for a depth 4 homogeneous circuit computing a poly in $\VP$ was the bound of $n^{\Omega(\log n)}$ by~\cite{LSS13, KLSS14}. 

Our exponential lower bounds also give the first exponential separation between general arithmetic circuits and homogeneous depth 4 arithmetic circuits. In particular they imply that the depth reduction results of Koiran~\cite{koiran} and Tavenas~\cite{Tavenas13} are tight even for reductions to general homogeneous depth 4 circuits (without the restriction of bounded bottom fanin). 
\item Our lower bound holds over all fields. The lower bound of~\cite{KLSS14} worked only over fields of characteristic zero. Prior to our work, the best lower bound for homogeneous depth 4 circuits over fields of positive characteristic was $n^{\Omega(\log n)}$~\cite{LSS13, KLSS14}.
\end{enumerate}



}

\clearpage
\tableofcontents

\section{Introduction}
In a seminal work~\cite{Valiant79}, Valiant defined the classes $\VP$ and $\VNP$ as the algebraic analogs of the classes $\P$ and $\NP$. The problem of separating $\VNP$ from $\VP$ has since been one of the most important open problems in algebraic complexity theory. Although the problem has received a great deal of attention in the following years, the best lower bounds known for general arithmetic circuits are barely super linear~\cite{Strassen73b, BS83}. The absence of progress on the general problem has led to much attention being devoted to proving lower bounds for {\it restricted classes} of arithmetic circuits. Arithmetic circuits of small depth are one such class that has been intensively studied.

\paragraph{Depth Reduction:} In a very interesting direction of research, Valiant et al~\cite{VSBR83} showed that every polynomial of degree $n$ in $\poly(n)$ variables, which can be computed by a $\poly(n)$ sized arithmetic circuit, can also be computed by a $\poly(n)$ sized arithmetic circuit of {\it depth $O(\log^2 n)$}. In other words, arbitrary depth circuits in $\VP$ can be reduced to circuits of depth $O(\log^2 n)$ with only a polynomial blowup in size. Thus, in order to separate $\VNP$ from $\VP$, it would suffice to show a super-polynomial lower bound for just circuits of depth $O(\log^2 n)$. In an intriguing line of recent works in this direction,  Agrawal-Vinay~\cite{AV08}, Koiran~\cite{koiran} and Tavenas~\cite{Tavenas13} built upon the results of Valiant et al~\cite{VSBR83} and showed that much stronger depth reductions are possible. In order to separate $\VNP$ form $\VP$, it would suffice to prove strong enough ($n^{\omega(\sqrt n)}$) lower bounds for just {\it homogeneous depth 4 circuits}. 

\paragraph{Lower bounds for homogeneous bounded depth circuits:} In an extremely influential work, Nisan and Wigderson~\cite{NW95} proved the first super-polynomial (and in fact exponential) lower bound for the class of homogeneous depth 3 circuits. This work used the {\it dimension of the space of partial derivatives} as a measure of complexity of a polyomial, and used this measure to prove the lower bounds. For several years thereafter, there were no improved lower bounds - even for the case of depth 4 homogeneous circuits, the best lower bounds were just mildly super-linear~\cite{Raz10b}. This is contrary to what is known for Boolean circuits, where we know exponential lower bounds for bounded depth circuits. This seemed surprising until the depth reduction results of Agrawal-Vinay~\cite{AV08} and later Koiran~\cite{koiran} and Tavenas~\cite{Tavenas13}, which demontrated that in some sense, homogeneous depth 4 circuits {\it capture} the inherent complexity of general arithmetic circuits.

In a breakthrough result in 2012, Gupta, Kamath, Kayal and Saptharishi~\cite{GKKS12}, made the first major progress on the problem of obtaining lower bounds for bounded depth circuits, by proving $2^{\Omega(\sqrt n)}$ lower bounds for an explicit polynomial of degree $n$ in $n^{O(1)}$ variables computed by a homogeneous depth 4 circuit, where the  fan-in of the product gates at the bottom level of the depth 4 circuits is bounded by $\sqrt n$. For ease of exposition, let us denote the class of depth 4 circuits with bottom fanin $\sqrt n$ by $\spsp^{[\sqrt n]}$ circuits. The lower bounds of~\cite{GKKS12} were later improved to $2^{\Omega(\sqrt n\log n)}$ in a follow up work of Kayal, Saha, Saptharishi~\cite{KSS13}. These results were all the more remarkable in the light of the results of Koiran~\cite{koiran} and Tavenas~\cite{Tavenas13} who had in fact showed that $2^{\omega(\sqrt n\log n)}$ lower bounds even for homogeneous $\spsp^{[\sqrt n]}$ circuits would suffice to separate $\VP$ from $\VNP$. Thus, any asymptotic improvement in the exponent, in either the upper bound on depth reduction or the lower bound of~\cite{KSS13} would separate $\VNP$ from $\VP$. Both papers~\cite{GKKS12,KSS13} used the notion of   the dimension of {\it shifted partial derivatives} as a complexity measure, a refinement of the Nisan-Wigderson  complexity measure of dimension of partial derivatives. 

The most tantalizing questions left open by these works was to improve either the depth reduction or the lower bounds. In~\cite{FLMS13}, the lower bounds of~\cite{KSS13} were strengthened by showing that they also held for a polynomial in $\VP$. These were further extended in~\cite{KS-formula}, where the same exponential ($n^{\Omega(\sqrt n)}$) lower bounds were also shown to hold for very simple polynomial sized formulas of just depth 4 (if one requires them to be computed by homogeneous $\spsp^{[\sqrt n]}$ circuits). On one hand, these results give us extremely strong lower bounds for an interesting class of depth 4 homogeneous circuits. On the other hand, since these lower bounds also hold for polynomials in $\VP$ and for homogeneous formulas~\cite{FLMS13, KS-formula}, it follows that the depth reduction results of Koiran~\cite{koiran} and Tavenas~\cite{Tavenas13} to the class of homogeneous $\spsp^{[\sqrt n]}$ circuits are tight and cannot be improved even for homoegeneous formulas. 

Although these results represent a lot of exciting progress on the problem of proving lower bounds for homogeneous $\spsp^{[\sqrt n]}$ circuits, and these results seemed possibly to be on the brink of proving lower bounds for general arithmetic circuits, they still seemed to give almost no nontrivial results for general homogeneous depth 4 circuits with no bound on bottom fanin (homogeneous $\spsp$ circuits). Moreover, it was shown in~\cite{KS-formula} that general homogeneous $\spsp$ circuits are exponentially more powerful than homogeneous $\spsp^{[\sqrt n]}$ circuits\footnote{It was demonstrated that even very simple homogeneous $\spsp$ circuits of polynomial size might need $n^{\Omega(\sqrt n)}$ sized homogeneous $\spsp^{[\sqrt n]}$ circuits to compute the same polynomial.}. Till very recently, the only lower bounds we knew for general homogeneous depth 4 circuits were the slightly super-linear  lower bounds by Raz using the notion of elusive functions~\cite{Raz10b} (these worked even for non-homogeneous circuits). 

\paragraph{Lower bounds for general homogeneous depth 4 circuits:} Recently, the first super-polynomial lower bounds for general homogeneous depth 4 ($\spsp$) circuits were proved independently by the authors of this paper~\cite{KS-depth4} who showed a lower bound of $n^{\Omega(\log\log n)}$ for a polynomial in $\VNP$ and Limaye, Saha and Srinivasan~\cite{LSS13}, who showed a lower bound of $n^{\Omega(\log n)}$ for a polynomial in $\VP$. Subsequently, Kayal, Limaye, Saha and Srinivasan greatly improved these lower bounds to obtain exponential ($2^{\Omega(\sqrt{n}\log n)}$) lower bounds for a polynomial in $\VNP$ (over fields of characteristic zero). Notice that this result also extends the results of~\cite{GKKS12} and~\cite{KSS13} who proved similar exponential lower bounds for the more restricted class of homogeneous $\spsp^{[\sqrt n]}$ circuits. The result by~\cite{KLSS14} shows the same lower bound without the restriction of bottom fanin. Again, any asymptotic improvement  of this lower bound in the exponent would separate $\VP$ from $\VNP$. 

This class of results represents an important step forward, since homogeneous depth 4 circuits seem a much more natural class of circuits than homogeneous depth 4 circuits with bounded bottom fanin. The results of the current paper build upon  and strengthen the results of Kayal et al~\cite{KLSS14}. Before we describe our results we first highlight some important questions left open by~\cite{KLSS14} and place them in the context of several of the other recent results in this area.

\begin{itemize}
\item {\bf Dependence on the field:} Several of the major results on depth reduction and lower bounds have  heavily depended on the underlying field  one is working over. In a beautiful result~\cite{GKKS13}, it was shown that if one is working over the field of real numbers, one can get surprising depth reduction of general circuits to just {\it depth 3 circuits}\footnote{albeit with loss of homogeneity.}! Indeed it was shown that any arithmetic circuit over the reals (in particular one computing the determinant) can be reduced to a depth 3 circuit of size $n^{O(\sqrt n)}$. Thus proving $n^{\omega(\sqrt n)}$ lower bounds for depth 3 non-homogeneous circuits over the reals would imply super-polynomial lower bounds for general arithmetic circuits.  We know that such a depth reduction is not possible over small finite fields. Lower bounds of the form $2^{\Omega(n)}$ were shown for  depth 3 (non-homogeneous) circuits over small finite fields (even for the determinant) by Grigoriev and Karpinksi~\cite{GK98} and Grigoriev and Razborov~\cite{GR98}~\footnote{Recently, Chillara and Mukhopadhyay~\cite{CM14} showed $2^{\Omega(n\log n)}$ lower bounds for depth 3 circuits over small finite fields for a polynomial in $\VP$.}. Thus at least for depth 3 circuits, we know that there is a vast difference between the computational power of circuits for different fields. 

The lower bounds of~\cite{KLSS14} work only over fields of characteristic zero. This is because  in order to bound the complexity of the polynomial being computed,  the proof reduces the question to lower bounding the rank of a certain matrix. This computation ends up being highly nontrivial and is done by using bounds on eigenvalues. However a similar analysis does not go through for other fields. In particular it was an open question if working over characteristic zero was {\it necessary} in order to prove the lower bounds. 

\item {\bf Explicitness of the hard polynomial:} The result of~\cite{KLSS14} only proved a lower bound for a polynomial in $\VNP$. It is conceivable/likely that much more should be true, that even polynomials in $\VP$ should not be computable by depth 4 homogeneous circuits. The best lower bound known for homogeneous depth 4 circuits computing a poly in $\VP$ is the lower bound of $n^{\Omega (\log n)}$ by~\cite{LSS13, KLSS14}. Recall that when one introduces the restriction on bounded bottom fanin, then stronger exponential lower bounds  are indeed known~\cite{FLMS13,KS-formula}. This fact is also related to the next bullet point below.
 
\item {\bf Tightness of depth reduction:} The result of~\cite{FLMS13} (which showed an explicit polynomial of degree n in $n^{O(1)}$ variables in $\VP$ requiring an $n^{\Omega(\sqrt n)}$ sized homogeneous $\spsp^{[\sqrt n]}$ to compute it), in particular showed the the depth reduction results of Koiran~\cite{koiran} and Tavenas~\cite{Tavenas13} (showing that every polynomial of degree $n$ in $n^{O(1)}$ variables in $\VP$ can be computed by an  $n^{O(\sqrt n)}$ sized homogeneous $\spsp^{[\sqrt n]}$ circuit) are tight. In~\cite{KSS13} it was shown that the depth reduction results can in fact be improved for the class of regular arithmetic formulas, thus suggesting that it might be improvable for general formulas or at least homogeneous formulas. This was shown to be false in~\cite{KS-formula}, where it was shown that the depth reduction results of Koiran and Tavenas are tight even for homogeneous formulas. In all these cases, when it was shown that depth reduction is tight, it was shown that if one wants to reduce to the class of homogeneous $\spsp^{[\sqrt n]}$ circuits, then one cannot do better. The significance of studying depth reduction to homogeneous  $\spsp^{[\sqrt n]}$ circuits stemmed from the matching strong lower bounds for that class. 

Given the new lower bounds for the more natural class of depth 4 homogeneous circuits (with no restriction on bottom fanin), and especially the exponential lower bounds of~\cite{KLSS14}, the most obvious question that arises is the following: If one relaxes away the requirement of bounded bottom fanin, i.e. all one requires is to reduce to the class of general depth 4 homogeneous circuits, can one improve upon the upper bounds obtained by Koiran and Tavenas? If we could do this over the reals/complex numbers, then given the~\cite{KLSS14} result, this would also suffice in separating $\VP$ from $\VNP$! 

\item {\bf Shifted partial derivatives and variants:} The results of~\cite{KS-depth4,LSS13,KLSS14} all use variants of the method of shifted partial derivates to obtain the lower bounds. All 3 works use different variants and they are all able to give nontrivial results. This suggests that we do not really fully understand the potential of these methods, and perhaps they can be used to give even much stronger lower bounds for richer classes of circuits. Thus it seems extremely worthwhile to develop and understand these methods - to understand how general a class of lower bounds they can prove as well as to understand if there any any limitations to these methods. 

\end{itemize}

\subsection{Our results}
In this paper, we show a lower bound of $2^{\Omega(\sqrt{n}\log n)}$ on the size of homogeneous depth 4 circuits computing a polynomial in $\VP$. Moreover, this result holds over all fields. We use the notion of the dimension of {\it projected shifted partial derivatives} as a measure of complexity of a polynomial. This measure was first used in~\cite{KLSS14}. Our results extend those of~\cite{KLSS14} in two ways - they hold over all fields, and they also hold for a much simpler polynomial that is in $\VP$.

We first give a new, more combinatorial proof of the $2^{\Omega(\sqrt{n}\log n)}$ lower bound for a polynomial in $\VNP$, which holds over all fields. This result is much simpler to prove than our result for a polynomial in $\VP$ and thus we prove it first. This will also enable us to develop methods and tools for the more intricate analysis of the lower bounds for $\VP$.

\begin{thm}~\label{thm:mainthmVNP}
Let $\F$ be any field. There exists an explicit family of polynomials (over $\F$) of degree $n$ and in $N = n^{O(1)}$ variables in $\VNP$, such that any homogeneous $\spsp$ circuit computing it has size at least $n^{\Omega(\sqrt{n})}$.
\end{thm}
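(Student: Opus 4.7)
The plan is to follow the now-standard three-step template for this class of lower bounds: choose an explicit hard polynomial in $\VNP$, define a complexity measure, and separately upper bound the measure on anything computable by a small homogeneous $\spsp$ circuit while lower bounding the measure on the hard polynomial. As the complexity measure I would use (some variant of) the \emph{projected shifted partial derivatives} measure introduced in~\cite{KLSS14}: roughly, the dimension of the span of $\pi\bigl(x^{\le \ell} \cdot \partial^{=k} f\bigr)$, where $\pi$ is a linear projection that kills monomials containing ``too many'' occurrences of designated variables. For the hard polynomial I would take a Nisan--Wigderson style design polynomial: fix a prime power $q$, integers $d$ and $k$, set variables $\{x_{i,a} : i \in [d], a \in \F_q\}$, and let
$$NW_{q,d,k}(\bx) \;=\; \sum_{\substack{f \in \F_q[y] \\ \deg f < k}} \prod_{i=1}^d x_{i, f(i)},$$
with $d = \Theta(n)$ and $k = \Theta(\sqrt n)$, so that the polynomial has degree $n$ in $n^{O(1)}$ variables, lies in $\VNP$, and any two of its monomials share at most $k-1$ variables.

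For the upper bound on a homogeneous $\spsp$ circuit $C = \sum_i \prod_j Q_{ij}$ of size $s$, the plan is to use the random-restriction + shifted-partial-derivatives machinery of~\cite{GKKS12,KSS13,KLSS14}. The idea is to classify each bottom product $Q_{ij}$ according to whether its degree is ``small'' or ``large'', and then apply a carefully chosen random restriction that with high probability sets every large-degree $Q_{ij}$ to a low-rank/low-degree object. Surviving summands then effectively look like bounded-bottom-fanin $\spsp$ terms, and a direct calculation upper bounds the projected shifted partial derivative dimension of each such term; summing over the $s$ summands gives the total upper bound, which should evaluate to roughly $s \cdot n^{O(\sqrt n)}$ times the dimension of the ambient shift space.

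For the lower bound on the measure at $NW_{q,d,k}$, the goal is to produce $n^{\Omega(\sqrt n)}$ projected-shifted-partials that are linearly independent over any $\F$. The key new ingredient — and the reason the argument goes through in positive characteristic — is to replace the matrix-rank/eigenvalue computation of~\cite{KLSS14} by a purely combinatorial \emph{leading-monomial} argument. Fix a monomial order. A $k$-th order partial $\partial^S NW_{q,d,k}$ is a sum over those low-degree $f$ whose graph passes through $S$; because any two monomials of $NW$ agree in at most $k-1$ coordinates, each nonzero $\partial^S NW$ has a well-defined leading monomial arising from a \emph{unique} $f$. Choosing $\pi$ so that these leading monomials survive and checking that multiplication by distinct shift monomials preserves distinctness of leading monomials, one obtains pairwise distinct leading monomials for all pairs (derivative, shift) in a large combinatorial family. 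Linear independence is then immediate over any field, and a direct count yields the required $n^{\Omega(\sqrt n)}$ dimension.

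The main obstacle I anticipate is coordinating the projection $\pi$ and the parameters $q, d, k$ so that the upper and lower bounds are simultaneously tight. The upper bound wants $\pi$ aggressive enough that large-bottom-degree terms collapse; the lower bound wants $\pi$ gentle enough that a positive fraction of each $NW$-derivative's leading monomial survives. Making the leading-monomial combinatorics robust to $\pi$ and to the shifts by $x^{\le \ell}$ — in particular showing that the leading monomials of shifted projected partials remain pairwise distinct across an exponentially large collection — is where the delicate counting sits. Once this is done, comparing the upper bound $s \cdot n^{O(\sqrt n)}$ against the lower bound $n^{\Omega(\sqrt n)}$ with a gap in the exponent yields $s \geq n^{\Omega(\sqrt n)}$, as claimed.
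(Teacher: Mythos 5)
Your high-level template matches the paper's: the $NW$ design polynomial, the projected shifted partial derivatives measure, random restrictions to kill the circuit's problematic bottom gates, and a leading-monomial count to make the lower bound field-independent. But two of your steps, as described, do not work. First, a smaller point: the random restriction does not collapse bottom gates of large \emph{degree}; it kills (with high probability) every bottom monomial of large \emph{support}, and what survives is a $\spsp^{\{s\}}$ circuit (bounded bottom support, $s=\sqrt n$), not a bounded-bottom-fanin one. The projection must then be taken to be exactly the multilinear projection $\sigma$: it is what lets the shifted partials of a product of low-support (but possibly high-degree) factors live in a space spanned by $\binom{N}{m+rs}$ multilinear monomials. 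A generic ``kill monomials with too many designated variables'' projection does not obviously give this.

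The more serious gap is in your lower bound. You claim that, since any two monomials of $NW$ agree in at most $k-1$ coordinates, each derivative has a well-defined leading monomial and ``all pairs (derivative, shift)'' yield pairwise distinct leading monomials. This is false as stated (for a fixed $\alpha$, two shifts $\gamma_1\neq\gamma_2$ can satisfy $\gamma_1\beta_1=\gamma_2\beta_2$ for distinct $\beta_1,\beta_2\in\text{Supp}(\partial_\alpha NW)$, and similarly across derivatives), and even if you only credit one leading monomial per derivative, the resulting count $|{\cal M}|\cdot\binom{N-k}{m}$ loses a factor of roughly $2^n$ against the circuit upper bound $\text{Size}(C)\binom{2n/s+r}{r}\binom{N}{m+rs}$ and yields nothing. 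What the paper actually does is (i) take $D=\Theta(n)$, not $\Theta(\sqrt n)$, so that each derivative has $n^{\Theta(n)}$ monomials, (ii) apply the random restriction \emph{to the polynomial as well}, pruning each derivative's support to about $n^{2d}\approx 2^{n}$ surviving monomials, and (iii) lower bound the number of distinct leading monomials by an inclusion--exclusion $T_1-T_2-T_3$, where $T_2,T_3$ count collisions between shifted extensions of different surviving monomials; one shows $\E[T_2],\E[T_3]\le\tfrac14\E[T_1]$ for the tuned parameters, and then a separate concentration step (using the variance bound of~\cite{KLSS14} together with a deterministic upper bound $\Phi\le|\text{Supp}(NW|_V)|\binom{N-k}{m}$) converts the expectation bound into a constant-probability bound. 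This restriction-plus-inclusion-exclusion machinery is the technical heart of the proof and is entirely absent from your outline; the ``delicate counting'' you defer is not a matter of choosing $\pi$ carefully but of carrying out these expectation and collision estimates.
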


The lower bound in Theorem~\ref{thm:mainthmVNP} is shown for a family of polynomials (denoted by $NW_{n,D}$) whose construction is based on the idea of Nisan-Wigderson designs . These are the same polynomials for which~\cite{KLSS14} show their lower bounds. We give a formal definition in Section~\ref{sec:prelims}. The main difference in our proof of the above result from the proof in~\cite{KLSS14} is that our proof of the lower bound on the complexity of the polynomial is completely combinatorial, while the proof in~\cite{KLSS14}, used matrix analysis that works only over fields of characteristic zero. The combinatorial nature of our proof allows us to prove our results over all fields. The combinatorial nature of the proof also gives us much more flexibility and this is what enables the proof of our lower bounds for a polynomial in $\VP$.  Though our lower bound for the polynomial in $\VP$ is at a high level similar to the $\VNP$ lower bound, the analysis is much more delicate and the choice of parameters ends up being quite subtle. We will elaborate more on this in the proof outline given in Section~\ref{sec:outline}.

\begin{thm}[Main Theorem]~\label{thm:mainthmVP}
Let $\F$ be any field. There exists an explicit family of polynomials (over $\F$) of degree $n$ and in $N = n^{O(1)}$ variables in $\VP$, such that any homogeneous $\spsp$ circuit computing it has size at least $n^{\Omega(\sqrt{n})}$.
\end{thm}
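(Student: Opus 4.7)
The plan is to transport the combinatorial argument underlying Theorem~\ref{thm:mainthmVNP} from $NW_{n,D}$ (which lies in $\VNP$) to an explicit polynomial in $\VP$. The natural candidate, consistent with the abstract, is the iterated matrix multiplication polynomial $\imm$: the $(1,1)$ entry of a product of $n$ generic $\tilde n\times\tilde n$ matrices with $\tilde n=n^{O(1)}$. This polynomial visibly lies in $\VP$ (it is the output of a small algebraic branching program), has degree $n$, and is defined on $N=n^{O(1)}$ variables. The complexity measure is again the dimension of projected shifted partial derivatives (PSPD), and the overall template is the familiar two-step schema: upper bound the PSPD of any homogeneous $\spsp$ circuit of size $s$, and lower bound the PSPD of the hard polynomial.

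The circuit upper bound would go through essentially as in Theorem~\ref{thm:mainthmVNP}. One applies a random restriction (keeping each variable with some small probability) that with high probability kills every bottom product gate of fanin exceeding $\sqrt n$, since a surviving monomial needs every one of its variables to survive. What remains is effectively a homogeneous $\spsp^{[\sqrt n]}$ circuit of size at most $s$, whose PSPD dimension is bounded by the standard estimate of the form $s\cdot \binom{N+\ell}{\ell}\cdot (\text{lower-order factor})$, coming from the fact that each product of $\sqrt n$ low-degree affine forms has few surviving shifted partials.

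The heart of the argument is the PSPD lower bound for $\imm$ after the same random restriction. I would try to exhibit inside $\imm$ a sub-polynomial that, under a suitable variable projection, becomes a Nisan--Wigderson-like design polynomial to which the combinatorial argument of Theorem~\ref{thm:mainthmVNP} applies. The monomials of $\imm$ are in bijection with directed paths through a layered graph; by taking the matrix dimension $\tilde n$ polynomially large one has enough vertices per layer to embed a Nisan--Wigderson design in the path index set. Setting the variables outside this design to zero produces, as a projection of $\imm$, a polynomial whose support has the design structure that drives the combinatorial rank argument for $NW_{n,D}$. One then proves that a large collection of order-$k$ partial derivatives, after shifting by all degree-$\ell$ monomials and applying the projection $\sigma$ (which discards non-multilinear or otherwise ``bad'' monomials), remain linearly independent. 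As in Theorem~\ref{thm:mainthmVNP}, this reduces to a purely combinatorial counting step about how many designs survive the restriction and how their shifted derivatives interact.

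The hardest part, and what the authors flag as subtle, is the simultaneous calibration of parameters. In the $\VNP$ case the design can be chosen freely and taken exponentially large. For $\imm$ the embedded design must be realisable among the paths, must be dense enough that a random restriction leaves a super-polynomial number of designs fully alive, and must be sparse enough that the shifted partial derivatives in question remain distinct after projection. Balancing $\tilde n$, the derivative order $k$, the shift length $\ell$, the design parameters, and the restriction probability to yield $n^{\Omega(\sqrt n)}$, rather than the weaker $n^{\Omega(\log n)}$ bound of~\cite{LSS13, KLSS14}, is the delicate step: $\ell$ must be large enough to give the $\binom{N+\ell}{\ell}$ slack over the circuit upper bound, yet small enough that the lower bound on the surviving design contribution still beats the upper bound by the desired $n^{\Omega(\sqrt n)}$ factor. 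I expect the construction of the exact $\VP$ polynomial (a carefully chosen restriction of $\imm$, or a small algebraic branching program built to have the NW design structure baked in) and the verification that the projected rank argument still works in this more rigid setting to be the principal technical obstacle.
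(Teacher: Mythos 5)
Your high-level plan matches the paper's: the hard polynomial is indeed $\imm$, the measure is projected shifted partial derivatives, the circuit upper bound after random restriction is essentially unchanged from the $NW$ case, and a Reed--Solomon-style design is used on the path index set to pick the monomials with respect to which one differentiates. But your proposal has two genuine gaps, and they are not just ``parameter calibration.''

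First, the random restriction you describe (keep each variable i.i.d.\ with some probability $p$) does not work for $\imm$. The paper instead builds $\imm^{\ast}$ from blocks of matrices with three distinct roles (special $Y$ matrices, regular $X$ matrices, all-ones $J$ matrices) and restricts each type differently: a special matrix keeps $\tilde n^{3/4}$ variables only in its first row; the first regular matrix in a block keeps $n^{\eta}$ variables per row for a carefully chosen $\eta$; the last $2\log n$ regular matrices keep one variable per row; the rest keep two per row. This highly asymmetric restriction is what makes the degree-sequence identity $\prod_j \deg(X^{(i,j)}|_V) = D^{k'}$ (with $D=N/(N-m)$) hold, which is the engine driving the whole counting. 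An i.i.d.\ restriction gives you no such control over the layered-graph structure. Also, the design (the set $\T(\immv)$) is chosen \emph{after} the restriction is sampled, inside the surviving first-row entries of the $Y$ matrices; ``setting variables outside an embedded design to zero'' up front is not what happens.

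Second, and more fundamentally, you write that the rank lower bound ``reduces to a purely combinatorial counting step'' as in the $NW$ case. For $\imm$ this is false as stated, and the paper says so explicitly. In the $NW$ argument one shows $\E[T_1]\gg \E[T_2]+\E[T_3]$ and concludes via $\Phi \ge T_1 - T_2 - T_3$. For $\imm$, the expectations of $T_2$ and $T_3$ are in fact \emph{exponentially larger} than that of $T_1$, so the direct inclusion--exclusion bound gives nothing. The paper needs a genuinely new ingredient: a ``strong inclusion--exclusion'' lemma (proved by a sampling/deletion argument: if $\sum_{i\neq j}|W_i\cap W_j|\le \lambda\sum_i|W_i|$ then $|\bigcup_i W_i|\ge \frac{1}{4\lambda}\sum_i|W_i|$), applied twice, together with a Markov-type argument producing a large ``good'' subset $\mathcal G_V\subseteq\T(\immv)$ on which the second-order terms are controlled per-derivative. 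Without this, your proposed proof stalls exactly at the step you label as a ``counting step.'' Flagging parameter balancing as the obstacle is correct in spirit, but the obstacle is not a numerical tuning issue; it requires replacing the inclusion--exclusion framework with the sampling-based one.
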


As an immediate corollary of the result above, we conclude that the depth reduction results of Koiran~\cite{koiran} and Tavenas~\cite{Tavenas13} are tight even when one wants to depth reduce to the class of general homogeneous depth 4 circuits. 

\begin{cor}[Depth reduction is tight]
There exists a polynomial in $\VP$ of degree $n$ in $N = n^{O(1)}$ variables such that any homogeneous $\spsp$ circuit computing it has size at least $n^{\Omega(\sqrt{n})}$. In other words, the upper bound in the depth reduction of Tavenas~\cite{Tavenas13} is tight, even when the bottom fan-in is unbounded.
\end{cor}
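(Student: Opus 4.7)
The plan is to derive the corollary as an immediate consequence of Theorem~\ref{thm:mainthmVP} together with the known upper bound from Tavenas' depth reduction. First I would simply invoke Theorem~\ref{thm:mainthmVP} to exhibit an explicit family of polynomials $\{P_n\}$ in $\VP$, of degree $n$ in $N = n^{O(1)}$ variables, such that any homogeneous $\spsp$ circuit computing $P_n$ has size at least $n^{\Omega(\sqrt n)}$. This already proves the first sentence of the corollary verbatim; no further work is needed for the existence statement.

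The second sentence is interpretive and follows by juxtaposition with the depth reduction theorem of Tavenas~\cite{Tavenas13}, which states that every polynomial of degree $n$ in $n^{O(1)}$ variables computed by a polynomial-size arithmetic circuit (in particular, every polynomial in $\VP$) can be computed by a homogeneous $\spsp$ circuit of size $n^{O(\sqrt n)}$. Applying this upper bound to the family $\{P_n\}$ gives a matching $n^{O(\sqrt n)}$ upper bound, and pairing it with the $n^{\Omega(\sqrt n)}$ lower bound from Theorem~\ref{thm:mainthmVP} shows that the exponent $\sqrt n$ in Tavenas' theorem cannot be improved, even in the absence of a bound on the bottom fan-in.

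Since the corollary is a direct combination of two already-stated results, there is essentially no obstacle; the only subtlety worth flagging is that Tavenas' upper bound as originally formulated produces a homogeneous $\spsp^{[\sqrt n]}$ circuit (bounded bottom fan-in), which is a subclass of general homogeneous $\spsp$ circuits, so his upper bound applies a fortiori to the less restricted model used in Theorem~\ref{thm:mainthmVP}. This ensures the comparison is well-defined and the tightness conclusion is genuinely about the broader class of homogeneous depth 4 circuits. Thus the proof reduces to a one-line appeal to Theorem~\ref{thm:mainthmVP} accompanied by a remark about the matching depth reduction.
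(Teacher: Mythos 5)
Your proposal is correct and matches the paper's treatment exactly: the corollary is stated there as an immediate consequence of Theorem~\ref{thm:mainthmVP}, with tightness following from Tavenas' $n^{O(\sqrt n)}$ depth reduction upper bound. Your added remark that Tavenas' construction yields a $\spsp^{[\sqrt n]}$ circuit (and hence a fortiori a general homogeneous $\spsp$ circuit) is a correct and worthwhile clarification, though the paper leaves it implicit.
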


The polynomial in Theorem~\ref{thm:mainthmVP} is the {\it Iterated Matrix Multiplication} ($\imm$) polynomial. From the fact that the determinant polynomial is complete for the class $\VQP$~\cite{Valiant79}, we obtain the first exponential lower bounds for the polynomial $Det_n$ (which is the determinant of an $n\times n$ generic matrix) computed by a homogeneous $\spsp$ circuit. 
\begin{cor}
There exists a constant $\epsilon > 0$ such that any homogeneous $\spsp$ circuit computing the polynomial $Det_n$ has size at least $2^{\Omega(n^{\epsilon})}$.
\end{cor}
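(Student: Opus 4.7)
The plan is to combine Theorem~\ref{thm:mainthmVP} with a homogeneity-preserving variant of Valiant's classical reduction from algebraic branching programs to the determinant. Since the $\imm_{\tilde n,n}$ polynomial from Theorem~\ref{thm:mainthmVP} lies in $\VP$ and is in fact computable by an algebraic branching program (ABP) of polynomial size, it admits a polynomial-size projection to $Det_M$ for $M=n^c$ with $c$ a fixed constant. The crucial point is that this projection can be arranged so as not to break the homogeneity of any depth-$4$ circuit on top of it.

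Concretely, I would first use a mildly modified Mahajan-Vinay construction: starting from the ABP for $\imm_{\tilde n,n}$, produce an $M\times M$ matrix each of whose entries is either a variable of $\imm_{\tilde n,n}$, the constant $0$, or a fresh auxiliary variable $z$ (used in place of the constant $1$ that appears in the usual reduction), with the property that the determinant of this matrix equals $z^{M-n}\cdot \imm_{\tilde n,n}(y)$. Because each variable of $Det_M$ is replaced only by a variable or by $0$ (never by a non-zero scalar), this substitution preserves homogeneity: a homogeneous $\spsp$ circuit of size $S$ for $Det_M$ yields, after substitution, a homogeneous $\spsp$ circuit of size at most $S$ for the padded polynomial $z^{M-n}\cdot \imm_{\tilde n,n}(y)$.

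Next, I would argue that $z^{M-n}\cdot \imm_{\tilde n,n}$ still requires homogeneous $\spsp$ circuits of size $n^{\Omega(\sqrt n)}$, matching Theorem~\ref{thm:mainthmVP}. The intuition is that multiplying by a monomial in a fresh variable $z$ cannot truly decrease the complexity, since the operation is reversible via $(M-n)$-fold differentiation in $z$ (up to a non-zero scalar factor); more formally, the projected-shifted-partial-derivative measure underlying the proof of Theorem~\ref{thm:mainthmVP} is essentially invariant under this padding, provided one takes derivatives only with respect to the $y$-variables and chooses shift monomials over the $y$-variables only. This yields $S \geq n^{\Omega(\sqrt n)}$, and substituting $n = M^{1/c}$ rewrites the bound as $S\geq 2^{\Omega(M^{\epsilon})}$ with $\epsilon = 1/(2c)$, which (after renaming $M$ as $n$ in the corollary's statement) is exactly what we want.

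The main technical obstacle is the second step: verifying that the specific projected-shifted-partial-derivative lower bound used for $\imm_{\tilde n,n}$ carries over, with at most polynomial loss, to $z^{M-n}\cdot \imm_{\tilde n,n}$. An alternative but more delicate route would be to extract $\imm_{\tilde n,n}$ from the padded circuit by taking the coefficient of $z^{M-n}$ (decomposing each top-level factor by $z$-degree and summing over tuples of $z$-degrees that add to $M-n$); however, the naive tuple count can be exponential in the product fan-in, so this route requires bounding the relevant tuples carefully, and for that reason the measure-based approach is cleaner. Either way, establishing this padding invariance (in whichever form) is where the final value of $\epsilon$ is pinned down, and it is the one nontrivial step one must check in translating Theorem~\ref{thm:mainthmVP} into a lower bound for $Det_n$.
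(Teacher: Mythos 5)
Your overall strategy---a homogeneity-preserving determinantal representation of $\imm$, followed by the projected-shifted-partial-derivative machinery---is plausibly what the authors have in mind (they give essentially no detail beyond invoking the $\VQP$-completeness of $Det$ and reporting the value $\epsilon>1/22$). But the step you flag as the one thing to check genuinely fails as you have stated it, and the failure is structural rather than a matter of constants. A repairable point first: the multilinear projection $\sigma$ in the paper's measure annihilates every non-multilinear monomial, and $z^{M-n}$ with $M-n\ge 2$ is not multilinear, so $\Phi_{{\cal M},m}(z^{M-n}\cdot\imm)$ as defined is identically zero. One has to either pad with a product of \emph{distinct} fresh variables $z_1\cdots z_{M-n}$ or redefine $\sigma$ to project to multilinearity only in the $y$-variables; either is a real modification of the measure that must also be carried through the circuit upper bound, and your write-up never notices the problem.

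The more serious problem is that the loss in the circuit upper bound is nowhere near polynomial. Lemma~\ref{lem:lowsupbound1} bounds $\Phi_{{\cal M},m}(C)$ by $\text{Size}(C)\cdot\binom{\lceil 2u/s\rceil + r}{r}\binom{N}{m+rs}$, where $u$ is the \emph{degree} of the homogeneous circuit. For $\imm$ itself $u=n$, and with the paper's parameters ($r'=32\sqrt n$, $s=\sqrt n/64$) the factor $\binom{\lceil 2n/s\rceil+r'}{r'}$ is only $2^{O(\sqrt n)}$, harmless against the lower bound $n^{\Omega(\sqrt n)}$ (whose exponent constant the paper's calculation pins at roughly $1/8$). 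After substituting into a homogeneous circuit for $Det_M$, the circuit computes a degree-$M$ polynomial with $M=n^{c}$ and $c\ge 2$, and the same binomial becomes roughly $(M/n)^{r'}=n^{32(c-1)\sqrt n}$, whose exponent constant $32(c-1)$ completely swamps the $\approx 1/8$ of the lower bound; the resulting size bound is less than $1$. This cannot be escaped by observing that only $y$-bearing factors need to be differentiated: after the factor-merging step in the proof of Lemma~\ref{lem:lowsupbound1}, a homogeneous product gate of degree $M$ can still have $\Theta(M/s)$ factors \emph{every one} of which carries $y$-content (consider a top product gate equal to $(y+z)^M$, merged into degree-$\Theta(s)$ chunks), so $\binom{l+r'}{r'}$ with $l=\Theta(M/s)$ is the honest count. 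To rescue the argument you would need a determinantal representation with much lower degree after $z$-padding, or a structural claim controlling the number of $y$-bearing factors in the substituted and restricted circuit, or an upper-bound lemma in which the padded degree does not appear. None of these is routine, and the proposal as written does not supply any of them.
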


We have not optimized the value of $\epsilon$ in the statement above, but our proof gives a value of $\epsilon >1/22$. 

\subsection{Organisation of the paper}

In Section~\ref{sec:outline}, we provide a broad overview of the proofs of Theorem~\ref{thm:mainthmVNP}
and Theorem~\ref{thm:mainthmVP}. In Section~\ref{sec:prelims}, we define some preliminary notions and set up some notations used in the rest of the paper.  We prove an upper bound on the dimension of the projected shifted partial derivatives of a homogeneous depth 4 circuit of bounded bottom support in Section~\ref{sec:circuitub}. We lay down our strategy for obtaining a lower bound on the complexity of the polynomials of interest in Section~\ref{sec:strat}.  Finally in Sections~\ref{sec:nw} and~\ref{sec:NWcalc}, we prove Theorem~\ref{thm:mainthmVNP} and in Sections~\ref{sec:imm} and~\ref{sec:immcal}, we prove Theorem~\ref{thm:mainthmVP}. We conclude with some open problems in Section~\ref{sec:openprobs}.


\section{Proof Overview}\label{sec:outline}
 Let $C$ be a homogeneous $\spsp$ circuit computing the polynomial $P$ (either $NW_{n,D}$ or $\imm$). The broad outline of the proof of lower bound on the size of $C$ is as follows.  
\begin{enumerate}

\item If $C$ is { \it large} ($\geq n^{\epsilon \sqrt n}$) to start with, we have nothing to prove. Else, the size of $C$ is small ($< n^{\epsilon \sqrt n}$).
\item  We choose a random subset $V$ of the variables from some carefully defined distribution $\cal D$, and then restrict $P$ and $C$ to be the resulting polynomial and circuit after setting the variables not in $V$ to zero. We will let $C|_V$ and $P|_V$ be the resulting circuit and polynomial. Since $C$ computed $P$, thus $C|_V$ still computes $P|_V$. This choice of distribution $\cal D$ has to be very carefully designed in order to enable the rest of the proof to go through. When $P = NW_{n,D}$, $V$ will be a  random subset of variables which is chosen by picking each variable independently with a certain probability. In the case that $P = \imm$, our distribution is much more carefully designed.
\item We show that with a very high probability over the choice of $V \gets \cal D$, no product gate in the bottom level of $C|_V$ has  large support. Thus $C|_V$ is a homogeneous $\spsp^{\{ \sqrt n\}}$ circuit (this is the class of $\spsp$ circuits where every product gate at the bottom layer has only $\sqrt n$ distinct variables feeding into it, and we formally define this class in Section~\ref{sec:prelims}). 
\item For any homogeneous $\spsp^{\{ \sqrt n\}}$ circuit, we obtain a good estimate on the upper bound on its complexity $\Phi_{{\cal M},m}(C|_V)$ (this is the complexity measure of projected shifted partial derivatives that we use, and we define it formally in Section~\ref{sec:prelims}) in terms of its size. This step is very similar to that in ~\cite{KLSS14}, and is fairly straightforward. 

\item We show that with a reasonably high probability over $V \gets \cal D$, the complexity of $P|_V$ remains large. This step is the most technical and novel part of the proof. Unlike the proof of the earlier exponential bound by~\cite{KLSS14}, our proof is completely combinatorial. We lower bound the complexity measure $\Phi_{{\cal M},m} (P|_V)$ by counting the number of distinct {\it leading monomials} that can arise after differentiating, shifting and projecting. This calculation turns out to be quite challenging. We first define three related quantities $T_1$, $T_2$ and $T_3$ and show that $T_1 -T_2-T_3$ is a lower bound on $\Phi_{{\cal M},m} (P|_V)$. We elaborate on what these quantities are in Section~\ref{sec:strat}. These quantities are easier to compute when $P = NW_{n,D}$, and we are able to show that $\E_{V\gets \cal D }[T_1-T_2-T_3 ]$ is large. Using variance bounds then lets us conclude that $\Phi_{{\cal M},m} (P|_V)$ is large with high probability.  When $P = \imm$ however, all we are able to show is that $T_2 + T_3$ is not too much larger than $T_1$ in expected value (it will still be exponentially larger). We then use some sampling arguments to handle this and deduce anyway that  $\Phi_{{\cal M},m} (P|_V)$ is large. We elaborate more on this step in Section~\ref{sec:immstrat} and give formal proofs in Sections~\ref{sec:imm} and~\ref{sec:immcal}. In this step of the proof, the choice of the distribution $\cal D$ turns out to be extremely crucial, and we need to construct it quite carefully. We describe the distribution in Section~\ref{sec:imm}. 

\item Then, we argue that both the events in the above two items happen simultaneously with non-zero probability. Now, comparing the complexities $P|_V$ and $C|_V$, we deduce that the size of $C|_V$ and hence $C$ must be large. 
\end{enumerate}

At a high level, the proof uses several ingredients from~\cite{KS-depth4} and~\cite{KLSS14}. We now highlight the differences between our proof and the proof in each of these.

\vspace{2mm}
\noindent
{\bf Comparison to~\cite{KS-depth4} } The random restriction procedure and the complexity measure in~\cite{KS-depth4} is different from the one we use in this work. However the high level strategy of lower bounding the complexity of the polynomial by counting the number of distinct leading monomials that can arise is the same. In this paper these calculations use much more sophisticated arguments. 

\vspace{2mm}
\noindent
{\bf Comparison to~\cite{KLSS14} } Although the complexity measure and the random restrictions in this paper are the same as the one used in~\cite{KLSS14}, the proofs are different in a key aspect. Kayal et al prove a lower bound on the complexity of the polynomial by using a lemma in real matrix analysis to transform the problem into that of bounding traces of some matrices. This  transformation does not work over all fields. In this paper, we lower bound the complexity of the polynomial using a purely combinatorial argument that counts the number of distinct {\it leading monomials} that can arise. Hence our proof works over all fields. Although it is hard to say that one of these proofs is simpler than the other (our calculations of the number of distinct leading monomials is fairly nontrivial), we remark that our proof is based on a set of more elementary combinatorial ideas, and the techniques seem to be more flexible (and this is what allowed us to prove the more explicit lower bounds for a polynomial in $\VP$).


\section{Preliminaries}\label{sec:prelims}
\noindent
{\bf Arithmetic Circuits: } An arithmetic circuit over a field $\F$ and a set of variables 
$x_1, x_2, \ldots, x_{N}$ is a directed acyclic graph with internal nodes labelled by the field operations and the leaf nodes  labelled by input variables or field elements. By the {\it size} of the circuit, we mean the total number of nodes in the underlying graph and by the {\it depth} of the circuit, we mean the length of the longest path from the output node to a leaf node.  A circuit is said to be {\it homogeneous} if the polynomial computed at every node is a homogeneous polynomial. By a $\spsp$ circuit or a depth 4 circuit, we mean a circuit of depth 4 with the top layer and the third layer only have sum gates and the second and the bottom layer have only product gates. A homogeneous polynomial $P$ of degree $n$ in $N$ variables, which is computed by a homogeneous $\spsp$ circuit can be written as 

\begin{equation}\label{def:model}
P(x_1, x_2, \ldots, x_{N}) = \sum_{i=1}^{T}\prod_{j=1}^{d_i}{Q_{i,j}(x_1, x_2, \ldots, x_{N})}
\end{equation}

Here, $T$ is the top fan-in of the circuit. Since the circuit is homogeneous, therefore, for every $i \in \{1, 2, 3, \ldots, T\}$, $$\sum_{j = i}^{d_i} \text{deg}(Q_{i,j}) = n$$

\vspace{2mm}
\noindent
{\bf Support of a polynomial:}
By the support of a polynomial $P$, denoted by $\text{Supp}(P)$, we mean the set of monomials which have a non zero coefficient in $P$. When we consider this set, we will ignore the information in the coefficients of the monomials and just treat them to be $1$. We will also use the notion of the support of a monomial $\alpha$ defined as the subset of variables which have degree at least $1$ in $\alpha$. We will follow the notation that when we invoke the function $\text{Supp}$ for a monomial, we mean the support in the latter sense. When we invoke it for a polynomial, we mean it in the former sense.

For any monomial $\alpha$ and a set of polynomials ${\cal S}$, we define the set $\alpha\cdot{\cal S} = \{\alpha\beta : \beta \in \{\cal S\}\}$. For two monomials $\alpha$ and $\beta$, we say that $\alpha$ is disjoint from $\beta$ if the supports of $\alpha$ and $\beta$ are disjoint.

\vspace{2mm}
\noindent
{\bf Multilinear projections of a polynomial:}
For any monomial $\alpha$, we define $\sigma(\alpha)$ to be  $\alpha$ if $\alpha$ is multilinear and define it to be $0$ otherwise. The map can be then extended by linearity to all polynomials and sets of polynomials. 

\vspace{2mm}
\noindent
{\bf Homogeneous $\Sigma\Pi\Sigma\Pi^{\{s\}}$ Circuits:} A homogeneous $\spsp$ circuit as in Equation~\ref{def:model}, is said to be a $\Sigma\Pi\Sigma\Pi^{\{s\}}$ circuit if every product gate at the bottom level has support at most $s$ (i.e. each monomial in each $Q_{ij}$ has at most $s$ distinct variables feeding into it).  Observe that there is no restriction on the bottom fan-in except that implied by the restriction of homogeneity. 

\vspace{2mm}
\noindent
{\bf Restriction of homogeneous $\Sigma\Pi\Sigma\Pi$ circuit $C|_V$:} For a homoegeneous $\Sigma\Pi\Sigma\Pi^{\{s\}}$ circuit $C$ in variables $v_1, v_2, \ldots, v_N$, and a subset of variables $V \subset \{v_1, v_2, \ldots, v_N\}$, we define $C|_V$ to be the new homogeneous $\Sigma\Pi\Sigma\Pi$ circuit obtained after setting the variables outside $V$ to zero. Equivalently we can think of this as the circuit obtained after removing all multiplication gates at the bottom layer which have a variable not in $V$ that feeds into it. 

\vspace{2mm}
\noindent
{\bf The complexity measure:}


 The notion of {\it shifted partial derivatives} was introduced in~\cite{Kayal12} and was subsequently used as a complexity measure in proving several recent lower bound results~\cite{FLMS13, GKKS12, KSS13, KS-depth4, KS-formula}. In this paper, we  use a variant of the method which first introduced in~\cite{KLSS14}.

For a polynomial $P$ and a monomial $\gamma$, we denote by ${\partial_{\gamma} (P)}$ the partial derivative of $P$ with respect to $\gamma$. For every polynomial $P$ and a set of monomials ${\cal M}$, we define $\partial_{\cal M} (P)$ to be the set of partial derivatives of $P$ with respect to monomials in ${\cal M}$. We now define the space of $({\cal M}, m)\mhyphen$projected shifted partial derivatives of a polynomial $P$ below. 
\begin{define}[$({\cal M}, m)\mhyphen$projected shifted partial derivatives]\label{def:shiftedderivative}
For an $N$ variate polynomial $P \in {\field{F}}[x_1, x_2, \ldots, x_{N}]$, set of monomials ${\cal M}$ and a positive integer $m\geq 0$, the space of $({\cal M}, m)$-projected shifted partial derivatives of $P$ is defined as
\begin{align}
 \langle \partial_{\cal M} (P)\rangle_{m} \stackrel{def}{=} \field{F}\mhyphen span\{\sigma(\prod_{i\in S}{x_i}\cdot g)  :  g \in \partial_{\cal M} (P), S\subseteq [N], |S| = m\}
\end{align}
\end{define}
In this paper, we carefully choose a set of monomials ${\cal M}$ and a parameter $m$ and use the quantity 
$\Phi_{{\cal M}, m} (P)$ defined as 
$$\Phi_{{\cal M}, m} (P) = \dim( \langle \partial_{\cal M} (P)\rangle_{m})$$
as a measure of complexity of the polynomial $P$. 




We will now elaborate on this definition of the measure in words - we look at the space of $({\cal M}, m)\mhyphen$projected shifted partial derivatives as the space of polynomials obtained at the end of the following steps, starting with the polynomial $P$.
\begin{enumerate}
\item We fix a set of monomials ${\cal M}$ and a parameter $m$.
\item We take partial derivatives of $P$ with every monomial in ${\cal M}$, to obtain the set $\partial_{\cal M}(P)$.
\item We obtain the set of shifted partial derivatives of $P$ by taking the product of every polynomial in $\partial_{\cal M}(P)$ with every monomial of degree $m$. In this paper, we will often be working with restrictions of polynomial $P$ obtained by setting some of the input variables to zero. Even for such restrictions, we  consider product of the derivatives  by all multilinear monomials of degree $m$ over the complete set of input variables $\{x_1, x_2, \ldots, x_N\}$.  
\item Then, we consider each polynomial in the set defined in the item above and project it to the polynomial composed of only the multilinear monomials in its support. The span of this set over $\F$ is defined to be $\langle \partial_{\cal M}(P)\rangle_m$. 
\item We define the complexity of the polynomial $\Phi_{{\cal M}, m}(P)$ to be the dimension of $\langle \partial_{\cal M}(P)\rangle_m$ over $\F$.
\end{enumerate} 
It follows easily from the definitions that the complexity measure is subadditive. We formalize this in the lemma below. 

\begin{lem}[Sub-additivity]~\label{subadditive}
Let  $P$ and $Q$ be any two multivariate polynomials in $\F[x_1, x_2, \ldots, x_{N}]$ any set of monomials. Let ${\cal M}$ be any set of monomials and $m$ be any positive integer. Then, for all scalars $\alpha$ and $\beta$
$$\Phi_{{\cal M}, m} (\alpha\cdot P + \beta\cdot Q) \leq \Phi_{{\cal M}, m} (P) + \Phi_{{\cal M}, m} (Q)$$
\end{lem}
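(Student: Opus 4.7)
The plan is to exploit linearity at every stage of the construction of the space $\langle \partial_{\cal M}(\cdot)\rangle_m$, and then use the standard fact that $\dim(U+W) \leq \dim(U) + \dim(W)$ for subspaces $U,W$ of a common ambient vector space.

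First I would observe that partial differentiation is $\F$-linear: for any monomial $\gamma \in \cal M$ and scalars $\alpha,\beta$, we have $\partial_\gamma(\alpha P + \beta Q) = \alpha \, \partial_\gamma P + \beta \, \partial_\gamma Q$. Hence every element of the set $\partial_{\cal M}(\alpha P + \beta Q)$ is an $\F$-linear combination of one element of $\partial_{\cal M}(P)$ and one element of $\partial_{\cal M}(Q)$. Next, multiplication by a fixed multilinear monomial $\prod_{i\in S} x_i$ with $|S|=m$ is also $\F$-linear, and finally the projection $\sigma$ onto multilinear monomials is $\F$-linear as well (it simply kills any monomial in which some variable appears with degree $\geq 2$ and keeps the rest). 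Composing these three linear operations, for each fixed $\gamma \in \cal M$ and each fixed $S \subseteq [N]$ with $|S|=m$, the generator $\sigma\bigl(\prod_{i\in S} x_i \cdot \partial_\gamma(\alpha P + \beta Q)\bigr)$ equals $\alpha \cdot \sigma\bigl(\prod_{i\in S} x_i \cdot \partial_\gamma P\bigr) + \beta \cdot \sigma\bigl(\prod_{i\in S} x_i \cdot \partial_\gamma Q\bigr)$, which lies in $\langle \partial_{\cal M}(P)\rangle_m + \langle \partial_{\cal M}(Q)\rangle_m$.

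Therefore every generator of $\langle \partial_{\cal M}(\alpha P + \beta Q)\rangle_m$ lies in the subspace $\langle \partial_{\cal M}(P)\rangle_m + \langle \partial_{\cal M}(Q)\rangle_m$, and since the former is defined as the $\F$-span of its generators, we conclude the containment
\[
\langle \partial_{\cal M}(\alpha P + \beta Q)\rangle_m \;\subseteq\; \langle \partial_{\cal M}(P)\rangle_m + \langle \partial_{\cal M}(Q)\rangle_m.
\]
Taking dimensions and applying the elementary bound $\dim(U+W) \leq \dim(U) + \dim(W)$ gives the claimed inequality $\Phi_{{\cal M},m}(\alpha P + \beta Q) \leq \Phi_{{\cal M},m}(P) + \Phi_{{\cal M},m}(Q)$.

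There is no real obstacle here: the lemma is essentially a formal consequence of the linearity of differentiation, monomial multiplication, and the multilinearization map $\sigma$. The only thing worth being careful about is making sure $\sigma$ is genuinely $\F$-linear (it is, because it is defined by extending a projection on monomials by linearity, which is exactly the condition needed), and that all of $\langle \partial_{\cal M}(P)\rangle_m$, $\langle \partial_{\cal M}(Q)\rangle_m$, and $\langle \partial_{\cal M}(\alpha P + \beta Q)\rangle_m$ live inside the same ambient $\F$-vector space of multilinear polynomials in $x_1,\ldots,x_N$, so that their sum makes sense.
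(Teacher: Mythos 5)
Your proof is correct and is exactly the standard argument the paper has in mind: the paper simply asserts that sub-additivity "follows easily from the definitions," and your chain of observations (linearity of $\partial_\gamma$, of monomial multiplication, and of $\sigma$, followed by $\dim(U+W)\leq\dim U+\dim W$) is the routine way to make that explicit. No issues.
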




\vspace{2mm}
\noindent
{\bf $P|_V$ and $\Phi_{{\cal M}, m} (P|_V)$:} For a polynomial $P$ and a subset of its variables $V$, we define $P|_V$ to be the polynomial obtained after setting variables not in $V$ to zero (i.e. removing all monomials containing a variable not in $V$ in its support). When we consider $\Phi_{{\cal M}, m} (P|_V)$, we will be computing the complexity of the new polynomial with respect to the original set of variables, not just the variables in $V$. I.e. we set the variables outside $V$ to zero only in order to compute $P|_V$. Once we get this new polynomial, we do not think of the variables outside $V$ to be set to zero when computing $\Phi_{{\cal M}, m} (P|_V)$.

\vspace{2mm}
\noindent
{\bf Nisan-Wigderson Polynomials:} We will now define the  family of polynomials $NW_{n,D}$ in $\VNP$ which were used for the first time in the context of lower bounds in~\cite{KSS13}. The key motivation for this definition is that over any finite field, any two distinct low degree polynomials do not agree at too many points, and hence we  use this property to construct a polynomial with monomials that have large distance. Let $\F_n$ be a finite field of size $n$\footnote{We are assuming for simplicity that $n$ is a prime power, but the definitions can be easily adapted for when $n$ is not.} and let $F_{n^2}$ be its quadratic extension. For the set of $N = n^3$ variables $\{x_{i,j} : i\in [n], j \in [n^2]\} $ and $D < n$, we define the degree $n$ homogeneous polynomial $NW_{n, D}$ as 

$$NW_{n,D} = \sum_{\substack{f(z) \in \F_{n^2}[z] \\
                        deg(f) \leq D-1}} \prod_{i \in [n]} x_{i,f(i)}$$
                       
From the definition, we can observe the following properties of $NW_{n,D}$. 
\begin{enumerate}
\item The number of monomials in $NW_{n,D}$ is exactly $n^{2D}$. 
\item Each of the monomials in $NW_{n,D}$ is multilinear.
\item Each monomial corresponds to evaluations of a univariate polynomial of degree at most $D-1$ at all points of $\F_n$. Thus, any two distinct monomials agree in at most $D-1$ variables in their support. 
\end{enumerate}

\vspace{2mm}
\noindent
{\bf Iterated Matrix Multiplication: }
Let $M_1, M_2, M_3, \ldots, M_b$ be $b$ generic square matrices, each of dimension $a\times a$. Then, we define the polynomial $IMM_{a,b}$ as the $(1,1)$ entry of the matrix $\prod_{j} M_j$. It is easy to see that this polynomial can be computed by a polynomial sized circuit, and so is in $\VP$. In this paper, we  show that any  homogeneous depth 4 circuit computing $IMM_{a,b}$ has exponential size.

\vspace{2mm}
\noindent
{\bf Monomial Ordering and Distance: }
We will also use the notion of a monomial being an extension of another as defined below. 
\begin{define}
A monomial $\theta$ is said to be an extension of a monomial $\tilde{\theta}$, if $\theta$ divides $\tilde{\theta}$. 
\end{define}

\noindent
We will also consider the following total order on the variables. $x_{i_1, j_1} > x_{i_2, j_2}$ if either $i_1 < i_2$ or $i_1 = i_2$ and $j_1 < j_2$. This total order induces a lexicographic order on the monomials. For a polynomial $P$,  we  use the notation $\lm(P)$ to indicate the leading monomial of $P$ under this monomial ordering.

We will use the following notion of distance between two monomials which was also used in~\cite{CM13}. 
\begin{define}[Monomial distance]
Let $m_1$ and $m_2$ be two monomials over a set of variables. Let $S_1$ and $S_2$ be the multiset of variables in $m_1$ and $m_2$ respectively, then the distance $\Delta(m_1, m_2)$ between $m_1$ and $m_2$ is the min$\{|S_1| - |S_1\cap S_2|, |S_2| - |S_1\cap S_2|\}$ where the cardinalities are the order of the multisets.   
\end{define} 

In this paper, we  invoke this definition only for multilinear monomials of the same degree. In this special case, we have the following crucial observation.

\begin{obs}~\label{obs:multilinear-dist} 
Let $\alpha$ and $\beta$ be two multilinear monomials of the same degree which are at a distance $\Delta$ from each other. If $\text{Supp}(\alpha)$ and $\text{Supp}(\beta)$ are the supports of $\alpha$ and $\beta$ respectively, then $$|\text{Supp}(\alpha)| - |\text{Supp}(\alpha)\cap \text{Supp}(\beta)| =  |\text{Supp}(\beta)| - |\text{Supp}(\alpha)\cap \text{Supp}(\beta)| =  \Delta$$
\end{obs}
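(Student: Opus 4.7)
The plan is to just unpack the definitions of multiset-size, support, and distance in the multilinear case. The key reduction is that when a monomial is multilinear, every variable appears with multiplicity at most one, so its multiset of variables coincides with its support as a set. Hence for the multisets $S_1$ and $S_2$ in the definition of $\Delta(\alpha,\beta)$ we have $|S_1|=|\mathrm{Supp}(\alpha)|$ and $|S_2|=|\mathrm{Supp}(\beta)|$, and moreover $|S_1\cap S_2|$ (as a multiset cardinality) equals $|\mathrm{Supp}(\alpha)\cap \mathrm{Supp}(\beta)|$ (as a set cardinality), since no variable contributes more than one copy on either side.

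Now I would use the hypothesis that $\deg(\alpha)=\deg(\beta)$. For multilinear monomials the degree equals the size of the support, so $|\mathrm{Supp}(\alpha)|=|\mathrm{Supp}(\beta)|$. Plugging this into the two expressions inside the $\min$ in the definition of $\Delta$, namely $|S_1|-|S_1\cap S_2|$ and $|S_2|-|S_1\cap S_2|$, shows they are equal. Therefore the $\min$ is achieved by both, and both equal $\Delta$, which is exactly the claimed identity
\[
|\mathrm{Supp}(\alpha)|-|\mathrm{Supp}(\alpha)\cap\mathrm{Supp}(\beta)|=|\mathrm{Supp}(\beta)|-|\mathrm{Supp}(\alpha)\cap\mathrm{Supp}(\beta)|=\Delta.
\]
There is no real obstacle here; the statement is essentially a sanity check that in the multilinear, equal-degree setting the two asymmetric ``defect'' quantities collapse into the single number $\Delta$. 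The only point one must be careful about is distinguishing multiset cardinality from set cardinality, which is immediate once multilinearity is invoked.
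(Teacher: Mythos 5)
Your proof is correct and is exactly the intended unwinding of the definitions: multilinearity collapses the multisets $S_1,S_2$ to the supports, and equal degree forces the two expressions inside the $\min$ to coincide. The paper states this as an Observation without proof, and your argument is the natural (and essentially unique) one.
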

For any two multilinear monomials $\alpha$ and $\beta$ of equal degree, we say that $\alpha$ and $\beta$ have agreement $t$ if $|\text{Supp}(\alpha) \cap \text{Supp}(\beta)| = t$. When $t = 0$, we say that $\alpha$ and $\beta$ are disjoint.  

\vspace{2mm}
\noindent
{\bf Approximations: } We will repeatedly refer to the following lemma to approximate expressions during our calculations. 

\begin{lem}[\cite{GKKS12}]~\label{lem:approx}
Let $a(n), f(n), g(n) : \Z_{>0}\rightarrow \Z_{>0}$ be integer valued functions such that $(f+g) = o(a)$. Then,
$$\log \frac{(a+f)!}{(a-g)!} = (f+g)\log a \pm O\left( \frac{(f+g)^2}{a}\right)$$
\end{lem}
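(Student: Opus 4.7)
The plan is to expand the factorial ratio as an explicit product of consecutive integers and then use the Taylor expansion of $\log(1+x)$ term by term. Concretely, I would first write
\[
\frac{(a+f)!}{(a-g)!} \;=\; \prod_{k=a-g+1}^{a+f} k,
\]
which is a product of exactly $f+g$ integers, each of which lies in the interval $[a-g+1,\,a+f]$. Taking logarithms turns the product into the sum
\[
\log\frac{(a+f)!}{(a-g)!} \;=\; \sum_{k=a-g+1}^{a+f} \log k
\;=\; \sum_{k=a-g+1}^{a+f}\Bigl(\log a + \log\bigl(1 + \tfrac{k-a}{a}\bigr)\Bigr).
\]
The first piece contributes exactly $(f+g)\log a$, which matches the leading term of the claim, so the whole game reduces to bounding the remainder $\sum_k \log\bigl(1+\tfrac{k-a}{a}\bigr)$.

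For the remainder, I would use the hypothesis $f+g = o(a)$, which guarantees that $|k-a|/a \le (f+g)/a < 1/2$ for every $k$ in the sum (for large enough $n$). On this range the standard estimate $\log(1+x) = x + O(x^2)$ holds with an absolute constant, so
\[
\sum_{k=a-g+1}^{a+f} \log\!\Bigl(1+\tfrac{k-a}{a}\Bigr)
\;=\; \frac{1}{a}\sum_{k=a-g+1}^{a+f}(k-a) \;+\; O\!\left(\frac{1}{a^2}\sum_{k=a-g+1}^{a+f}(k-a)^2\right).
\]
The linear sum telescopes to $\tfrac{1}{2}(f^2+f-g^2+g)$, which is $O((f+g)^2)$, so after dividing by $a$ it contributes $O((f+g)^2/a)$. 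The quadratic sum is bounded by $(f+g)\cdot (f+g)^2 = (f+g)^3$, so after dividing by $a^2$ it contributes $O((f+g)^3/a^2) = o((f+g)^2/a)$ by the hypothesis $(f+g)/a = o(1)$. Adding the two error terms yields the announced bound $O((f+g)^2/a)$.

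There is no real obstacle here; the only thing to be a little careful about is that the error in the Taylor expansion $\log(1+x) = x + O(x^2)$ uses a uniform constant, which is why the assumption $f+g = o(a)$ (bounding $|x|$ away from $1$) is needed. An equivalent route would be to apply Stirling's formula directly to $\log(a+f)! - \log(a-g)!$, but the elementary product expansion above is cleaner and avoids having to track the $\tfrac{1}{2}\log(2\pi a)$ and $1/(12a)$ correction terms in Stirling.
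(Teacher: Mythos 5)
Your proof is correct. The paper does not actually prove this lemma---it is imported verbatim from~\cite{GKKS12}---so there is no in-paper argument to compare against; your elementary expansion of the factorial ratio as a product of $f+g$ consecutive integers, followed by the termwise Taylor expansion of $\log(1+\tfrac{k-a}{a})$ (valid uniformly since $(f+g)/a = o(1)$ keeps each $|x|$ bounded away from $1$), is a complete and standard derivation, and the bookkeeping of the linear and quadratic error sums is accurate.
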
 

In this paper, we  invoke Lemma~\ref{lem:approx} only in situations where $(f+g)^2$ will  be $O(a)$. In this case, the error term will be bounded by an absolute constant. Hence, up to multiplication by constants, $\frac{(a+f)!}{(a-g)!} = a^{(f+g)}$. We will use the symbol $\approx$ to indicate equality up to multiplication by constants.

\vspace{2mm}
\noindent
{\bf Probability lemmas: }
We will now state some lemmas using probability which will be useful to us in the course of the proof. 

\begin{lem}~\label{lem:probab}
Let $X$ be a random variable sampled from a distribution ${\cal R}$ supported on the set $R$. Let $f$ and $g$ be functions from $R$ to the set of positive real numbers, such that the following are true:
\begin{itemize}
\item For each $x \in R$, $f(x) \leq g(x)$
\item $\e_{X \leftarrow {\cal R}}[f(X)] \geq 0.5\cdot \e_{X \leftarrow {\cal R}}[g(X)]$
\item $Pr_{X\leftarrow {\cal R}}[|g(X) - \e_{X \leftarrow {\cal R}}[g(X)]| \geq 0.1\cdot (\e_{X \leftarrow {\cal R}}[g(X)])] \leq 0.01$
\end{itemize}
Then, $$Pr_{X\leftarrow {\cal R}}[f(X) \geq 0.01\cdot (\e_{X \leftarrow {\cal R}}[f(X)])] \geq 0.1$$
\end{lem}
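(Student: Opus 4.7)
The plan is to combine a Markov-type bound on $f$ with the concentration bound on $g$, using the pointwise relation $f \leq g$ to turn boundedness of $g$ into a lower tail bound for $f$. Write $\mu_f = \E[f(X)]$ and $\mu_g = \E[g(X)]$, and introduce the two events $A = \{f(X) < 0.01 \mu_f\}$ (the event we want to rule out) and $B = \{|g(X) - \mu_g| \geq 0.1 \mu_g\}$ (the bad concentration event, satisfying $\Pr[B] \leq 0.01$ by hypothesis~3). Hypothesis~2 combined with $f \leq g$ forces $\mu_g \leq 2 \mu_f$; this is the only way $\mu_g$ enters the final estimate. The goal becomes showing $\Pr[A^c] \geq 0.1$.

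The first ingredient I would establish is that $B$ carries very little of $\E[g]$. Since $g \geq 0$ everywhere and $g \geq 0.9 \mu_g$ on $B^c$,
\[
\E[g \cdot \1_{B^c}] \;\geq\; 0.9 \mu_g \cdot \Pr[B^c] \;\geq\; 0.9 \cdot 0.99 \cdot \mu_g \;=\; 0.891 \mu_g,
\]
so $\E[g \cdot \1_B] \leq 0.109 \mu_g$. Thus almost all of $\E[g]$ lives on $B^c$, where additionally $g \leq 1.1 \mu_g$.

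The main step is then a short chain. By definition of $A$, $\E[f \cdot \1_A] \leq 0.01 \mu_f$, so $\E[f \cdot \1_{A^c}] \geq 0.99 \mu_f$. Using $f \leq g$ pointwise and splitting along $B$,
\[
0.99 \mu_f \;\leq\; \E[g \cdot \1_{A^c}] \;\leq\; 1.1 \mu_g \cdot \Pr[A^c \cap B^c] \,+\, \E[g \cdot \1_B] \;\leq\; 1.1 \mu_g \cdot \Pr[A^c] \,+\, 0.109 \mu_g.
\]
Substituting $\mu_g \leq 2 \mu_f$ yields $0.99 \mu_f \leq 2.2 \mu_f \Pr[A^c] + 0.218 \mu_f$, i.e.\ $\Pr[A^c] \geq 0.772/2.2 > 0.35$, comfortably exceeding the required $0.1$ (the generous constant $0.1$ in the statement simply absorbs this slack).

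There is no real obstacle in the proof; the only conceptual point is that one cannot apply Markov directly to $f$, since $f$ itself may not be concentrated, but one can instead use the concentration of its pointwise upper envelope $g$, losing only a constant factor in the process. The hypotheses have been chosen so that this loss is absorbed into the final constant $0.01$ on $\mu_f$ and the $0.1$ on the probability.
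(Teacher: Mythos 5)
Your proof is correct; every step checks out, and the final bound $\Pr[A^c]\geq 0.772/2.2>0.35$ indeed exceeds the required $0.1$. The underlying idea is the same first-moment argument as in the paper -- split the space according to the concentration event of $g$ and the smallness event of $f$, and use $f\leq g$ together with hypothesis~2 -- but the bookkeeping is organized differently. The paper argues by contradiction: assuming the conclusion fails, it picks an auxiliary set $Z\subseteq W\cap R_1$ of probability exactly $0.89$ (where $W$ is the concentration set and $R_1=\{f<0.01\E[g]\}$), uses only the lower bound $g\geq 0.9\E[g]$ on $Z$ to show that $R\setminus Z$ carries at most $0.2\E[g]$ of the $g$-mass, and concludes $\E[f]\leq 0.21\E[g]<0.5\E[g]$, contradicting hypothesis~2. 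You instead run the estimate forward, using both sides of the concentration (the lower bound to show $\E[g\cdot\1_B]\leq 0.109\mu_g$, the upper bound $g\leq 1.1\mu_g$ on $B^c$ to control $\E[g\cdot\1_{A^c}]$), and solve directly for $\Pr[A^c]$. Your version avoids the slightly awkward device of carving out a set of prescribed measure and yields a stronger constant ($0.35$ in place of $0.1$); the paper's version, on the other hand, establishes the nominally stronger event $\{f\geq 0.01\E[g]\}$ (note $\E[g]\geq\E[f]$), which is the form actually invoked later in its application. One cosmetic remark: the deduction $\mu_g\leq 2\mu_f$ uses only hypothesis~2, not the pointwise inequality $f\leq g$.
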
 

The proof is given in Appendix~\ref{app:prob-proofs}.

We will also need the following lemma, which could be thought of as a strengthened inclusion-exclusion proved using sampling. 

\begin{lem}[Strong Inclusion-Exclusion]~\label{lem:inc-exc-sample}
Let $W_1, W_2, W_3, \ldots, W_l$ be subsets of a finite set $W$. For a parameter $\lambda \geq 1$, let the following be true. 
$$\sum_{i,j \in [l], i \neq j}|W_i \cap W_j| \leq \lambda\sum_{i \in [l]}|W_i|$$
Then, $\left| \bigcup_{i \in [l]} W_i \right| \geq \frac{1}{4\lambda}\sum_{i \in [l]}|W_i|$. 
\end{lem}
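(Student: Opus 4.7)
The plan is to reformulate the hypothesis purely in terms of element multiplicities. For each $x \in W$, set $d(x) := |\{i \in [l] : x \in W_i\}|$, and put $A := \sum_{i}|W_i| = \sum_x d(x)$. Double counting gives $\sum_{i \neq j}|W_i \cap W_j| = \sum_x d(x)(d(x)-1)$, so the hypothesis reads
\begin{equation*}
\sum_x d(x)^2 \;\leq\; (\lambda+1)\,A,
\end{equation*}
while the target $\bigl|\bigcup_i W_i\bigr| \geq A/(4\lambda)$ becomes $|\{x : d(x) \geq 1\}| \geq A/(4\lambda)$. After this translation the inequality is no longer really about pairwise intersections but about how concentrated the multiplicities can be.

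The cleanest finish is a one-line application of Cauchy--Schwarz to the set of elements of positive multiplicity:
\begin{equation*}
A^2 \;=\; \Bigl(\sum_{x \,:\, d(x) \geq 1} d(x)\Bigr)^{2} \;\leq\; \bigl|\{x : d(x) \geq 1\}\bigr| \cdot \sum_x d(x)^2 \;\leq\; (\lambda+1)\cdot\Bigl|\bigcup_i W_i\Bigr|\cdot A.
\end{equation*}
Dividing by $(\lambda+1)A$ (assuming $A>0$, else the statement is vacuous) and using $\lambda \geq 1$ yields $\bigl|\bigcup_i W_i\bigr| \geq A/(\lambda+1) \geq A/(2\lambda)$, which is in fact stronger than needed.

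Alternatively, matching the ``proved using sampling'' remark in the preamble, one can include each index $i$ independently in a random subset $S \subseteq [l]$ with probability $p = 1/\lambda$ and consider $N := \bigl|\bigcup_{i \in S} W_i\bigr|$, which never exceeds $\bigl|\bigcup_i W_i\bigr|$. Writing $N = \sum_x \mathbf{1}[\exists i \in S : x \in W_i]$ and applying the elementary inequality $1 - (1-p)^{d} \geq pd - \binom{d}{2}p^2$ term by term gives
\begin{equation*}
\E[N] \;\geq\; pA - \tfrac{p^2}{2}\sum_x d(x)(d(x)-1) \;\geq\; pA - \tfrac{p^2 \lambda A}{2},
\end{equation*}
which at $p = 1/\lambda$ equals $A/(2\lambda)$, so $\bigl|\bigcup_i W_i\bigr| \geq A/(2\lambda) \geq A/(4\lambda)$.

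There is no serious obstacle here: the only real move is the translation from pairwise intersections to per-element multiplicities, after which both routes reduce to a one-line estimate. I would present the Cauchy--Schwarz version for brevity, noting that the sampling proof gives the same bound and provides the intuition for why the natural sampling rate is $p = 1/\lambda$.
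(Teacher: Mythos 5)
Your proof is correct, and your primary route is genuinely different from the paper's. The paper keeps each element of each $W_i$ alive independently with probability $1/\lambda'$, applies the truncated inclusion--exclusion bound $|\bigcup_i \tilde W_i| \geq \sum_i|\tilde W_i| - \sum_{i\neq j}|\tilde W_i\cap \tilde W_j|$ to the sampled sets, takes expectations to get $(1/\lambda' - \lambda/\lambda'^2)\sum_i|W_i|$, and optimizes at $\lambda'=2\lambda$ to land exactly on the constant $1/(4\lambda)$. Your Cauchy--Schwarz argument instead passes to the multiplicity function $d(x)$, rewrites the hypothesis as a second-moment bound $\sum_x d(x)^2 \leq (\lambda+1)A$, and gets $|\bigcup_i W_i| \geq A/(\lambda+1)$ in one line --- deterministic, shorter, and with a better constant than the lemma demands (so it subsumes the stated bound via $\lambda+1\leq 2\lambda$). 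Your second, sampling-based route is closer in spirit to the paper's but samples whole indices rather than individual elements; it reaches the same $A/(2\lambda)$ after the same per-element truncation inequality. What the paper's version buys is that the element-level sampling and the explicit "some choice of random bits achieves the expectation" phrasing mirror the probabilistic-method style used throughout the rest of their argument, but nothing in the application requires that; either of your proofs is a valid drop-in replacement, and the Cauchy--Schwarz one is the cleaner of the three.
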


The proof appears in Appendix~\ref{app:incexc}.

\section{Upper bound on the complexity of homogeneous $\spsp^{\{s\}}$ circuits}~\label{sec:circuitub}

In this section, we state and prove the upper bound on the complexity of a $\spsp^{\{s\}}$ circuit. A very similar bound was proved by Kayal et al in~\cite{KLSS14}. We include a proof for completeness. 

\begin{lem}~\label{lem:lowsupbound1}
Let $C$ be a depth 4 homogeneous circuit computing a polynomial of degree $u$ in $N$ variables such that the support of the bottom product gates in $C$ is at most $s$. Let ${\cal M}$ be a set of monomials of degree equal to $r$ and let $m$ be a positive integer. Then,  $$\Phi_{{\cal M}, m}(C) \leq \text{Size}(C){\lceil\frac{2u}{s}\rceil + r \choose r}{N \choose m+ rs}$$ for any choice of $m, r, s, N$ satisfying $m+rs \leq N/2$.
\end{lem}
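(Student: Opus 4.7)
Since $\Phi_{{\cal M}, m}$ is subadditive (Lemma~\ref{subadditive}) and the top fan-in of $C$ is at most $\text{Size}(C)$, writing $C = \sum_i T_i$ with $T_i = \prod_{j=1}^{d_i} Q_{i,j}$ reduces the task to proving the per-product-term bound $\Phi_{{\cal M}, m}(T_i) \le \binom{\lceil 2u/s \rceil + r}{r} \binom{N}{m+rs}$. Fix such a term $T = \prod_j Q_j$ with $\sum_j \deg Q_j = u$ and each $Q_j$ a sum of monomials of support $\le s$. The plan is to exhibit a spanning set for $\langle \partial_{\cal M}(T)\rangle_m$ whose size matches the bound.

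\textbf{Regrouping the bottom factors.} The key preparatory step is to coalesce the $Q_j$'s into $K \le \lceil 2u/s \rceil$ ``medium'' factors $R_1, \ldots, R_K$ such that every monomial of every $R_k$ still has support $\le s$. Concretely: each $Q_j$ of degree $> s$ stays as its own singleton group (its support is already $\le s$ by hypothesis), while the remaining $Q_j$'s (each of degree $\le s$) are packed greedily into groups of total degree $\le s$; in such a combined group every monomial has total degree $\le s$, hence also support $\le s$. A simple bin-packing estimate (arrange so that all but possibly one light group has degree $> s/2$, combined with the observation that each heavy singleton already contributes $> s$ to the total $u$) yields $K \le \lceil 2u/s \rceil$. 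I expect this grouping --- balancing the support invariant inside each $R_k$ against the desired bound on $K$ --- to be the only step requiring care; everything else is a direct consequence of the product rule.

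\textbf{Expanding the shifted derivative.} Writing $T = R_1 \cdots R_K$ and applying the product rule, for any $\gamma \in {\cal M}$ of degree $r$,
\[
\partial_\gamma T \in \text{span}\bigl\{M_J \cdot \Pi_J : J \subseteq [K],\ |J| \le r\bigr\},
\]
where $\Pi_J = \prod_{k \notin J} R_k$ and $M_J = \prod_{k \in J} \partial_{\gamma_k}(R_k)$ for some partition $\gamma = \prod_k \gamma_k$ with $\gamma_k \ne 1$ exactly on $J$. Since differentiation does not increase support, every monomial of $M_J$ has support $\le |J| s \le rs$. Multiplying by a multilinear monomial $\mu$ of degree $m$, expanding $M_J$ into monomials $m_J$ of support $\le rs$, and projecting via $\sigma$ gives
\[
\sigma(\mu \cdot \partial_\gamma T) \in \text{span}\bigl\{\sigma(\tilde\mu \cdot \Pi_J) : J \subseteq [K],\ |J| \le r,\ \tilde\mu \text{ a multilinear monomial of support} \le m+rs\bigr\}.
\]
The multilinearity of $\tilde\mu$ is forced: if $\tilde\mu = \mu \cdot m_J$ had a repeated variable, then every monomial of $\tilde\mu \cdot \Pi_J$ would too, so $\sigma$ would annihilate it. For a multilinear monomial, degree equals support, so $\deg(\tilde\mu) \le m + rs$.

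\textbf{Counting.} The exhibited spanning set has size at most $\binom{K}{\le r} \cdot \binom{N}{\le m+rs}$, which is bounded by $\binom{K+r}{r} \cdot \binom{N}{m+rs}$ up to an absolute constant; here the second factor uses the hypothesis $m+rs \le N/2$, under which the cumulative binomial tail is within a constant of its largest term $\binom{N}{m+rs}$, and the first uses the standard injection from subsets of $[K]$ of size $\le r$ into subsets of $[K+r]$ of size $r$. Substituting $K \le \lceil 2u/s \rceil$ and summing over the top fan-in (at most $\text{Size}(C)$) delivers the claimed bound.
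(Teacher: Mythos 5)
Your proof is correct and follows essentially the same route as the paper's: regroup the bottom factors into at most $\lceil 2u/s\rceil$ blocks whose monomials still have support at most $s$ (the paper does this by pairwise merging factors of degree below $s/2$, you by greedy packing — same idea), apply the product rule so each projected shifted derivative lands in the span of $\sigma(\tilde\mu\cdot\Pi_J)$ with $\tilde\mu$ multilinear of degree at most $m+rs$, and count. The one shared looseness is the binomial tail: the relevant sum $\sum_{i=m}^{m+rs}\binom{N}{i}$ is within a factor $rs+1$ of $\binom{N}{m+rs}$ under $m+rs\le N/2$, not an absolute constant — the paper's own per-gate bound carries exactly this $rs$ factor and silently drops it in the lemma statement, and it is immaterial for the applications.
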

\begin{proof}
Let us consider a product gate $Q = \prod^l_{i = 1} P_i$ in $C$. Without loss of generality, we can assume that there is at most one $i$ such that degree of $P_i$ is less than $\frac{s}{2}$. Otherwise, we could multiply two such low degree $P_i$ and increase the degree polynomials. Observe that if the support of the bottom product gates in $C$ was at most $s$ to start with, this operation preserves that property, since we are only multiplying two polynomials if there degree is at most $\frac{s}{2}$.Therefore, $l \leq \lceil\frac{2u}{s}\rceil $. 

Now, let $\alpha$ be a monomial of degree $r$. The  derivative of $Q$ with respect to $\alpha$ is a sum, where each summand is of the form $\partial_{\alpha}(\prod_{i \in S}P_i) \cdot \prod_{j \in [l]\setminus S} P_j$ where $S$ is a subset of  $[l]$ of size at most $r$.  

We will now focus on  one such summand.  When this derivative is shifted by a multilinear monomial  $\gamma$ of degree $m$, we get a polynomial of the form $\gamma \cdot \partial_{\alpha}(\prod_{i \in S}P_i) \cdot \prod_{j \in [l]\setminus S} P_j$. Let us focus our attention on monomials in $\gamma \cdot \partial_{\alpha}(\prod_{i \in S}P_i)$. Every monomial here has support at least $m$ and most $m + rs$ since $\gamma$ has support  $m$, each $P_i$ has support at most $s$ and $|S| \leq r$. This implies that the polynomial  $\gamma \cdot \partial_{\alpha}(\prod_{i \in S}P_i) \cdot \prod_{j \in [l]\setminus S} P_j$ is in the linear span of the polynomials $\{\beta\cdot \prod_{j \in [l]\setminus S} P_j : m \leq \text{Supp}(\beta) \leq m+rs\}$. Moreover, even after taking the multilinear projections, it is true that the polynomial  $\sigma(\gamma \cdot \partial_{\alpha}(\prod_{i \in S}P_i) \cdot \prod_{j \in [l]\setminus S} P_j)$ is in the linear span of the polynomials $\{\sigma(\beta\cdot \prod_{j \in [l]\setminus S} P_j) : m \leq \text{Supp}(\beta) \leq m+rs\}$. Note that the set of polynomials $\{\sigma(\beta\cdot \prod_{j \in [l]\setminus S} P_j) : m \leq \text{Supp}(\beta) \leq m+rs\}$ does not depend upon $\alpha$. In particular, for all $\alpha$ of degree $r$, it is true that $\sigma(\gamma \cdot \partial_{\alpha}(\prod_{i \in S}P_i) \cdot \prod_{j \in [l]\setminus S} P_j)$ is in the linear span of the polynomials $\{\sigma(\beta\cdot \prod_{j \in [l]\setminus S} P_j) : m \leq \text{Supp}(\beta) \leq m+rs\}$.
Observe that any polynomial of the form $\beta\cdot \prod_{j \in [l]\setminus S} P_j$ will be set to zero under multilinear projections if $\beta$ is not multilinear. So, $\sigma(\gamma \cdot \partial_{\alpha}(\prod_{i \in S}P_i) \cdot \prod_{j \in [l]\setminus S} P_j)$ is in fact in the linear span of the polynomials $\{\sigma(\beta\cdot \prod_{j \in [l]\setminus S} P_j) : m \leq \text{degree}(\beta) = \text{Supp}(\beta) \leq m+rs\}$. The dimension of the space $\{\sigma(\beta\cdot \prod_{j \in [l]\setminus S} P_j) : m \leq \text{degree}(\beta) = \text{Supp}(\beta) \leq m+rs\}$ is at most the number of multilinear monomials $\beta$ of degree between $m$ and $m+rs$. This is at most $\sum_{i = 0}^{rs}{N \choose m+i}$, which is at most $rs\cdot {N \choose m+rs}$ since $m+rs \leq \frac{N}{2}$ and so the terms in the summation increase with an increase in $i$. 

From the above discussion, we can conclude that for a fixed subset $S$ of $[l]$ of size at most $r$, the multilinear projections of the shifts of  $\partial_{\alpha}(\prod_{i \in S}P_i) \cdot \prod_{j \in [l]\setminus S} P_j$ lie in a space of dimension at most $rs\cdot {N \choose m+rs}$. From this it follows that the set of projected shifted partial derivatives of order $r$ of $Q$ lie in a linear space of polynomials of dimension at most $rs\cdot {N \choose m+rs} \cdot {\lceil\frac{2u}{s}\rceil + r \choose r}$ since there are at most ${\lceil\frac{2u}{s}\rceil + r \choose r}$ subsets of $[l]$ of size at most $r$. 

The bound on the complexity of the circuit now just follows from sub-additivity of the complexity measure.

\end{proof}

\section{Strategy for proving a lower bound on the complexity of $NW_{n,D}$ and $\imm$}\label{sec:strat}
To show a lower bound on the complexity of the polynomial $P$(which will be $\imm$ or $NW_{n,D}$ in this paper), we  choose an appropriate set of monomials ${\cal M}$ and a parameter $m$ and then obtain a lower bound  on the value of $\Phi_{{\cal M}, m}(P)$. When ${\cal M}$ and $m$ are clear from the context, we  use $\Phi_{{\cal M}, m}(P)$ and $\Phi(P)$ interchangeably. We will now try to gain a more concrete understanding of the space of polynomials, whose dimension we want to lower bound. We will need some notations first. 

We denote by $M(\alpha)$ the set of monomials $\text{Supp}(\partial_{\alpha}(P))$. We will use the two interchangeably.
For any monomial $\alpha \in {\cal M}$ and any monomial $\beta \in \text{Supp}(\partial_{\alpha}(P))$, define the set 

$$S_m^P(\alpha, \beta) = \{\gamma : \text{deg}(\gamma) = \text{Supp}(\gamma) = m \text{ and } \text{Supp}(\gamma) \cap \text{Supp}(\beta) = \phi\}$$

to be the set of all multilinear monomials of degree $m$ which are disjoint from $\beta$. We define the set $\tilde{S}_m^P(\alpha, \beta)$ to be the subset of multilinear monomials $\gamma$ in $S_m^P(\alpha, \beta)$ such that  $\beta\cdot\gamma$ is the leading monomial of $\sigma(\gamma\cdot\partial_{\alpha}(P))$. Define $$A_m^P(\alpha, \beta) =\{\gamma\cdot \beta : \gamma \in \tilde{S}_m^P(\alpha, \beta)\}$$


\noindent
When the polynomial $P$ is clear from the context, we  drop the $P$ from $A_m^P(\alpha, \beta)$, $S_m^P(\alpha, \beta)$ and $\tilde{S}_m^P(\alpha, \beta)$ and instead denote them by $A_m(\alpha, \beta)$, $S_m(\alpha, \beta)$ and $\tilde{S}_m(\alpha, \beta)$ respectively. 

The following lemma relates the size of the union of the sets $A_m(\alpha, \beta)$ to $\Phi_{{\cal M}, m}(P)$

\begin{lem}~\label{lem:complexity1}
Let $P$ be a polynomial in $N$ variables and let ${\cal M}$ be any set of monomials on these variables. Let $m \leq N$ be a positive integer and let $\Phi_{{\cal M}, m}(P)$ and $A_m(\alpha, \beta)$ be as defined. Then,
$$\Phi_{{\cal M}, m}(P) \geq \left|\bigcup_{\substack{\alpha \in {\cal M}\\ \beta \in \text{Supp}(\partial_{\alpha}(P))}} A_m(\alpha, \beta) \right|$$
\end{lem}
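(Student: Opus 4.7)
The plan is to exhibit an explicit family of polynomials inside $\langle \partial_{\cal M}(P)\rangle_m$, one per element of the union $U := \bigcup_{\alpha, \beta} A_m(\alpha,\beta)$, that have pairwise distinct leading monomials; the standard fact that polynomials with distinct leading monomials are linearly independent will then immediately give the desired dimension bound.

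Concretely, first I would fix, for every $\mu \in U$, a representative triple $(\alpha_\mu, \beta_\mu, \gamma_\mu)$ such that $\alpha_\mu \in {\cal M}$, $\beta_\mu \in \text{Supp}(\partial_{\alpha_\mu}(P))$, $\gamma_\mu \in \tilde{S}_m(\alpha_\mu, \beta_\mu)$, and $\mu = \gamma_\mu \cdot \beta_\mu$; such a triple exists by definition of $U$. Consider then the polynomial
\[
f_\mu \;:=\; \sigma\bigl(\gamma_\mu \cdot \partial_{\alpha_\mu}(P)\bigr).
\]
By Definition~\ref{def:shiftedderivative}, each $f_\mu$ lies in $\langle \partial_{\cal M}(P)\rangle_m$. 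Moreover, by the defining property of $\tilde{S}_m(\alpha_\mu, \beta_\mu)$, the leading monomial of $f_\mu$ is exactly $\gamma_\mu \cdot \beta_\mu = \mu$. In particular the $f_\mu$'s are nonzero.

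Next I would observe that the monomials $\{\mu : \mu \in U\}$ are, by construction, pairwise distinct, hence the leading monomials $\{\lm(f_\mu) : \mu \in U\}$ are pairwise distinct. It is a standard and elementary fact that any finite collection of nonzero polynomials with pairwise distinct leading monomials (under a fixed monomial order) is linearly independent over $\F$: if a nontrivial linear combination vanishes, taking the maximum leading monomial appearing with nonzero coefficient in the combination produces a term that cannot be cancelled. Therefore $\{f_\mu : \mu \in U\}$ is a linearly independent subset of $\langle \partial_{\cal M}(P)\rangle_m$, which yields
\[
\Phi_{{\cal M}, m}(P) \;=\; \dim\bigl(\langle \partial_{\cal M}(P)\rangle_m\bigr) \;\geq\; |U| \;=\; \Bigl|\bigcup_{\substack{\alpha \in {\cal M}\\ \beta \in \text{Supp}(\partial_{\alpha}(P))}} A_m(\alpha, \beta)\Bigr|.
\]

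There is no real obstacle here: the lemma is purely a translation between the dimension of the projected-shifted-derivative space and a combinatorial quantity counting leading monomials, and the whole argument rests on the two observations that (i) distinct leading monomials force linear independence and (ii) the membership $\gamma \in \tilde{S}_m(\alpha,\beta)$ has been engineered precisely so that $\gamma\cdot\beta$ is the leading monomial of $\sigma(\gamma\cdot\partial_\alpha(P))$. The entire difficulty of the lower bound proof is thus deferred to later sections, where one must show that for a suitable choice of ${\cal M}$ and $m$, the union $U$ is genuinely large for $P = NW_{n,D}$ or $P = \imm$.
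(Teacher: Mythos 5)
Your proof is correct and follows essentially the same approach as the paper: you identify each element of the union with a polynomial in the projected–shifted–derivative space having that element as its leading monomial, and then invoke the standard fact that distinct leading monomials imply linear independence. The paper's proof is a more terse version of exactly this argument.
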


\begin{proof}
To prove the lemma, it suffices to show that for $\alpha \in {\cal M}$ and $\beta \in \text{Supp}(\partial_{\alpha}(P))$, 
$A_m(\alpha, \beta)$ are a subset of leading monomials of polynomials in $\F$-$\text{span}\left\{ \sigma(\gamma\cdot\partial_{\cal M}(P)) : \text{Supp}(\gamma) = \text{deg}(\gamma)=m\right\}$. This fact just follows from the definition of $A_m(\alpha, \beta)$. The lemma then follows from the fact that for any linear space of polynomials, its dimension is at least the number of distinct leading monomials in the space.
\end{proof}
\noindent
 By the principle of inclusion-exclusion, we get the following corollary. 
\begin{cor}~\label{cor:lb1}
Let $P$ be a polynomial in $N$ variables and let ${\cal M}$ be any set of monomials on these variables. Let $m \leq N$ be a positive integer and let $\Phi_{{\cal M}, m}(P)$ and $A_m(\alpha, \beta)$ be as defined. Then,
$$\Phi_{{\cal M}, m}(P) \geq \sum_{\substack{\alpha \in {\cal M}\\ \beta \in \text{Supp}(\partial_{\alpha}(P))}}|A_m(\alpha, \beta)| - \sum_{\substack{\alpha_1, \alpha_2 \in {\cal M}\\ \beta_1 \in \text{Supp}(\partial_{\alpha_1}(P)) \\ \beta_2 \in \text{Supp}(\partial_{\alpha_2}(P)) \\ (\alpha_1, \beta_1) \neq (\alpha_2, \beta_2)}}|A_m(\alpha_1, \beta_1) \cap A_m(\alpha_2, \beta_2)|$$
\end{cor}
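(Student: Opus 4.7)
The corollary follows almost immediately by combining Lemma~\ref{lem:complexity1} with the second Bonferroni inequality, so there is really no substantive obstacle to overcome. The plan is as follows.

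First, I would appeal directly to Lemma~\ref{lem:complexity1}, which already gives
\[
\Phi_{{\cal M}, m}(P) \;\geq\; \left|\bigcup_{\substack{\alpha \in {\cal M}\\ \beta \in \text{Supp}(\partial_{\alpha}(P))}} A_m(\alpha, \beta)\right|,
\]
so the problem reduces to lower-bounding the size of a finite union of sets in terms of the sizes of the sets and their pairwise intersections.

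Next, I would invoke the standard Bonferroni (truncated inclusion-exclusion) inequality: for any finite collection of finite sets $\{B_i\}_{i \in I}$,
\[
\left|\bigcup_{i \in I} B_i\right| \;\geq\; \sum_{i \in I} |B_i| \;-\; \sum_{\{i,j\} \subseteq I,\ i \neq j} |B_i \cap B_j|.
\]
Since summing over unordered pairs is at most summing over all ordered pairs with $i \neq j$ (which double-counts each unordered pair), we obtain the weaker but still valid bound
\[
\left|\bigcup_{i \in I} B_i\right| \;\geq\; \sum_{i \in I} |B_i| \;-\; \sum_{\substack{i,j \in I \\ i \neq j}} |B_i \cap B_j|.
\]
Applying this with index set $I = \{(\alpha, \beta) : \alpha \in {\cal M},\ \beta \in \text{Supp}(\partial_{\alpha}(P))\}$ and $B_{(\alpha,\beta)} = A_m(\alpha, \beta)$ yields exactly the inequality stated in the corollary.

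The main ``step'' is really just invoking the right inequality; there is no technical obstacle here. The only minor point to flag is that the pairwise intersection sum in the corollary's statement ranges over all ordered pairs $(\alpha_1,\beta_1) \neq (\alpha_2,\beta_2)$ rather than unordered ones, so the bound is a factor-of-two loss from the tightest Bonferroni bound; this is clearly intentional and harmless for the downstream application, since it simplifies the bookkeeping when these intersection terms are later grouped by symmetry in the analysis of $NW_{n,D}$ and $\imm$.
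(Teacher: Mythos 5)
Your proposal is correct and matches the paper's approach exactly: the paper derives the corollary from Lemma~\ref{lem:complexity1} with a one-line appeal to "the principle of inclusion-exclusion," which is precisely the Bonferroni inequality you invoke. Your note about ordered versus unordered pairs is a fair observation but does not affect correctness.
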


Therefore, to get a lower bound on $\Phi_{{\cal M}, m}(P)$, we show that $\sum_{\alpha \in {\cal M}, \beta \in \partial_{\alpha}(P)} |A_m(\alpha, \beta)|$ is large and the second term in the expression above is small. The following lemma relates $\sum_{\beta \in \partial_{\alpha}(P)} |A_m(\alpha, \beta)|$
 to the size of the sets $S_m(\alpha, \beta)$, which, in principle are somewhat simpler objects to describe. 

\begin{lem}~\label{lem:ASrelation}
Let $P$ be a polynomial in $N$ variables and let $\alpha \in {\cal M}$ be a monomial on these variables such that $\partial_{\alpha}(P)$ is not identically zero. Let  $S_m(\alpha, \beta)$ and $A_m(\alpha, \beta)$ be sets as defined. Then, 
$$\sum_{\beta \in \text{Supp}(\partial_{\alpha}(P))} |A_m(\alpha, \beta)| \geq \left|\bigcup_{\substack{\beta \in \text{Supp}(\partial_{\alpha}(P))}} S_m(\alpha, \beta) \right|$$
\end{lem}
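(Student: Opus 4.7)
The plan is to construct an injection from the union $\bigcup_{\beta \in \text{Supp}(\partial_\alpha(P))} S_m(\alpha, \beta)$ into the disjoint union $\bigsqcup_{\beta \in \text{Supp}(\partial_\alpha(P))} \tilde{S}_m(\alpha, \beta)$. Since $|A_m(\alpha, \beta)| = |\tilde{S}_m(\alpha, \beta)|$ directly from the definition (the map $\gamma \mapsto \gamma\beta$ is a bijection onto $A_m(\alpha,\beta)$), proving such an injection exists immediately yields the desired inequality.

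To define the injection, first I would fix an arbitrary $\gamma$ in the union, meaning $\gamma$ is a multilinear monomial of degree $m$ and there exists at least one $\beta \in \text{Supp}(\partial_\alpha(P))$ with $\text{Supp}(\gamma) \cap \text{Supp}(\beta) = \emptyset$. Writing $\partial_\alpha(P) = \sum_{\beta' \in \text{Supp}(\partial_\alpha(P))} c_{\beta'} \beta'$ with each $c_{\beta'} \neq 0$, multiplication by the single monomial $\gamma$ sends distinct $\beta'$ to distinct $\gamma\beta'$, so no cancellation occurs, and applying the multilinear projection keeps precisely those terms with $\text{Supp}(\beta') \cap \text{Supp}(\gamma) = \emptyset$. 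The key point is that the existence of at least one such $\beta$ guarantees that $\sigma(\gamma \cdot \partial_\alpha(P))$ is a nonzero polynomial.

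Because $\sigma(\gamma \cdot \partial_\alpha(P))$ is nonzero, it has a well-defined leading monomial under the fixed lexicographic order, and this leading monomial must be of the form $\gamma \beta^{*}$ for a unique $\beta^{*} \in \text{Supp}(\partial_\alpha(P))$ with $\text{Supp}(\beta^{*}) \cap \text{Supp}(\gamma) = \emptyset$. By the definition of $\tilde{S}_m$, this is exactly the statement that $\gamma \in \tilde{S}_m(\alpha, \beta^{*})$. I then define the map $\phi(\gamma) = (\beta^{*}(\gamma), \gamma)$ sending $\gamma$ into the disjoint union $\bigsqcup_{\beta} \tilde{S}_m(\alpha, \beta)$. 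Injectivity is immediate since $\gamma$ can be read off from the second coordinate.

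There is really only one step that requires care, namely the nonvanishing of $\sigma(\gamma \cdot \partial_\alpha(P))$, so that the leading monomial is well-defined; this is what forces us to restrict $\beta$ to range over $\text{Supp}(\partial_\alpha(P))$ rather than over an arbitrary set of monomials. Once that is in hand, summing $|\tilde{S}_m(\alpha, \beta)|$ over $\beta \in \text{Supp}(\partial_\alpha(P))$ counts the image of $\phi$ with multiplicity at least one, giving $\sum_{\beta} |A_m(\alpha, \beta)| \geq |\bigcup_{\beta} S_m(\alpha, \beta)|$, which is the claim.
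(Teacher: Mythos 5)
Your proposal is correct and is essentially the paper's own argument: both hinge on the observation that membership of $\gamma$ in some $S_m(\alpha,\beta)$ forces $\sigma(\gamma\cdot\partial_\alpha(P))$ to be nonzero, so its leading monomial $\gamma\beta^*$ identifies a unique $\beta^*$ with $\gamma\in\tilde S_m(\alpha,\beta^*)$, and the resulting map is injective because $\gamma$ is recoverable from the image. The only difference is cosmetic — you land in the disjoint union of the $\tilde S_m(\alpha,\beta)$ while the paper lands in the set of pairs $(\beta',\gamma\beta')$ — and your explicit remark that multiplication by a fixed monomial $\gamma$ cannot cause cancellation is a slightly more careful rendering of the same step.
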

\begin{proof}
Consider the sets $Z = \{(\beta, \gamma) : \beta \in \text{Supp}(\partial_{\alpha}(P)), \gamma \in A_m(\alpha, \beta)\}$ and \\
$W = \bigcup_{\substack{\beta \in \text{Supp}(\partial_{\alpha}(P))}} S_m(\alpha, \beta)$. To prove the lemma, we  show the existence of a one one map from $W$ to $Z$. Consider any $\gamma \in W$. By definition, this means that there exists a $\beta \in \text{Supp}(\partial_{\alpha}(P))$, such that $\gamma \in S_m(\alpha, \beta)$. This implies that $\gamma\cdot\beta \in \text{Supp}(\sigma(\gamma\cdot\partial_{\alpha}(P)))$. In particular, $\sigma(\gamma\cdot\partial_{\alpha}(P))$ is not the identically zero polynomial. So, there exists a $\beta' \in \text{Supp}(\partial_{\alpha}(P))$ such that $\gamma\cdot\beta'$ is the leading monomial of $\sigma(\gamma\cdot\partial_{\alpha}(P))$. From the definitions, this implies that $\gamma\cdot\beta' \in A_m(\alpha, \beta')$. So, we  map $\gamma$ to $(\beta', \gamma\cdot\beta')$. Clearly, this map is one one, since the pre-image of $(\rho, \psi)$ is given by $\psi/\rho$. Hence, the cardinality of $Z$ is at least the cardinality of $W$.
\end{proof}

\subsection{Obtaining the lower bound on $\Phi_{{\cal M}, m}(P)$}\label{sec:immstrat}
For a polynomial $P$, a set of monomials ${\cal M}$ and a positive integer $m$, we  now outline the general sequence of arguments which we use to lower bound $\Phi_{{\cal M}, m}(P)$. The exact sequence of arguments used in the proofs vary slightly for $NW_{n,D}$ and $\imm$. 
To express this outline more concretely, we will need some notations. 
For a polynomial $P$ and a monomials $\alpha, \alpha' \in {\cal M}$, we define 
$$T_1(\alpha, P) = \sum_{\beta \in \text{Supp}(\partial_{\alpha}(P))} |S_m(\alpha, \beta)|$$
$$T_2(\alpha , P) = \sum_{\substack{\beta_1, \beta_2 \in \text{Supp}(\partial_{\alpha}(P)) \\ \beta_1 \neq \beta_2}} |S_m(\alpha, \beta_1) \cap S_m(\alpha, \beta_2)|$$
and 
$$T_3(\alpha, \alpha', P) =   \sum_{\substack{\beta_1 \in \text{Supp}(\partial_{\alpha}(P)) \\ \beta_2 \in \text{Supp}(\partial_{\alpha'}(P)) \\ (\alpha, \beta_1) \neq (\alpha', \beta_2)}} |A_m(\alpha, \beta_1) \cap A_m(\alpha', \beta_2)|   $$

We also define 
$$T_1(P) = \sum_{\alpha \in {\cal M}} T_1(\alpha, P)$$
$$T_2(P) = \sum_{\alpha \in {\cal M}} T_2(\alpha, P)$$
and 
$$T_3(P) =  \sum_{\alpha, \alpha' \in {\cal M}} T_3(\alpha, \alpha', P)$$
At places where $P$ is clear from the context, we  drop the $P$ in $T_1(\alpha, P), T_2(\alpha, P)$ and $T_3(\alpha, \alpha', P)$ and denote them by $T_1(\alpha), T_2(\alpha)$ and $T_3(\alpha, \alpha')$ respectively.

From the Corollary~\ref{cor:lb1} and Lemma~\ref{lem:ASrelation}, it follows that for any polynomial $P$, set of monomials ${\cal M}$ and a parameter $m$, 
$$\Phi_{{\cal M}, m}(P) \geq T_1(P) - T_2(P) - T_3(P)$$

\vspace{2mm}
\noindent
{\bf Outline for Nisan-Wigderson polynomials }In the proof of the lower bound for the $NW_{n,D}$ polynomial, we  observe that over the random restrictions of $NW_{n,D}$, the expected value of $T_1-T_2-T_3$ is almost as large as the expected value of $T_1$. We will then use Lemma~\ref{lem:probab} to argue that with a sufficiently high probability, the complexity of a random restriction of $NW_{n,D}$ is high. 

\vspace{2mm}
\noindent
{\bf Outline for Iterated Matrix Multiplication } For iterated matrix multiplication, it turns out that the expected value of $T_2$ and $T_3$ are in fact larger than the expected value of $T_1$. So, we first use tail inequalities to argue that for a random restriction $P$ of $\imm$, with a high probability all of  $T_1, T_2, T_3$ take values close to their expected values. We pick such a restriction $P$. Since the value of $T_2(P) + T_3(P)$ is larger than $T_1(P)$, $T_1(P) - T_2(P) - T_3(P)$ does not give us a meaningful lower bound on $\Phi_{{\cal M}, m}(P)$. 

To get around this problem, we take the help of Lemma~\ref{lem:inc-exc-sample}, which can be seen as an strengthened form of the principle of Inclusion-Exclusion.  We first show that for such a restriction $P$ , there is a large subset ${\cal G} \subseteq {\cal M}$ of monomials such that 
\begin{enumerate}
\item For each $\alpha$ in $\cal G$, $T_1(\alpha)$ is large.
\item For each $\alpha$ in $\cal G$, $T_2(\alpha)$ is not too large compared to $T_1(\alpha)$. 
\item $\sum_{\alpha_1, \alpha_2 \in {\cal G}} T_3(\alpha_1, \alpha_2)$ is not too large when compared to $\sum_{\alpha \in {\cal G}, \beta \in \text{Supp}(\partial_{\alpha}(P))} |A_m(\alpha, \beta)|$.
\end{enumerate} 

We now argue that by multiple invocations of Lemma~\ref{lem:inc-exc-sample}, this suffices to show that the complexity of $P$ is large.
\begin{itemize}

\item For each $\alpha \in {\cal G}$, since $T_1(\alpha)$ is large, it follows that $\sum_{\beta \in \text{Supp}(\partial_{\alpha}(P))} |S_m(\alpha, \beta)|$ is large.
\item For each $\alpha \in {\cal G}$, since $T_2(\alpha)$ is  not much larger than $T_1(\alpha)$, Lemma~\ref{lem:inc-exc-sample} and Lemma~\ref{lem:ASrelation} imply that for each $\alpha \in {\cal G}$,    $\sum_{\beta \in \text{Supp}(\partial_{\alpha}(P))} |A_m(\alpha, \beta)|$ is large.   
\item We also know that  $\sum_{\alpha_1, \alpha_2 \in {\cal G}} T_3(\alpha_1, \alpha_2) = \sum_{\substack{\alpha_1, \alpha_2 \in {\cal G}\\\beta_1 \in \text{Supp}(\partial_{\alpha_1}(P)) \\ \beta_2 \in \text{Supp}(\partial_{\alpha_2}(P)) \\ (\alpha_1, \beta_1) \neq (\alpha_2, \beta_2)}} |A_m(\alpha_1, \beta_1) \cap A_m(\alpha_2, \beta_2)|$ is not much larger than $\sum_{\alpha \in {\cal G}, \beta \in \text{Supp}(\partial_{\alpha}(P))} |A_m(\alpha, \beta)|$. 
\item Lemma~\ref{lem:inc-exc-sample} will then imply that $\left|\bigcup_{\substack{\alpha \in {\cal G}\\ \beta \in \text{Supp}(\partial_{\alpha}(P))}} A_m(\alpha, \beta) \right|$ is large. Hence, by Lemma~\ref{lem:complexity1}, $\Phi_{{\cal G}, m}(P)$ is large. 
\end{itemize}

\section{Lower bound for $NW_{n,D}$}\label{sec:nw}
In this section, we  prove lower bound on the size of homogeneous $\spsp$ circuits which compute the $NW_{n,D}$ polynomial. 

\subsection{Random restrictions and proof outline}
From the definition, it follows that the total number of variables $N$ in $NW_{n,D}$ is $N = n^3$.  Let the set of all these variables be $\cal V$.  We will now define our random restriction procedure by defining a distribution $\cal D$ over subsets $V \subset \cal V$. The random restriction procedure will sample $V \gets \cal D$ and then keep only those variables ``alive" that come from $V$ and set the rest to zero. The restriction of the set of variables induces a restriction on any polynomial of these variables. We will use the notation $NW_{n,D}|_V$ for the restriction of $NW_{n,D}$ obtained by setting every variable outside $V$ to $0$. Therefore, any distribution $\cal D$ also induces a distribution on the set of restrictions of  $NW_{n,D}$. Similarly, the distribution $\cal D$ also induces a distribution over the restrictions of any circuit computing a polynomial over $\cal V$. We will use the notation $C|_V$ for the restriction of a circuit $C$ obtained by setting every input gate in $C$ which is labelled by a variable outside $V$ to $0$.

\vspace{2mm}
\noindent
{\bf The distribution: } Each variable in $\cal V$ is independently kept alive with a probability $p = n^{-\epsilon}$, where $\epsilon$ is an absolute constant such that $0 \leq \epsilon \leq 0.01$. This gives a distribution over the subsets of $\cal V$. We call it ${\cal D}$.

\vspace{2mm}
\noindent
{\bf Steps in the proof: } The proof consists of three main steps.
\begin{itemize}
\item We consider a depth 4 homogeneous circuit $C$ computing the polynomial $NW_{n,D}$. If $C$ was {\it large} to start with, we have nothing to prove. Else, $C$ was {\it small}. We then analyze the behavior of $C$ under random restrictions as defined above. 
\item We  show that with high probability, none of the  product gates in the bottom level of $C$ which has support at least $s = \sqrt{n}$ survives the random restriction procedure if the original circuit had size $2^{O(\sqrt{n}\log n)}$. So,  we are left with a low support circuit computing a restriction of $NW_{n,D}$.
\item We then argue that with  good probability, a random restriction of  $NW_{n,D}$ has  high complexity.

\item Finally, we show that both the events above together happen with some non zero probability. Then, comparing the complexity of the restriction of $NW_{n,D}$ and the restricted circuit, gives us the lower bound. 
\end{itemize}

\subsection{Choice of parameters}
We enumerate the values of the parameters used in this proof below. 
\begin{enumerate}
\item $n$. (This is the degree of the polynomial $NW_{n,D}$)
\item $N = n^3$.  (This is the total number of variables)
\item $r = \frac{1.1\sqrt{n}}{5}$. (This is the order of the derivatives involved)
\item $s = \sqrt{n}$. (This indicates the support of a product gate in the circuit after random restrictions)
\item $m = \frac{N}{2}(1-\frac{\ln n}{5\sqrt{n}})$. (This is the degree of the multilinear shifts)
\item $\epsilon$ is any absolute constant such that $0< \epsilon< 0.01$.
\item $p = n^{-\epsilon}$. (This is the probability with which each variable is kept alive independently)
\item $k = n-r$. (This is the size of the support of the monomials in any $r^{th}$ order derivative of $NW_{n,D}$)
\item $d = \theta\left(\frac{n}{\log n}\right)$ is a parameter chosen such that$n^{2d} = 1/4\cdot  n^{-2}\frac{{{N-k \choose m}}}{{N-2k\choose m-k}}$. 
\item $D = \frac{\epsilon n}{2} + d$. (This is the parameter $D$ in $NW_{n,D}$)
\item ${\cal D}$. (This is the distribution on the subsets of ${\cal V}$ obtained by keeping each variable in $\cal V$ alive independently with a probability $p = n^{-\epsilon}$ )
\end{enumerate}

In the rest of this paper, we  always invoke the definition of the Nisan-Wigderson polynomials for $D = \frac{\epsilon n}{2} + d$. So, for the rest of the proof, we  use the notation $NW$ for $NW_{n,D}$.

\subsection{Effect of random restrictions on the circuit}
The following lemma gives us an upper bound on the complexity of {\it small} circuits under the random restrictions. 
\begin{lem}~\label{lem:lowsupbound2}
Let $s = \sqrt{n}, r = \frac{1.1\sqrt{n}}{5}$ and  let $m$ be a parameter such that $m+rs \leq N/2$ and let $\epsilon > 0$ be a constant. Let ${\cal M}$ be any set of monomials of degree equal to $r$. Let $C$ be a homogeneous depth 4  circuit of size at most $2^{\frac{\epsilon}{2}\sqrt{n}\log n}$ computing the polynomial $NW$. Then, with probability at least $1-o(1)$ over $V \leftarrow {\cal D}$
 $$\Phi_{{\cal M}, m}(C|_V) \leq \text{Size}(C){\lceil\frac{2n}{s}\rceil + r \choose r}{N \choose m+rs}$$ 
\end{lem}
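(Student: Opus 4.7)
The plan is to reduce the claim to a direct application of Lemma~\ref{lem:lowsupbound1} by showing that under the random restriction $V \leftarrow {\cal D}$, the restricted circuit $C|_V$ is, with probability $1-o(1)$, already a homogeneous $\spsp^{\{s\}}$ circuit of size at most $\text{Size}(C)$. Once this is done, since $C|_V$ still computes a homogeneous polynomial of degree $u = n$, Lemma~\ref{lem:lowsupbound1} immediately yields the stated bound on $\Phi_{{\cal M}, m}(C|_V)$.

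First I would inspect the bottom product gates of $C$. Using the definition of $C|_V$ given in Section~\ref{sec:prelims}, a bottom product gate whose support (number of distinct variables feeding into it) equals $t$ is removed from $C|_V$ unless every one of those $t$ variables lies in $V$; since variables are kept independently with probability $p = n^{-\epsilon}$, the probability that a particular gate of support $t \geq s$ survives is at most $p^t \leq p^s = n^{-\epsilon \sqrt{n}}$. The total number of bottom product gates is bounded by $\text{Size}(C) \leq 2^{\frac{\epsilon}{2}\sqrt{n}\log n} = n^{\epsilon\sqrt{n}/2}$, so a union bound over all of them gives
\[
\Pr_{V \leftarrow {\cal D}}\bigl[\text{some bottom gate of support} > s \text{ survives}\bigr] \;\leq\; \text{Size}(C)\cdot p^s \;\leq\; n^{-\epsilon\sqrt{n}/2} \;=\; o(1).
\]
Hence with probability $1-o(1)$, every surviving bottom product gate of $C|_V$ has support at most $s$, which is precisely the condition to call $C|_V$ a homogeneous $\Sigma\Pi\Sigma\Pi^{\{s\}}$ circuit.

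With this structural guarantee in place, the second step is purely mechanical: apply Lemma~\ref{lem:lowsupbound1} to $C|_V$ with $u = n$ (the degree of the homogeneous polynomial it computes), the chosen parameters $r$, $m$, $s$, and the fact that $m + rs \leq N/2$ by assumption. Since $\text{Size}(C|_V) \leq \text{Size}(C)$, this yields the desired bound $\Phi_{{\cal M}, m}(C|_V) \leq \text{Size}(C)\binom{\lceil 2n/s\rceil + r}{r}\binom{N}{m+rs}$ on the good event, which occurs with probability $1-o(1)$.

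I do not expect any serious obstacle here; the only thing requiring a little care is the union bound bookkeeping, in particular making sure that the budget $\text{Size}(C) \leq n^{\epsilon\sqrt{n}/2}$ is genuinely smaller than $p^{-s} = n^{\epsilon\sqrt{n}}$ so that the failure probability is $o(1)$. The constants are set up precisely so that this slack of $n^{\epsilon\sqrt{n}/2}$ appears, which is why the hypothesis is that $\text{Size}(C) \leq 2^{\frac{\epsilon}{2}\sqrt{n}\log n}$ rather than anything larger.
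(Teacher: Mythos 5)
Your proposal is correct and follows essentially the same route as the paper: bound the survival probability of any bottom product gate of support at least $s$ by $p^s = n^{-\epsilon\sqrt{n}}$, union bound over the at most $\text{Size}(C) \leq n^{\epsilon\sqrt{n}/2}$ gates to get failure probability $o(1)$, and then apply Lemma~\ref{lem:lowsupbound1} to the resulting $\spsp^{\{s\}}$ circuit. No gaps.
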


\begin{proof}
When the variables are kept alive with probability $n^{-\epsilon}$ independently, then the probability that a  bottom product gate with support at least $\sqrt{n}$ survives equals $n^{-\epsilon\sqrt{n}}$. Therefore, the probability that some gate with support at least $s = \sqrt{n}$ survives in $C|_V$ is at most $\text{Size}(C)/n^{\epsilon\sqrt{n}}$. Substituting the value of size of $C$, we see that this is at most $n^{-\frac{\epsilon}{2}\sqrt{n}}$ which is $o(1)$. 

Now, by Lemma~\ref{lem:lowsupbound1}, the complexity of the circuit is at most $\text{Size}(C)\cdot {\lceil\frac{2n}{s}\rceil + r\choose r}\cdot {N \choose m+rs}$, with probability at least $1-o(1)$. 
\end{proof}
Observe that we have just argued that if the circuit was of size at most $2^{\frac{\epsilon}{2}\sqrt{n}\log n}$, then with probability at least $1-o(1)$, at the end of the random restriction process, none of the product gates with support larger than $s = \sqrt{n}$ at the bottom level is alive. Otherwise, the size of the circuit was larger than $2^{\frac{\epsilon}{2}\sqrt{n}\log n}$ to start with, in which case, we have nothing to prove.

\subsection{Effect of random restrictions on $NW_{n,D}$}
In this section, we  show that with a reasonably high probability, a random restriction of $NW$ has a large complexity. We outline the plan and set some notations below.

\vspace{2mm}
\noindent
{\bf Plan of the proof: }We will show that for $V \leftarrow {\cal D}$ expected value of the expression $T_1|_V-T_2|_V-T_3|_V$ is large and then use this to  obtain a lower bound on the complexity of a random restriction of $NW$. We will do this by proving a lower bound on the expected value of $T_1|_V$ and upper bounds on the expected values of $T_2|_V$ and $T_3|_V$. At this point, we would like to argue that the complexity remains close to the expectation with a reasonably high probability. This observation is proved using Lemma~\ref{lem:probab} and the bound on the variance of the number of monomials alive at the end of random restrictions obtained in~\cite{KLSS14}.

Recall that $D = \frac{n.\epsilon}{2} + d$ for some constant $\epsilon$ and a parameter $d = \theta(\frac{n}{\log n})$.

Let ${\cal M}^{[r]} = \{\prod_{i \in [r]} x_{i,j} : j \in [n^2]\}$ be a set of monomials.  Observe that for $r < D$, every monomial in ${\cal M}^{[r]}$ has an extension in $\text{Supp}(NW)$. This implies that for every $\alpha \in {\cal M}^{[r]}$, $\partial_{\alpha}(NW)$ is non zero. In fact, it consists of exactly $n^{2(D-r)}$ monomials. 
For our partial derivatives, we  consider the set of partial derivatives of $NW$ with respect to monomials from ${\cal M}^{[r]}$. For brevity, we  call this set ${\cal M}$ for the rest of the proof.

We will now prove that with a high probability over $V \leftarrow {\cal D}$, $\Phi_{{\cal M}, m}(NW|_V)$ is large. Recall that from the discussion in Section~\ref{sec:strat}, it will suffice to show that 
$\Phi_{{\cal M}, m}(NW|_V) = T_1(NW|_V) - T_2(NW|_V) - T_3(NW|_V)$ is large with a good probability. To this end, we  first show that $\Phi_{{\cal M}, m}(NW)$ is large in expectation and then argue that with a good probability the complexity measure is not too much less the mean. 

Observe that according to our definitions here, the set of monomials ${\cal M}$ is fixed and does not depend upon the random restrictions. Also, the contribution of any monomial $\alpha \in {\cal M}$ is a random variable. For example, for any $\alpha \in {\cal M}$ and $\beta \in M(\alpha)$, if $\alpha$ and $\beta$ both survive the random restriction procedure, then the contribution of $\beta$ to $A_m(\alpha, \beta)$ is $|S_m(\alpha, \beta)| = {N-k \choose m}$ whereas if either of them is set to zero during the random restrictions, then the contribution is $0$. Similarly for $T_2$ and $T_3$. Taking this into account, we  state the definitions of $T_1, T_2, T_3$ which we use in our expectations calculations below. We need a piece of notation first. 
 For monomials $\alpha_1, \alpha_2, \ldots, \alpha_j$, we define  $1_{\alpha_1, \alpha_2, \ldots, \alpha_j}$ to be the event that every monomial in $\{\alpha_1, \alpha_2, \ldots, \alpha_j\}$ survives the random restriction procedure.

\begin{itemize}
\item $T_1(NW|_V) = \sum_{\substack{\alpha \in {\cal M}^{[r]}\\ \beta \in M(\alpha)}}1_{\alpha, \beta}\cdot |S_m(\alpha, \beta)|$
\item $T_2(NW|_V) = \sum_{\substack{\alpha \in {\cal M}^{[r]} \\ \beta, \gamma \in M(\alpha) \\ \beta \neq \gamma}}1_{\alpha, \beta, \gamma}\cdot |S_m(\alpha, \gamma)\cap S_m(\alpha, \beta)|$
\item $T_3(NW|_V) = \sum_{\substack{\alpha_1, \alpha_2 \in {\cal M}^{[r]} \\ \beta_1 \in M(\alpha_1) \\ \beta_2 \in M(\alpha_2) \\ (\alpha_1, \beta_1) \neq (\alpha_2, \beta_2)}} 1_{\alpha_1, \alpha_2, \beta_1, \beta_2} \cdot |A_m(\alpha_1, \beta_1)\cap A_m(\alpha_2, \beta_2)|$
\end{itemize}

For the ease of notations, for the rest of the proof of lower bound for $NW$, we  denote $T_1(NW|_V)$ by $T_1|_V$. Similarly, we  use $T_2|_V$ for $T_2(NW|_V)$ and $T_3|_V$ for $T_3(NW|_V)$. 
We know that for any restriction  $NW|_V$, 
\begin{equation}~\label{eqn:NW4}
 \Phi_{{\cal M}, m}(NW|_V) \geq T_1|_V - T_2|_V - T_3|_V
\end{equation}
Therefore, by the linearity of expectation is, the expected complexity of a random restriction of $NW$,  
\begin{equation}~\label{eqn:NW5}
 \e_{V \leftarrow {\cal D}}[\Phi_{{\cal M}, m}(NW|_V)] \geq \e_{V \leftarrow {\cal D}}[T_1|_V] - \e_{V \leftarrow {\cal D}}[T_2|_V] - \e_{V \leftarrow {\cal D}}[T_3|_V]
\end{equation}

We will now bound the expected values of $T_1|_V$, $T_2|_V$, $T_3|_V$ under random restrictions. More precisely, we prove the following.
\begin{lem}~\label{lem:NWT10}
$$\e_{V \leftarrow {\cal D}}[T_1|_V] =  {N-k\choose m} \cdot n^{2d}$$
\end{lem}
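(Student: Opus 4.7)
The plan is a direct expectation calculation by linearity, exploiting the fact that in the sum defining $T_1|_V$, every factor depends only on universal counts and not on the specific pair $(\alpha,\beta)$.

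First I would unpack the summand. For any $\alpha \in {\cal M}^{[r]}$, the monomial $\alpha$ is supported on variables of the form $x_{i,\cdot}$ with $i \in [r]$, while any $\beta \in M(\alpha)$ has the form $\prod_{i=r+1}^{n} x_{i, f(i)}$ for some degree $\leq D-1$ polynomial $f$ extending the partial evaluation encoded by $\alpha$. Hence $\text{Supp}(\alpha)$ and $\text{Supp}(\beta)$ are automatically disjoint and their union has size exactly $n$. Since the variables are kept alive independently with probability $p = n^{-\epsilon}$, this gives $\Pr[1_{\alpha,\beta}] = p^{n} = n^{-\epsilon n}$. Likewise, since $\beta$ is multilinear of degree $k = n-r$, the set $S_m(\alpha,\beta)$ is exactly the collection of multilinear monomials of degree $m$ over the full set of $N$ variables that avoid the $k$ variables in $\text{Supp}(\beta)$, and therefore $|S_m(\alpha,\beta)| = \binom{N-k}{m}$. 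Crucially, neither of these quantities depends on the specific $\alpha$ or $\beta$.

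Next I would record the counts. Because $r < D$, every $r$-tuple of values in $\F_{n^2}^{r}$ arises as the evaluation of some polynomial of degree $\leq D-1$ on $[r]$, so $|{\cal M}^{[r]}| = n^{2r}$. For each fixed $\alpha$, the properties of $NW_{n,D}$ listed immediately after its definition give $|M(\alpha)| = n^{2(D-r)}$.

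Putting the pieces together by linearity of expectation yields
$$\e_{V \leftarrow {\cal D}}[T_1|_V] \;=\; |{\cal M}^{[r]}| \cdot n^{2(D-r)} \cdot p^{n} \cdot \binom{N-k}{m} \;=\; n^{2D-\epsilon n}\, \binom{N-k}{m},$$
and the choice $D = \epsilon n/2 + d$ reduces the exponent $2D - \epsilon n$ to $2d$, producing the claimed identity. There is really no obstacle beyond careful bookkeeping; the genuine difficulty in the section will appear in the next lemmas, where one must control $\e[T_2|_V]$ and $\e[T_3|_V]$ by estimating intersections of the sets $S_m$ and $A_m$ rather than their sizes.
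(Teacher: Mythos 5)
Your proposal is correct and follows exactly the same route as the paper: linearity of expectation, the observation that $|S_m(\alpha,\beta)| = \binom{N-k}{m}$ and $\e[1_{\alpha,\beta}] = p^n$ are independent of the pair, counting $n^{2D}$ pairs $(\alpha,\beta)$, and simplifying $n^{2D}p^n = n^{2d}$ via the choice $D = \epsilon n/2 + d$. No meaningful differences.
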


\begin{lem}~\label{lem:NWT20}
$$\e_{V \leftarrow {\cal D}}[T_2|_V] \leq n^{4d-2r + \epsilon r + 1} \cdot {N-2k \choose m}$$
\end{lem}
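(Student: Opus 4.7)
The plan is to upper bound $\e_{V \leftarrow {\cal D}}[T_2|_V]$ by expanding it as a sum over triples $(\alpha,\beta,\gamma)$ grouped by the overlap parameter $t := |\text{Supp}(\beta) \cap \text{Supp}(\gamma)|$, and then showing that the $t=0$ contribution is the dominant one. For any $\alpha \in {\cal M}^{[r]}$, a monomial in $M(\alpha)$ corresponds to a polynomial $f \in \F_{n^2}[z]$ of degree at most $D-1$ whose values on $[r]$ are prescribed by $\alpha$, so two distinct $\beta,\gamma \in M(\alpha)$ correspond to polynomials $f_1 \neq f_2$ that agree on all of $[r]$ together with exactly $t$ further points in $[n] \setminus [r]$. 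Since distinct polynomials of degree $\leq D-1$ can agree on at most $D-1$ points in total, $t$ ranges over $\{0, 1, \ldots, D-1-r\}$.

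Next I would count the overlap-$t$ triples. Fixing $f_1$ (at most $n^{2(D-r)}$ choices), choosing a $t$-subset $T \subseteq [n] \setminus [r]$ of extra agreement positions, and counting the $f_2$ that agree with $f_1$ on $[r] \cup T$ gives at most $n^{2(D-r-t)}$ choices, so the number of ordered overlap-$t$ pairs satisfies $N(\alpha,t) \leq n^{2(D-r)} \binom{n-r}{t} n^{2(D-r-t)}$. Since $\beta,\gamma$ are supported on $[n] \setminus [r]$ and disjoint from $\alpha$, the variable set $\text{Supp}(\alpha) \cup \text{Supp}(\beta) \cup \text{Supp}(\gamma)$ has size exactly $r + 2k - t$, so all three monomials survive with probability $p^{r+2k-t} = n^{-\epsilon(r+2k-t)}$. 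Finally, $|S_m(\alpha,\beta) \cap S_m(\alpha,\gamma)|$ counts multilinear degree-$m$ monomials disjoint from $\text{Supp}(\beta) \cup \text{Supp}(\gamma)$, hence equals $\binom{N - 2k + t}{m}$.

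Multiplying by $|{\cal M}^{[r]}| = n^{2r}$ and substituting $D = \epsilon n/2 + d$ and $k = n-r$, the polynomial-in-$n$ factor collapses and the contribution from a fixed $t$ is at most
$$\binom{n-r}{t} \cdot n^{4d - 2r + \epsilon r} \cdot n^{-(2-\epsilon)t} \cdot \binom{N - 2k + t}{m}.$$
At $t=0$ this equals exactly $n^{4d-2r+\epsilon r}\binom{N-2k}{m}$, which is the target bound minus the extra factor of $n$. Hence it suffices to show that summing the $t \geq 1$ terms yields at most $(n-1)$ times the $t=0$ contribution.

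The main technical step is estimating $\binom{N - 2k + t}{m} / \binom{N - 2k}{m} = \prod_{i=1}^{t}(N-2k+i)/(N-2k+i-m)$. With $N - 2k \approx n^3$ and $m = \frac{N}{2}(1 - o(1))$, each factor is bounded by an absolute constant $c$, so this ratio is at most $c^t$. Combined with $\binom{n-r}{t} \leq n^t$ and the factor $n^{-(2-\epsilon)t}$, the ratio of the $t$-th contribution to the $t=0$ contribution is at most $(cn^{-(1-\epsilon)})^t$, which is $o(1)$ for each $t \geq 1$ since $\epsilon < 0.01$. Summing the resulting geometric series gives a multiplicative factor of $1 + o(1) \leq n$, which completes the bound.
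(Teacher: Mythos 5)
Your argument is correct and follows essentially the same route as the paper: both decompose by the overlap parameter $t$ (called $w$ in the paper), count the agreeing polynomial pairs using the degree-$(D-1)$ constraint, compute the survival probability $p^{r+2k-t}$ and the intersection count $\binom{N-2k+t}{m}$, and show the $t=0$ term dominates. The only stylistic difference is that you bound the sum over $t$ via a geometric series (yielding the tighter $1+o(1)$ factor), whereas the paper notes the summand is decreasing and multiplies the $t=0$ term by the number of summands $D \le n$; both yield the stated bound.
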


\begin{lem}~\label{lem:NWT30}
$$ \e_{V \leftarrow {\cal D}}[T_3|_V] \leq  n^{4d+2}\cdot {N-2k \choose m-k}$$
\end{lem}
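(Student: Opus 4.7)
The strategy is to reparametrize the sum defining $T_3|_V$ by pairs of polynomials. Each pair $(\alpha, \beta)$ with $\alpha \in {\cal M}^{[r]}$ and $\beta \in M(\alpha)$ is in bijection with a polynomial $f \in \F_{n^2}[z]$ of degree at most $D - 1$ via $\alpha \cdot \beta = \prod_{i \in [n]} x_{i, f(i)}$. I would thus rewrite $T_3|_V$ as a sum over ordered pairs of distinct polynomials $(f_1, f_2)$ and stratify by the agreement $a := |\{i \in [n] : f_1(i) = f_2(i)\}|$; since $f_1 \neq f_2$ and both have degree at most $D - 1$, we have $a \leq D - 1$.

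Next I would control each intersection. Every element of $A_m(\alpha_i, \beta_i)$ is a multilinear monomial of degree $m + k$ whose support contains $\text{Supp}(\beta_i)$, so if $f_1$ and $f_2$ agree on exactly $a_2$ positions in $\{r+1, \ldots, n\}$, then $|A_m(\alpha_1, \beta_1) \cap A_m(\alpha_2, \beta_2)|$ is at most the number of multilinear monomials of degree $m + k$ containing $\text{Supp}(\beta_1) \cup \text{Supp}(\beta_2)$, a set of size $2k - a_2$; this count is ${N - 2k + a_2 \choose m - k + a_2}$. Since $m + k < N$ in our parameter regime, this binomial coefficient is non-decreasing in $a_2$, so using $a_2 \leq a$ I bound it by ${N - 2k + a \choose m - k + a}$. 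Moreover, the combined multilinear support of $\alpha_1 \beta_1$ and $\alpha_2 \beta_2$ has exactly $2n - a$ distinct variables, so $\e[1_{\alpha_1, \alpha_2, \beta_1, \beta_2}] = p^{2n - a}$.

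For the counting, the number of ordered pairs $(f_1, f_2)$ of distinct polynomials of degree at most $D - 1$ agreeing on at least $a$ positions is at most ${n \choose a} \cdot n^{4D - 2a}$: choose the $a$ agreement positions in ${n \choose a}$ ways, pick $f_1$ freely among the $n^{2D}$ polynomials, and note that $f_2$ (determined on $a$ values and of degree at most $D - 1$) has at most $n^{2(D - a)}$ choices. Assembling these three ingredients and using $p = n^{-\epsilon}$ with $4D = 4d + 2\epsilon n$,
\begin{align*}
\e_{V \leftarrow {\cal D}}[T_3|_V] &\leq \sum_{a=0}^{D-1} {n \choose a} n^{4D - 2a} \cdot n^{-\epsilon (2n - a)} \cdot {N - 2k + a \choose m - k + a} \\
&= \sum_{a=0}^{D-1} {n \choose a} n^{4d - (2 - \epsilon) a} \cdot {N - 2k + a \choose m - k + a}.
\end{align*}

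Finally, I would compare ${N - 2k + a \choose m - k + a}$ with ${N - 2k \choose m - k}$ using Lemma~\ref{lem:approx}; since $N - 2k \approx N$ and $m - k \approx N/2$ in our regime, the ratio $\prod_{i=1}^{a} \frac{N - 2k + i}{m - k + i}$ is bounded by $C^a$ for some absolute constant $C$. Combined with ${n \choose a} \leq n^a$, each summand carries a factor $(C \cdot n^{\epsilon - 1})^a$, which yields a convergent geometric series for $\epsilon < 1$. Hence the sum is $O(n^{4d}) \cdot {N - 2k \choose m - k}$, which is comfortably inside the claimed $n^{4d + 2} \cdot {N - 2k \choose m - k}$ for large $n$. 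The main subtlety is tracking $a_2$ separately from $a$ and verifying the direction of monotonicity of the binomial coefficient in the given regime; once these are in place, the geometric-series convergence is immediate from $\epsilon < 0.01$.
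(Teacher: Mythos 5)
Your argument is correct and follows essentially the same route as the paper's: parametrize by ordered pairs of distinct degree-$\leq D-1$ polynomials, count them using the fact that such a polynomial is determined by any $D$ evaluations, stratify by agreement, and assemble. The small differences are in the stratification and tail bound. The paper splits the agreement into $w_1$ (positions in $[r]$) and $w_2$ (positions in $[n]\setminus[r]$), keeps the sharper intersection bound ${N-2k+w_2 \choose m-k+w_2}$ that depends only on $w_2$, and finishes by bounding each inner sum by its largest term times its length, which contributes the explicit $D \cdot r = O(n^2)$ factor in the statement. You instead track only $a = w_1 + w_2$, relax the intersection estimate to ${N-2k+a \choose m-k+a}$ via the monotonicity you verify (using $a_2 \leq a$ and $N-k>m$ — this is the one place where collapsing $(w_1,w_2)$ into $a$ needs justification, and you handle it), and close with a convergent geometric series since $C n^{\epsilon-1} \to 0$. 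All the pieces check out: the bijection with polynomials, $\e[1_{\alpha_1,\alpha_2,\beta_1,\beta_2}] = p^{2n-a}$, the pair count ${n\choose a}n^{4D-2a}$, and the exponent bookkeeping $4D - 2a - \epsilon(2n-a) = 4d - (2-\epsilon)a$. Your version in fact gives $O(n^{4d}) \cdot {N-2k \choose m-k}$, slightly sharper than the stated $n^{4d+2} \cdot {N-2k \choose m-k}$; both suffice for the use in Lemma~\ref{lem:expectbound}.
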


We will now use the bounds given by the lemmas above to complete the proof of the lower bound. We will prove the above lemmas in Section~\ref{sec:NWcalc}.

\subsection{Lower bound on the complexity of $NW_{n,D}$}

\begin{lem}~\label{lem:expectbound}
For any choice of parameters $m, r, d, \epsilon, n, N,k $ such that 
\begin{itemize}
\item $n^{2d-2r+\epsilon r + 1} \leq 1/4\cdot \frac{{{N-k \choose m}}}{{N-2k\choose m}}$
\item $n^{2d+2} \leq 1/4\cdot \frac{{{N-k \choose m}}}{{N-2k\choose m-k}}$
\end{itemize}
the following is true $$\e_{V \leftarrow {\cal D}}[\Phi_{{\cal M}, m}(NW|_V)] \geq 0.5\cdot \e_{V \leftarrow {\cal D}}[T_1|_V] $$
\end{lem}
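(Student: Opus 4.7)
The plan is to combine Equation~\ref{eqn:NW5} with the three expectation bounds from Lemmas~\ref{lem:NWT10}, \ref{lem:NWT20}, and \ref{lem:NWT30}, and then verify that the two hypothesized inequalities on the parameters are precisely what is needed to ensure that $\e[T_2|_V]+\e[T_3|_V]\leq \tfrac{1}{2}\e[T_1|_V]$. Once this is shown, the desired bound $\e[\Phi_{{\cal M},m}(NW|_V)] \geq 0.5\cdot \e[T_1|_V]$ follows immediately by subtracting from Equation~\ref{eqn:NW5}.

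More concretely, I will first divide the bound on $\e[T_2|_V]$ from Lemma~\ref{lem:NWT20} by the exact value of $\e[T_1|_V] = {N-k \choose m}\cdot n^{2d}$ from Lemma~\ref{lem:NWT10}. The $n^{2d}$ factors combine to leave
$$\frac{\e[T_2|_V]}{\e[T_1|_V]} \;\leq\; n^{2d-2r+\epsilon r+1}\cdot \frac{{N-2k \choose m}}{{N-k \choose m}},$$
and by the first hypothesis the right-hand side is at most $1/4$. Repeating the same manipulation for $T_3$, Lemma~\ref{lem:NWT30} gives
$$\frac{\e[T_3|_V]}{\e[T_1|_V]} \;\leq\; n^{2d+2}\cdot \frac{{N-2k \choose m-k}}{{N-k \choose m}},$$
and the second hypothesis bounds this by $1/4$ as well.

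Adding the two estimates, $\e[T_2|_V] + \e[T_3|_V] \leq \tfrac{1}{2}\e[T_1|_V]$. Substituting into Equation~\ref{eqn:NW5} yields $\e[\Phi_{{\cal M},m}(NW|_V)] \geq \e[T_1|_V] - \tfrac{1}{2}\e[T_1|_V] = 0.5\cdot \e[T_1|_V]$, which is the claim. There is no real obstacle here; the lemma is a bookkeeping statement whose only content is that the two parameter hypotheses have been stated in exactly the form needed so that the error terms $T_2$ and $T_3$ are each at most a quarter of the main term $T_1$ in expectation. The technical work is entirely deferred to the proofs of Lemmas~\ref{lem:NWT10}--\ref{lem:NWT30} and to the later verification (in Section~\ref{sec:NWcalc}) that the explicit choices of $r,s,m,d,D,\epsilon,p$ in the parameter list actually satisfy both hypotheses.
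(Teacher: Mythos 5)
Your proof is correct and is essentially the paper's own argument: the paper likewise combines Equation~\ref{eqn:NW5} with Lemmas~\ref{lem:NWT10}--\ref{lem:NWT30} and uses the two hypotheses to conclude $\e[T_1|_V]\geq 4\,\e[T_2|_V]$ and $\e[T_1|_V]\geq 4\,\e[T_3|_V]$. Your write-up just makes the ratio computations explicit where the paper says ``it easily follows.''
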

\begin{proof}
From the choice of parameters and Lemma~\ref{lem:NWT10}, Lemma~\ref{lem:NWT20} and Lemma~\ref{lem:NWT30}, it easily follows that  $\e_{V \leftarrow {\cal D}}[T_1|_V] \geq 4\cdot \e_{V \leftarrow {\cal D}}[T_2|_V] $ and $\e_{V \leftarrow {\cal D}}[T_1|_V] \geq 4\cdot \e_{V \leftarrow {\cal D}}[T_3|_V]$. Thus $$\e_{V \leftarrow {\cal D}}[\Phi_{{\cal M}, m}(NW|_V)] \geq 0.5\cdot \e_{V \leftarrow {\cal D}}[T_1] .$$
\end{proof}


Thus for the above choice of parameters, we get a lower bound on the expected value of $\Phi_{{\cal M}, m}(NW|_V)$.  We would like to conclude that with a decent ($\geq 0.1$) probability, the complexity is large. Observe that we cannot directly use Markov's inequality. However we are still able to prove such a statement (see Lemma~\ref{lem:finallbNW}).
We make the following crucial observation. 

\begin{lem}~\label{lem:NWtrivialUB}
For any $V \subseteq \cal V$,  $$\Phi_{{\cal M}, m}(NW|_V) \leq |\text{Supp}(NW|_V)|{N-k \choose m}$$. 
\end{lem}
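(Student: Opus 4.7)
The plan is to exhibit an explicit spanning set for the space $\langle \partial_{\mathcal M}(NW|_V) \rangle_m$ and then bound its cardinality by a direct counting argument. By Definition~\ref{def:shiftedderivative}, this space is the $\F$-span of polynomials of the form $\sigma(\prod_{i\in S} x_i \cdot g)$ where $g \in \partial_{\mathcal M}(NW|_V)$ and $|S| = m$. Writing each derivative $\partial_\alpha(NW|_V)$ as an $\F$-linear combination of its monomials, the space is also spanned by the set
\[
\{\sigma(x^S \cdot \mu) : \mu \in \mathcal U, \ S\subseteq [N], \ |S|=m\}, \qquad \mathcal U := \bigcup_{\alpha \in \mathcal M} \text{Supp}(\partial_\alpha(NW|_V)).
\]
Thus $\Phi_{\mathcal M, m}(NW|_V)$ is at most the number of \emph{non-zero} elements of this spanning set.

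First I would observe that every $\mu \in \mathcal U$ is multilinear of degree exactly $k = n-r$: it is obtained by dividing some multilinear degree-$n$ monomial $\mu_0 \in \text{Supp}(NW|_V) \subseteq \text{Supp}(NW)$ by some $\alpha \in \mathcal M = \mathcal M^{[r]}$ of degree $r$. Consequently $\sigma(x^S \cdot \mu)$ is non-zero if and only if $x^S$ is multilinear and $S$ is disjoint from $\text{Supp}(\mu)$, which gives exactly $\binom{N-k}{m}$ valid choices of $S$ per fixed $\mu$.

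Next I would bound $|\mathcal U|$ using the combinatorial structure of the Nisan--Wigderson design. Every $\mu_0 \in \text{Supp}(NW)$ has the form $\prod_{i \in [n]} x_{i,f(i)}$ and uses exactly one variable from each of the rows $1, \ldots, n$; meanwhile every $\alpha \in \mathcal M^{[r]}$ uses exactly one variable from each of the rows $1, \ldots, r$. Hence the unique $\alpha \in \mathcal M^{[r]}$ that can possibly divide $\mu_0$ is $\prod_{i \in [r]} x_{i,f(i)}$ (and it does). This means the map $\mu_0 \mapsto \mu_0/\alpha_{\mu_0}$, where $\alpha_{\mu_0}$ is the unique such $\alpha$, is a surjection from $\text{Supp}(NW|_V)$ onto $\mathcal U$, giving $|\mathcal U| \le |\text{Supp}(NW|_V)|$.

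Combining the two counts yields $\Phi_{\mathcal M, m}(NW|_V) \le |\mathcal U| \cdot \binom{N-k}{m} \le |\text{Supp}(NW|_V)| \cdot \binom{N-k}{m}$, which is the claim. There is no real obstacle; the only ingredient beyond routine counting is the observation that in the Nisan--Wigderson design each full-degree monomial is divisible by a unique element of $\mathcal M^{[r]}$, which follows immediately from the row-by-row structure of $NW_{n,D}$.
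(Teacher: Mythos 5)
Your proof is correct and follows essentially the same approach as the paper: you bound the dimension by a monomial count, factor each contributing monomial as a degree-$k$ multilinear piece from $\bigcup_\alpha \text{Supp}(\partial_\alpha(NW|_V))$ times a disjoint degree-$m$ multilinear shift, and then bound the size of that union by $|\text{Supp}(NW|_V)|$ using the fact that each monomial of $NW$ extends a unique element of $\mathcal M^{[r]}$. The paper phrases the first step as counting distinct monomials in the union of supports rather than non-zero spanning-set elements, but the two counts coincide here and yield the same bound.
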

\begin{proof}
To prove the lemma, we  prove an upper bound on the size of the set $\bigcup_{\alpha \in {\cal M}^{[r]}} \text{Supp}(\partial_{\alpha}(NW|_V))$ in the following claim.   
\begin{claim}
For any $V \subseteq \cal V$, the following is true.
$$\left|\bigcup_{\alpha \in {\cal M}^{[r]}} \text{Supp}(\partial_{\alpha}(NW|_V)) \right| \leq |\text{Supp}(NW|_V)|$$
\end{claim}
\begin{proof}
To prove this claim, we  argue that there is a one-one map from the set  \\ $\bigcup_{\alpha \in {\cal M}^{[r]}} \text{Supp}(\partial_{\alpha}(NW|_V))$ to the set $\text{Supp}(NW|_V)$. From the definition of ${\cal M}^{[r]}$, it follows that all the monomials in ${\cal M}^{[r]}$ are of degree $r$ and contain exactly one variable from the set $\{x_{i,j} : j \in [n^2]\}$ for each $i \in [r]$.  Also, from the definition of $NW$, it follows that for every monomial $\beta$ in $\text{Supp}(NW|_V)$, there is exactly one monomial $\alpha \in {\cal M}^{[r]}$ such that  $\beta$ is an extension of $\alpha$. Or, in other words, for each $\beta \in \text{Supp}(NW|_V)$, there is exactly one $\alpha \in {\cal M}^{[r]}$ such that $\partial_{\alpha}(\beta) \in \text{Supp}(\partial_{\alpha}(NW|_V))$. Therefore, the function which maps $\partial_{\alpha}(\beta)$ to $\beta$ is a one-one map. 
\end{proof}

Now, observe that for any monomial $\gamma$ in the support of any polynomial in the set $$\{\sigma(\prod_{i\in S}{x_i}\cdot g)  :  g \in \partial_{{\cal M}^{[r]}} (NW|_V), S\subseteq [N], |S| = m\}$$
there exists an $\alpha \in {\cal M}^{[r]}$, a monomial $\beta \in \text{Supp}(NW|_V)$ and a multilinear monomial $\rho$ of degree $m$ such that the supports of $\partial_{\alpha}(\beta)$ and $\rho$ are disjoint and $\gamma = \partial_{\alpha}(\beta)\cdot \rho$. For any such $\beta$, the number of $\rho$, which are multilinear of degree $m$ and disjoint from $\partial_{\alpha}(\beta)$ is equal to ${N-k \choose m}$, since $\partial_{\alpha}(\beta)$ is a multilinear monomial of degree equal to $k$. Therefore, the number of distinct monomials in the union of supports of all polynomials in $\{\sigma(\prod_{i\in S }{x_i}\cdot g)  :  g \in \partial_{{\cal M}^{[r]}} (NW|_V), S \subseteq [N], |S| =m \}$ is at most the product of $|\bigcup_{\alpha \in {\cal M}^{[r]}} \text{Supp}(\partial_{\alpha}(NW|_V))|$ and  ${N-k \choose m}$. The lemma follows from the claim above. 
\end{proof}

We will now use Lemma~\ref{lem:probab} to argue that with a decent probablity, a random restriction of $NW$ has a complexity very close to its expected value. For a restriction $P = NW|_V$ of $NW$, define $g(P) = |\text{Supp}(P)|\cdot {N-k \choose m}$ and define $f(P) = \Phi_{{\cal M}^{[r]}, m}(P)$. Lemma~\ref{lem:NWtrivialUB} implies that for every restriction $P = NW|_V$ of $NW$, $f(P) \leq g(P)$.  Lemma~\ref{lem:expectbound} implies that  $\e_{V \leftarrow {\cal D}}[f] \geq 1/2\cdot \e_{V \leftarrow {\cal D}}[g]$. The following lemma of Kayal et al~\cite{KLSS14} tells us that $g$ takes values very close to its expected value with a high probability. 
\begin{lem}[\cite{KLSS14}]
$\text{Pr}_{V \leftarrow {\cal D}} [|g(NW|_V)-\e_{V' \leftarrow {\cal D}}[g]| \geq 0.1\cdot \e_{V' \leftarrow {\cal D}}[g]] \leq 0.01$.
\end{lem}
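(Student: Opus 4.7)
The plan is to prove concentration of $g(NW|_V)$ via Chebyshev's inequality applied to the random variable $X := |\text{Supp}(NW|_V)|$. Since $g(NW|_V) = X \cdot \binom{N-k}{m}$ and the binomial factor is deterministic, concentration of $g$ around $\E[g]$ reduces to concentration of $X$ around $\E[X]$. The goal therefore becomes $\Pr[|X - \E[X]| \geq 0.1\,\E[X]] \leq 0.01$.

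First I would decompose $X = \sum_{\mu \in \text{Supp}(NW)} X_\mu$, where $X_\mu$ is the indicator that all $n$ variables of the multilinear monomial $\mu$ land in $V$. Independence gives $\E[X_\mu] = p^n$ and hence $\E[X] = n^{2D} p^n = n^{2d}$, matching the value computed in Lemma~\ref{lem:NWT10}. Next, for any two distinct monomials $\mu, \mu'$ whose supports overlap in exactly $t$ variables, I would compute $\mathrm{Cov}(X_\mu, X_{\mu'}) = p^{2n-t} - p^{2n}$. The crux is then the combinatorial count $N_t$ of ordered pairs with overlap exactly $t$. Here I would invoke the Nisan--Wigderson design property: distinct polynomials of degree $\leq D-1$ over $\F_{n^2}$ agree on at most $D-1$ points of $\F_n$, so $N_t = 0$ for $t \geq D$; and for fixed $\mu \leftrightarrow f$ and any prescribed $t$ coordinates, the number of polynomials $g$ of degree $\leq D-1$ agreeing with $f$ on those coordinates is $n^{2(D-t)}$ (the remaining $D-t$ coefficients in $\F_{n^2}$ are free). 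Summing over the $\binom{n}{t}$ choices of coordinates gives the crude bound $N_t \leq n^{2D} \binom{n}{t}\, n^{2(D-t)}$, which is an upper bound since this procedure overcounts pairs of higher overlap.

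Putting the pieces together, the variance expansion should take the form
\begin{align*}
\mathrm{Var}(X) \;\leq\; \E[X] \;+\; \E[X]^2 \sum_{t=1}^{D-1} \binom{n}{t}\, n^{-2t}\, p^{-t}.
\end{align*}
Since $\binom{n}{t} \leq n^t$ and $p^{-1} = n^{\epsilon}$, each summand is bounded by $n^{-(1-\epsilon)t}$, so the sum collapses into a geometric series of total $O(n^{-(1-\epsilon)})$. Combined with $1/\E[X] = n^{-2d} = o(1)$, this gives $\mathrm{Var}(X)/\E[X]^2 = O(n^{-(1-\epsilon)})$. Chebyshev then yields $\Pr[|X - \E[X]| \geq 0.1\,\E[X]] \leq 100\cdot \mathrm{Var}(X)/\E[X]^2 = O(n^{-(1-\epsilon)})$, which is comfortably below $0.01$ for large $n$ because $\epsilon < 0.01$.

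The main obstacle, and really the only non-routine step, is the combinatorial estimate $N_t \leq n^{2D} \binom{n}{t}\, n^{2(D-t)}$, which is the Reed--Solomon-style incidence bound at the heart of the Nisan--Wigderson construction. Everything else is standard second-moment bookkeeping; the fact that $\epsilon$ is strictly less than $1$ gives the geometric ratio $n^{-(1-\epsilon)}$ needed for the variance sum to decay to zero, providing ample slack to pass through Chebyshev.
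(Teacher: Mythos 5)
Your argument is correct and complete. The paper itself does not prove this lemma --- it is cited from~\cite{KLSS14} precisely as ``the bound on the variance of the number of monomials alive at the end of random restrictions,'' and your second-moment computation (indicator decomposition over the $n^{2D}$ monomials, covariance $p^{2n-t}-p^{2n}$ for overlap $t$, the Reed--Solomon count $N_t \leq n^{2D}\binom{n}{t}n^{2(D-t)}$ with $N_t=0$ for $t\ge D$, and the geometric decay $n^{-(1-\epsilon)t}$ feeding into Chebyshev) is exactly the standard variance bound the citation alludes to; the key combinatorial count is the same one the paper uses in its $T_2$ estimate (Claim~\ref{clm:NWT2bound1}).
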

The functions $f$ and $g$ now satisfy the hypothesis of Lemma~\ref{lem:probab}. Therefore, we get the following lemma.
\begin{lem}
$\text{Pr}_{V \leftarrow {\cal D}} [f(NW|_V) \geq 0.01\cdot\e_{V' \leftarrow {\cal D}}[g]] \geq 0.1$.
\end{lem}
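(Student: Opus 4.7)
My plan is to apply Lemma~\ref{lem:probab} directly, taking the underlying distribution to be ${\cal D}$ and choosing $f(V) := \Phi_{{\cal M}^{[r]}, m}(NW|_V)$ and $g(V) := |\text{Supp}(NW|_V)| \cdot {N-k \choose m}$. All three hypotheses of that lemma have essentially been prepared by the preceding development, so the proof reduces to verifying them one at a time and then invoking the lemma.

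The pointwise inequality $f(V) \leq g(V)$ is exactly the content of Lemma~\ref{lem:NWtrivialUB}, so the first hypothesis comes for free. For the second, I will first compute $\e_{V\leftarrow {\cal D}}[g]$ explicitly: since $NW_{n,D}$ has exactly $n^{2D}$ monomials, each of which is multilinear of degree $n$, each monomial survives the random restriction independently with probability $p^n = n^{-\epsilon n}$. Using $D = \epsilon n/2 + d$, this gives $\e[|\text{Supp}(NW|_V)|] = n^{2D - \epsilon n} = n^{2d}$, hence $\e[g] = n^{2d} \cdot {N-k \choose m}$, which by Lemma~\ref{lem:NWT10} coincides exactly with $\e[T_1|_V]$. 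Combining this identity with Lemma~\ref{lem:expectbound}, which gives $\e[f] = \e[\Phi_{{\cal M}, m}(NW|_V)] \geq 0.5\cdot \e[T_1|_V]$, I obtain $\e[f] \geq 0.5\cdot \e[g]$ as required. The third hypothesis, the concentration estimate $\text{Pr}[|g - \e[g]| \geq 0.1\cdot \e[g]] \leq 0.01$, is precisely the bound stated just above the target lemma, borrowed from~\cite{KLSS14}.

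With all three hypotheses in place, Lemma~\ref{lem:probab} yields $\text{Pr}[f(NW|_V) \geq 0.01\cdot \e[f]] \geq 0.1$, and the bound $\e[f] \geq 0.5\cdot \e[g]$ then converts this to a lower bound in terms of $\e[g]$ in the claimed form (possibly with the absolute constant slightly weakened, which is immaterial for the downstream argument). Since the statement is a direct application of the probability lemma to quantities already controlled by the earlier lemmas, there is no serious technical obstacle here; the only point that demands a little care is the expectation identity $\e[g] = \e[T_1|_V]$, which is what makes the otherwise loose pointwise bound $f \leq g$ genuinely useful and lets the $f$-mean track the $g$-mean up to a constant factor.
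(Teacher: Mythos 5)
Your proposal is correct and mirrors the paper's argument: it verifies the three hypotheses of Lemma~\ref{lem:probab} using Lemma~\ref{lem:NWtrivialUB}, the identity $\e_{V\leftarrow{\cal D}}[g]=\e_{V\leftarrow{\cal D}}[T_1|_V]$ together with Lemma~\ref{lem:expectbound}, and the concentration bound from~\cite{KLSS14}, then invokes the probability lemma. The only minor point is the constant: as you note, routing through the \emph{stated} conclusion of Lemma~\ref{lem:probab} and then passing from $\e[f]$ to $\e[g]$ via $\e[f]\geq 0.5\,\e[g]$ loses a factor of two and yields $0.005$ rather than $0.01$, but the proof of Lemma~\ref{lem:probab} actually establishes the stronger $\Pr[f\geq 0.01\,\e[g]]\geq 0.1$ directly, and the downstream Lemma~\ref{lem:finallbNW} only needs $0.005$ anyway, so nothing is lost.
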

Therefore, the following lemma is true.
\begin{lem}~\label{lem:finallbNW}
For any choice of parameters $m, r, d, \epsilon, n, N,k $ such that 
\begin{itemize}
\item $n^{2d-2r+\epsilon r + 1} \leq 1/4\cdot \frac{{{N-k \choose m}}}{{N-2k\choose m}}$
\item $n^{2d+2} \leq 1/4\cdot \frac{{{N-k \choose m}}}{{N-2k\choose m-k}}$
\end{itemize}
the following is true 
$$\text{Pr}_{V \leftarrow {\cal D}} [\Phi_{{\cal M}, m}(NW|_V) \geq 0.005\cdot n^{2d}{N-k \choose m}] \geq 0.1$$
\end{lem}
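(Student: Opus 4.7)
The plan is to apply Lemma~\ref{lem:probab} directly to the pair of random variables
$f(V) := \Phi_{{\cal M}, m}(NW|_V)$ and $g(V) := |\text{Supp}(NW|_V)| \cdot {N-k \choose m}$, closely following the discussion that precedes the lemma statement. What remains is to verify the three hypotheses of Lemma~\ref{lem:probab} and to unpack the resulting conclusion into the quantitative bound claimed.

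First I would verify the hypotheses. The pointwise inequality $f(V) \leq g(V)$ is exactly Lemma~\ref{lem:NWtrivialUB}. To compute $\E_{V \leftarrow \cal D}[g]$, I would observe that each monomial in $\text{Supp}(NW)$ is multilinear of degree $n$, and hence survives the random restriction (in which every variable is independently retained with probability $p = n^{-\epsilon}$) with probability $p^n = n^{-\epsilon n}$. Since $|\text{Supp}(NW)| = n^{2D}$ and $D = \epsilon n/2 + d$, linearity of expectation gives $\E[|\text{Supp}(NW|_V)|] = n^{2D} \cdot n^{-\epsilon n} = n^{2d}$, so $\E[g] = n^{2d}{N-k \choose m}$. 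By Lemma~\ref{lem:NWT10} this matches $\E[T_1|_V]$, and then Lemma~\ref{lem:expectbound} — whose hypotheses are precisely the two parameter inequalities in the statement of the present lemma — yields $\E[f] \geq 0.5\,\E[T_1|_V] = 0.5\,\E[g]$. The third hypothesis, namely that $g$ is concentrated around $\E[g]$ up to a $0.1$ factor with probability at least $0.99$, is exactly the concentration estimate from~\cite{KLSS14} quoted immediately before the lemma statement (it is essentially a second-moment bound on the number of surviving monomials, none of whose supports overlap in too many variables by the Nisan--Wigderson design property).

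With the three hypotheses in hand, Lemma~\ref{lem:probab} gives $\Pr_{V \leftarrow \cal D}[f(V) \geq 0.01\,\E[f]] \geq 0.1$. Combining this with $\E[f] \geq 0.5\,\E[g] = 0.5\,n^{2d}{N-k \choose m}$ produces $0.01\,\E[f] \geq 0.005\,n^{2d}{N-k \choose m}$, which is exactly the claimed lower bound on $\Phi_{{\cal M}, m}(NW|_V)$.

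There is essentially no technical obstacle internal to this lemma; it is a packaging of the preceding ingredients. The real work sits upstream, in verifying the hypotheses of Lemma~\ref{lem:expectbound}, i.e.\ in the proofs of Lemmas~\ref{lem:NWT10}, \ref{lem:NWT20}, and~\ref{lem:NWT30} deferred to Section~\ref{sec:NWcalc}, and in the concentration estimate on $|\text{Supp}(NW|_V)|$ inherited from~\cite{KLSS14}. Once those are in place the remainder is a short chain of inequalities, and the choice of constants $0.01$ and $0.5$ (yielding the product $0.005$) is exactly what Lemma~\ref{lem:probab} furnishes.
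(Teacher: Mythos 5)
Your proof is correct and matches the paper's own argument: both apply Lemma~\ref{lem:probab} to $f(V)=\Phi_{{\cal M},m}(NW|_V)$ and $g(V)=|\text{Supp}(NW|_V)|\binom{N-k}{m}$, with the three hypotheses supplied by Lemma~\ref{lem:NWtrivialUB}, Lemma~\ref{lem:expectbound}, and the quoted concentration bound from~\cite{KLSS14}, then unpack $\E[g]=n^{2d}\binom{N-k}{m}$. The only cosmetic difference is that the paper invokes the (stronger) inequality $\Pr[f\geq 0.01\E[g]]\geq 0.1$ established inside the proof of Lemma~\ref{lem:probab}, while you use the stated conclusion $\Pr[f\geq 0.01\E[f]]\geq 0.1$ together with $\E[f]\geq 0.5\E[g]$, landing on the same $0.005$ constant.
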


\subsection{Wrapping up the proof}
We  now complete the proof of the lower bound for the case of $NW$ polynomial which implies Theorem~\ref{thm:mainthmVNP}.
\begin{thm}~\label{thm:mainthm}
Let $C$ be any homogeneous $\spsp$ circuit computing $NW_{n, D}$. Then, the size of $C$ is at least $n^{\Omega(\sqrt{n})}$.
\end{thm}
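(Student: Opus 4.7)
The plan is to combine the upper bound from Lemma~\ref{lem:lowsupbound2} on the complexity of a small restricted homogeneous $\spsp$ circuit with the lower bound from Lemma~\ref{lem:finallbNW} on the complexity of $NW|_V$, and to play the two off against each other via the standard union-of-good-events argument.

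First I would handle the trivial case: if $\text{Size}(C) \geq n^{\epsilon \sqrt{n}/2}$, then the bound is already proved and there is nothing to do. So I may assume $\text{Size}(C) \leq 2^{(\epsilon/2)\sqrt{n}\log n}$. Under this assumption, I sample $V \leftarrow \mathcal{D}$, where every variable is kept alive independently with probability $p = n^{-\epsilon}$. By Lemma~\ref{lem:lowsupbound2}, with probability at least $1-o(1)$ the restricted circuit $C|_V$ has every bottom product gate of support at most $s = \sqrt{n}$, and therefore satisfies the upper bound
\[
\Phi_{\mathcal{M},m}(C|_V) \;\leq\; \text{Size}(C) \cdot \binom{\lceil 2n/s \rceil + r}{r} \binom{N}{m+rs},
\]
for our choice of parameters $r,s,m$ and for the set $\mathcal{M} = \mathcal{M}^{[r]}$. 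On the other hand, Lemma~\ref{lem:finallbNW} gives, with probability at least $0.1$ over $V \leftarrow \mathcal{D}$, the lower bound
\[
\Phi_{\mathcal{M},m}(NW|_V) \;\geq\; 0.005\, n^{2d}\binom{N-k}{m}.
\]
Since these two probabilities sum to strictly more than $1$, there exists some $V$ for which both bounds hold simultaneously. For such a $V$, the restricted circuit $C|_V$ still computes $NW|_V$, and so $\Phi_{\mathcal{M},m}(C|_V) = \Phi_{\mathcal{M},m}(NW|_V)$.

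The concluding step is to solve for $\text{Size}(C)$ in the resulting inequality
\[
\text{Size}(C) \cdot \binom{\lceil 2n/s \rceil + r}{r} \binom{N}{m+rs} \;\geq\; 0.005\, n^{2d}\binom{N-k}{m},
\]
i.e.\ to show that the ratio
\[
\frac{n^{2d}\binom{N-k}{m}}{\binom{\lceil 2n/s\rceil + r}{r}\binom{N}{m+rs}}
\]
is at least $n^{\Omega(\sqrt{n})}$ for the chosen parameters $r = 1.1\sqrt{n}/5$, $s = \sqrt{n}$, $k = n-r$, $m = (N/2)(1 - \log n/(5\sqrt n))$, $d = \Theta(n/\log n)$. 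This is the only genuine obstacle left, but it is a routine (if somewhat delicate) asymptotic estimate. The binomial factor $\binom{\lceil 2n/s\rceil + r}{r}$ is only $n^{O(\sqrt n)}$, so the real fight is between $n^{2d}\binom{N-k}{m}$ and $\binom{N}{m+rs}$. Using Lemma~\ref{lem:approx} (since $m+rs \leq N/2$ and the shifts are all small relative to $N$), one approximates $\binom{N-k}{m}/\binom{N}{m+rs}$ by a power of $N$ whose exponent is governed by the difference $rs - k$ in shift length together with the depletion factor coming from $(N-m)/N \approx \log n/(5\sqrt n)$; the parameter $d$ was chosen precisely so that $n^{2d}$ absorbs the resulting $\binom{rs}{}$ shortfall up to a $n^{\Omega(\sqrt n)}$ surplus. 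Putting everything together yields $\text{Size}(C) \geq n^{\Omega(\sqrt n)}$, which completes the proof of Theorem~\ref{thm:mainthm} and hence of Theorem~\ref{thm:mainthmVNP}.
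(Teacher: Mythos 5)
Your overall strategy matches the paper's own proof: play the upper bound of Lemma~\ref{lem:lowsupbound2} against the lower bound of Lemma~\ref{lem:finallbNW} and conclude via a union bound. However, you invoke Lemma~\ref{lem:finallbNW} without verifying its hypotheses, and that verification is not free. The lemma is conditional: it requires both $n^{2d-2r+\epsilon r+1} \leq \frac{1}{4}\binom{N-k}{m}/\binom{N-2k}{m}$ and $n^{2d+2} \leq \frac{1}{4}\binom{N-k}{m}/\binom{N-2k}{m-k}$. The second holds by the definition of $d$ (a definition your proposal never writes down; it is $n^{2d} = \frac{1}{4}n^{-2}\binom{N-k}{m}/\binom{N-2k}{m-k}$), but the first is not automatic, and establishing it is exactly the content of the Claim embedded inside the paper's proof of this theorem --- a nontrivial chain of approximations using Lemma~\ref{lem:approx} together with the specific values of $r$, $s$, $k$, $\epsilon$. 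Without that check you cannot apply the lemma.

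Separately, you defer the concluding step as a ``routine asymptotic estimate,'' which is precisely the part where the constants have to be checked, and your sketch contains an imprecision that, taken at face value, would sink the bound: you write that $\binom{\lceil 2n/s\rceil + r}{r}$ is $n^{O(\sqrt n)}$, which is the same scale as the $n^{\Omega(\sqrt n)}$ gain you are trying to extract; a factor of that size in the denominator could cancel the whole surplus depending on constants. The correct (and needed) estimate is $2^{O(r)} = 2^{O(\sqrt n)} = n^{O(\sqrt n/\log n)}$, which is genuinely subdominant. To actually close the argument one must substitute the value of $n^{2d}$, estimate the three ratios $\binom{N-k}{m}/\binom{N}{m+rs}$, $\binom{N-k}{m}/\binom{N-2k}{m-k}$, and $\binom{\lceil 2n/s\rceil+r}{r}$, and reduce the size bound to $\left(\frac{1+\ln n/(5s)}{1-\ln n/(5s)}\right)^{k-rs}\cdot 2^{-O(r)}$, from which $n^{\Omega(\sqrt n)}$ is read off using $k - rs = \Theta(n)$. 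Your proposal correctly identifies the structure of the proof, but leaves both of these load-bearing verifications undone.
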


\begin{proof}
Recall that, from our choice of parameters, we have $s = \sqrt{n}$, $r = \frac{1.1\sqrt{n}}{5}$, $N=n^3$, $m = \frac{N}{2}(1-\frac{\ln n}{5\sqrt{n}}) = \frac{N}{2}(1-\frac{\ln n}{5s})$, $d$ such that $n^{2d} = 1/4\cdot n^{-2}\frac{{{N-k \choose m}}}{{N-2k\choose m-k}}$, $k = n-r$,  and $\epsilon < 0.01$. Observe that $m+rs < \frac{N}{2}$. Let $C$ be a circuit computing the polynomial $NW$.  

If the size of the circuit is at least $n^{\frac{\epsilon}{2}\sqrt{n}}$, then we are done. Else, the size of $C$ is at most $n^{\frac{\epsilon}{2}\sqrt{n}}$.  Lemma~\ref{lem:lowsupbound2} implies that with probability at least $1-o(1)$ 
the complexity of the circuit is at most  $\text{Size}(C){\lceil\frac{2n}{s}\rceil + r \choose r}{N \choose m+rs}$. 

We will first show that for the choice of paramters made above, the hypotheses of Lemma~\ref{lem:expectbound} hold. 
\begin{claim}
For $m, r, d, \epsilon, n, N,k$ as chosen above, 
\begin{itemize}
\item $n^{2d-2r+\epsilon r + 1} \leq 1/4\cdot \frac{{{N-k \choose m}}}{{N-2k\choose m}}$
\item $n^{2d+2} \leq 1/4\cdot \frac{{{N-k \choose m}}}{{N-2k\choose m-k}}$
\end{itemize}

\end{claim}

\begin{proof}
By the choice of $d$, the second constraint is met. 

We now need to verify that  for the choice of parameters the first constraint is met, i.e. 
$$n^{2d-2r+\epsilon r} \leq 1/4\cdot n^{-1}\frac{{{N-k \choose m}}}{{N-2k\choose m}}.$$
In other words, we would like to show that 
$$n^{2d-2r+\epsilon r} \cdot 4n \cdot \frac{{{N-2k \choose m}}}{{N-k\choose m}} \leq  1.$$

Now,
\begin{align*}
&n^{2d-2r+\epsilon r} \cdot 4n \cdot \frac{{{N-2k \choose m}}}{{N-k\choose m}}\\
=& n^{-2r+\epsilon r} \cdot \frac{1}{n} \cdot \frac{{{N-2k \choose m}}}{{N-2k\choose m-k}} \quad\quad\quad\quad \text{substituting value of $n^{2d}$}\\
=& n^{-2r+\epsilon r} \cdot \frac{1}{n} \cdot \frac{(N-m-k)!}{(N-m-2k)!} \times \frac{(m-k)!}{m!} \\
\approx& n^{-2r+\epsilon r} \cdot \frac{1}{n} \cdot \left(\frac{N-m}{m}\right)^k \quad\quad\quad\quad\text{By Lemma~\ref{lem:approx}} \\
=& n^{-2r+\epsilon r} \cdot \frac{1}{n} \cdot \left(\frac{1+\frac{\ln n}{5s}}{1-\frac{\ln n}{5s}}\right)^k \quad\quad\quad\quad\text{substituting choice of $m$} \\
\leq& n^{-2r+\epsilon r} \cdot \frac{1}{n} \cdot e^{2.01 k \frac{\ln n}{5s}} \quad\quad\quad\quad\text{for large enough $n$} \\
=& n^{-2r+\epsilon r} \cdot \frac{1}{n} \cdot n^{2.01k/5s}\\
\end{align*}

Substituting $r = \frac{1.1\sqrt{n}}{5}, s =\sqrt{n}, k = n-r$ and $\epsilon < 0.01$, it can be verified that the expression above is at most $1$. 
\end{proof}

Thus by the claim above and Lemma~\ref{lem:finallbNW}, we conclude that with $$\text{Pr}_{V \leftarrow {\cal D}} \left[\Phi_{{\cal M}, m}(NW|_V) \geq \Omega\left(n^{2d}{N-k \choose m}\right)\right] \geq 0.1.$$



So,  with probability at least $0.1 -o(1)$, the complexity of $C|_V$ is low while at the same time the complexity of the $NW|_V$ remains high. Comparing the bounds, we have 

$$\text{Size(C)} \geq \Omega\left( \frac{n^{2d}{N-k \choose m}}{{\lceil\frac{2n}{s}\rceil + r \choose r}{N \choose m+rs}}\right)$$
Putting in $n^{2d} = 1/4\cdot  n^{-2}\frac{{{N-k \choose m}}}{{N-2k\choose m-k}}$, we have 
$$\text{Size(C)} \geq   \Omega\left(n^{-2} \cdot \frac{{N-k \choose m}{N-k \choose m}}{{\lceil\frac{2n}{s}\rceil + r\choose r}{N \choose m+rs}{N-2k\choose m-k}}\right)$$
We will first estimate the ratio of binomial coefficients one by one. 
\begin{itemize}
\item $\frac{{N-k\choose m}}{{N \choose m+rs}} = \frac{(N-k)!}{N!}\times \frac{(m+rs)!}{m!} \times \frac{(N-m-rs)!}{(N-m-k)!} \approx \left(\frac{m}{N-m}\right)^{rs} \times \left(\frac{N-m}{N}\right)^k$
\item $\frac{{N-k \choose m}}{{N-2k \choose m-k}} = \frac{(N-k)!}{(N-2k)!} \times \frac{(m-k)!}{m!} \approx \frac{N^k}{m^k}$
\item ${\lceil\frac{2n}{s}\rceil + r \choose r}$ is $2^{O(r)}$ for our choice of $r$ and $s$
\end{itemize}
Plugging these bounds back, we have $$\text{Size}(C) \geq    n^{-2} \cdot \left(\frac{N-m}{m}\right)^{k-rs}\times2^{-O(r)}$$ 
Now, we plug in the value of $m$, which gives us $$\text{Size}(C) \geq \left(\frac{1+\frac{\ln n}{5s}}{1-\frac{\ln n}{5s}}\right)^{k-rs}\times2^{-O(r)}$$ 
This gives us $$\text{Size}(C) \geq \left({1+\frac{\ln n}{5s}}\right)^{k-rs}\times2^{-O(r)}$$
which implies $$\text{Size}(C) \geq n^{{\frac{k-rs}{5s}}}\times2^{-O(r)}$$

Substituting the values of $k, r, s$, we get $$\text{Size}(C) \geq n^{\Omega(\sqrt{n})}$$ 
\end{proof}

\section{Calculations for $NW_{n,D}$}~\label{sec:NWcalc}
 In this sections, we provide the proofs of Lemma~\ref{lem:NWT10}, Lemma~\ref{lem:NWT20} and Lemma~\ref{lem:NWT30}. 
\subsection{Expected value  of $T_1(NW_{n,D}|_V)$}
This computation is quite straight forward. 
\begin{eqnarray*}
\e_{V \leftarrow {\cal D}}[T_1|_V] &= & \sum_{\substack{\alpha \in {\cal M}^{[r]}\\ \beta \in M(\alpha)}}\e[1_{\alpha, \beta}]\cdot |S_m(\alpha, \beta)| \\
& = & {N-k\choose m} \cdot \sum_{\substack{\alpha \in {\cal M}^{[r]}\\ \beta \in M(\alpha)}}\e[1_{\alpha, \beta}]\\
\end{eqnarray*}
Now observe that $1_{\alpha, \beta} = 1$ when all the variables in the support of the monomial $\alpha\beta$ stay alive. This happens with probability exactly $p^n$ since $\alpha\cdot \beta$ is a multilinear monomial of degree equal to $n$. The number of pairs $\alpha, \beta$ such that $\alpha \in {\cal M}^{[r]} $ and $ \beta \in M(\alpha)$ is exactly equal to $n^{2D}$, since $|{\cal M}^{[r]}| = n^{2r}$ and for each such $\alpha$, the number of $\beta \in M(\alpha)$ equals $n^{2(D-r)}$. Plugging this back, we obtain
\begin{eqnarray*}
\e_{V \leftarrow {\cal D}}[T_1|_V] & = & {N-k\choose m} \cdot n^{2D}p^n \\
&=& {N-k\choose m} \cdot n^{2d}
\end{eqnarray*}

\subsection{Expected value of  $T_2(NW_{n,D}|_V)$}
By linearity of expectation, 
$$\e_{V \leftarrow {\cal D}}[T_2|_V] = \sum_{\substack{\alpha \in {\cal M}^{[r]} \\ \beta, \gamma \in M(\alpha) \\ \beta \neq \gamma}}\e_{V \leftarrow {\cal D}}[1_{\alpha, \beta, \gamma}\cdot |S_m(\alpha, \gamma)\cap S_m(\alpha, \beta)|]$$
For any fixed $\alpha, \beta$, we  partition the set of all $\gamma \in M(\alpha)$ based upon the size of the intersection of the supports of $\beta$ and $\gamma$
$$\e_{V \leftarrow {\cal D}}[T_2|_V] = \sum_{0 \leq w \leq D-r}\sum_{\substack{\alpha \in {\cal M}^{[r]} \\ \beta \in M(\alpha) \\ \gamma \in M(\alpha) \\ \gamma\neq \beta \\ |\text{Supp}(\gamma) \cap \text{Supp}(\beta)| = w }}\e_{V \leftarrow {\cal D}}[1_{\alpha, \beta, \gamma}\cdot |S_m(\alpha, \gamma)\cap S_m(\alpha, \beta)|]$$
Observe that we only need to sum upto $w = D-r$ since for any $\beta\neq \gamma \in M(\alpha)$, the maximum size of the intersection of $\text{Supp}(\beta)$ and $\text{Supp}(\gamma)$ can be $D-r$. This is due to the observation that for $\beta \neq \gamma \in M(\alpha)$, there exist distinct univariate polynomials $f_{\beta}$ and $f_{\gamma}$ of degree at most $D-1$ in $\F_{n^2}[Z]$ such that $\alpha\cdot \gamma = \prod_{i \in [n]}x_{i, f_{\gamma}(i)}$ and $\alpha\cdot \beta = \prod_{i \in [n]}x_{i, f_{\beta}(i)}$. 
Rearranging the order of summation, we obtain
$$\e_{V \leftarrow {\cal D}}[T_2|_V] = \sum_{\substack{\alpha \in {\cal M}^{[r]} \\ \beta \in M(\alpha)}} \e_{V \leftarrow {\cal D}}[1_{\alpha, \beta}] \sum_{0 \leq w \leq D-r}\sum_{\substack{\gamma \in M(\alpha) \\ \gamma\neq \beta \\ |\text{Supp}(\gamma) \cap \text{Supp}(\beta)| = w }}\e_{V \leftarrow {\cal D}}[1_{\gamma | \beta}\cdot |S_m(\alpha, \gamma)\cap S_m(\alpha, \beta)|]$$
where $1_{\gamma|\beta}$ is the event $1_{\gamma'}$ where $\gamma' = {\prod_{X \in \text{Supp}(\gamma) \setminus \text{Supp}(\beta)} X}$. Since the support of $\alpha$ is disjoint from the support of $\beta$ and $\gamma$, so the dependence is only between $\gamma$ and $\beta$.
In the claim below, we  derive an upper bound on the expression $$ \e_{V \leftarrow {\cal D}}[1_{\gamma | \beta}\cdot |S_m(\alpha, \gamma)\cap S_m(\alpha, \beta)|]$$ for fixed values of $\alpha \in {\cal M}^{[r]}, \beta \in M(\alpha)$ and $0 \leq w \leq D-r$. 
\begin{claim}~\label{clm:NWT2bound1}
Let $\alpha, \beta$ be monomials such that  $\alpha \in {\cal M}^{[r]}$ and $\beta \in M(\alpha)$ and $w$ be an integer such that $0 \leq w \leq D-r$. Then
$$\sum_{\substack{\gamma \in M(\alpha) \\ \gamma\neq \beta \\ |\text{Supp}(\gamma) \cap \text{Supp}(\beta)| = w }} \e_{V \leftarrow {\cal D}}[1_{\gamma | \beta}\cdot |S_m(\alpha, \gamma)\cap S_m(\alpha, \beta)|] \leq {k \choose w}\cdot n^{2(D-r-w)}\cdot p^{k-w}\cdot {N-2k+w \choose m}$$
\end{claim}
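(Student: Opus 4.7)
\textbf{Proof proposal for Claim~\ref{clm:NWT2bound1}.}

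The plan is to bound the three factors in each summand separately and then multiply. First, for fixed $\alpha,\beta$, I would count how many monomials $\gamma \in M(\alpha)$ satisfy $|\mathrm{Supp}(\gamma)\cap\mathrm{Supp}(\beta)|=w$. Recall from the definition of $NW_{n,D}$ that every such $\gamma$ is determined by a univariate polynomial $f_\gamma\in\F_{n^2}[z]$ of degree at most $D-1$; moreover, since $\alpha\in{\cal M}^{[r]}$ fixes values of $f_\gamma$ at the first $r$ points of $\F_n$, the pair $(\alpha,\gamma)$ corresponds to $f_\gamma$ agreeing with $f_\beta$ on those $r$ points. The extra condition $|\mathrm{Supp}(\gamma)\cap\mathrm{Supp}(\beta)|=w$ then says $f_\gamma=f_\beta$ on exactly $w$ of the remaining $k=n-r$ points. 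Choosing this set of $w$ points from $\mathrm{Supp}(\beta)$ gives the factor $\binom{k}{w}$, and once $r+w$ evaluations of $f_\gamma$ are pinned, the remaining $D-r-w$ coefficients of $f_\gamma$ are free in $\F_{n^2}$, giving at most $n^{2(D-r-w)}$ choices.

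Next, I would compute $\E_{V \leftarrow {\cal D}}[1_{\gamma|\beta}]$. By definition, $1_{\gamma|\beta}=1$ iff every variable in $\mathrm{Supp}(\gamma)\setminus\mathrm{Supp}(\beta)$ survives the random restriction. Since $|\mathrm{Supp}(\gamma)|=k$ and $|\mathrm{Supp}(\gamma)\cap\mathrm{Supp}(\beta)|=w$, this set has size exactly $k-w$, and because the variables in $\cal V$ are kept alive independently with probability $p$, the expectation equals $p^{k-w}$. Note that the event $1_{\gamma|\beta}$ is independent of the quantity $|S_m(\alpha,\gamma)\cap S_m(\alpha,\beta)|$ (the latter is a deterministic combinatorial count), so the expectation factors.

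For the third factor, $S_m(\alpha,\gamma)\cap S_m(\alpha,\beta)$ is precisely the set of multilinear monomials of degree $m$ over $\cal V$ whose support is disjoint from $\mathrm{Supp}(\beta)\cup\mathrm{Supp}(\gamma)$. Since $|\mathrm{Supp}(\beta)\cup\mathrm{Supp}(\gamma)|=2k-w$ by inclusion-exclusion, the number of such monomials is exactly $\binom{N-2k+w}{m}$.

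Multiplying the three bounds and summing over the $\binom{k}{w}\cdot n^{2(D-r-w)}$ choices of $\gamma$ yields the claim. None of these individual steps is hard; the main subtlety to get right is the bookkeeping in the counting step, namely verifying that specifying evaluations of a degree-$(D-1)$ polynomial over $\F_{n^2}$ at $r+w$ points of $\F_n$ leaves exactly $n^{2(D-r-w)}$ extensions, and that the condition ``$\gamma\neq\beta$'' only removes one such extension (which an inequality allows us to ignore). I expect no real obstacle; the estimate is essentially tight up to this single monomial.
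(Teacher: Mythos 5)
Your proof is correct and follows essentially the same route as the paper's: counting $\gamma$ via the corresponding degree-$(D\!-\!1)$ polynomial $f_\gamma$ with $\binom{k}{w}$ choices for the agreement set and $n^{2(D-r-w)}$ remaining degrees of freedom, computing $\E[1_{\gamma|\beta}]=p^{k-w}$ by independence, and identifying $|S_m(\alpha,\gamma)\cap S_m(\alpha,\beta)|=\binom{N-2k+w}{m}$. One small remark: the constraint $\gamma\neq\beta$ is in fact vacuous here (equality would force $w=k$, but $w\le D-r<k$), and the paper's map parametrizes the $n^{2(D-r-w)}$ extensions by evaluations at further points rather than by ``free coefficients,'' but the count is identical.
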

\begin{proof}
From the definition of $NW$, for any $\alpha \in {\cal M}^{[r]}$ and $\beta \in M(\alpha)$, $\alpha\beta$ is a monomial in $\text{Supp}(NW)$. Moreover, there is a unique univariate polynomial $f_{\beta}(Z) \in \F_{n^2}[Z]$ of degree at most $D-1$ such that $\alpha\cdot \beta = \prod_{i \in [n]}x_{i, f_{\beta}(i)}$. The summation above is over all $f_{\gamma} \in \F_{n^2}[Z]$ of degree at most $D-1$ satisfying
\begin{itemize}
\item $\prod_{i \in [r]} x_{i, f_{\gamma}(i)} = \alpha$
\item $|\{i \in [n]\setminus [r] : f_{\gamma}(i) = f_{\beta}(i)\}| = w$
\end{itemize}
The first condition above can also be written as $f_{\beta}(j) = f_{\gamma}(j)$ for every $j \in [r]$. Thus, $f_{\beta}$ agrees with $f_{\gamma}$ over all the elements in set $[r]$ and over $w$ elements of the set $[n]\setminus[r]$. Since any univariate polynomial of degree at most $D-1$ can be uniquely determined by its evaluations on any $D$ points, there is a one-one map from the set of $f_{\gamma}$ satisfying the constraints above to tuples $(U_1, U_2)$ where 
\begin{itemize}
\item $U_1 \subseteq [n]\setminus[r]$ is the set of $w$ elements in $[n]\setminus[r]$ where $f_{\beta}$ and $f_{\gamma}$ agree
\item $U_2$ is a set of input, value pairs for some $D-r-w$ points in  $[n]\setminus([r]\cup U_1)$
\end{itemize}
Therefore, the number of such $f_{\gamma}$ is at most ${k \choose w}\cdot n^{2(D-r-w)}$. We will now get an upper bound on the value of $\e_{V \leftarrow {\cal D}}[1_{\gamma | \beta}\cdot |S_m(\alpha, \gamma)\cap S_m(\alpha, \beta)|]$ for each such $\gamma$. Observe that $1_{\gamma | \beta}$ is $1$ when all the variables in the set $\text{Supp}(\gamma)\setminus\text{Supp}(\beta)$ are alive. This happens with probability equal to $p^{|\text{Supp}(\gamma)\setminus\text{Supp}(\beta)|} = p^{k-w}$.  The quantity $|S_m(\alpha, \gamma)\cap S_m(\alpha, \beta)|$ is the number of multilinear monomials of degree $m$ which are disjoint from both $\beta$ and $\gamma$ ( where $|\text{Supp}(\gamma)\setminus\text{Supp}(\beta)| = w$ ), and hence $|S_m(\alpha, \gamma)\cap S_m(\alpha, \beta)| = {N-2k+w \choose m}$ (Recall that we shift with all multilinear monomials of degree $m$ regardless of $V$). So, $$\e_{V \leftarrow {\cal D}}[1_{\gamma | \beta}\cdot |S_m(\alpha, \gamma)\cap S_m(\alpha, \beta)|] = p^{k-w}\cdot {N-2k+w \choose m}$$
Multiplying this by the bound on the number of terms in the summation completes the proof of the claim.
\end{proof}
We will now upper bound the sum $$\sum_{0 \leq w \leq D-r}\sum_{\substack{ \gamma \in M(\alpha) \\ \gamma\neq \beta \\ |\text{Supp}(\gamma) \cap \text{Supp}(\beta)| = w }}\e_{V \leftarrow {\cal D}}[1_{\alpha, \beta, \gamma}\cdot |S_m(\alpha, \gamma)\cap S_m(\alpha, \beta)|]$$
\begin{claim}~\label{clm:NWT2bound2}
Let $\alpha, \beta$ be monomials such that  $\alpha \in {\cal M}^{[r]}$ and $\beta \in M(\alpha)$. Then 
$$\sum_{0 \leq w \leq D-r}\sum_{\substack{ \gamma \in M(\alpha) \\ \gamma\neq \beta \\ |\text{Supp}(\gamma) \cap \text{Supp}(\beta)| = w }}\e_{V \leftarrow {\cal D}}[1_{\alpha, \beta, \gamma}\cdot |S_m(\alpha, \gamma)\cap S_m(\alpha, \beta)|] \leq n^{2d-2r + \epsilon r + 1} \cdot {N-2k \choose m}$$
\end{claim}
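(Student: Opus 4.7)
The plan is to take the per-$w$ bound already obtained in Claim~\ref{clm:NWT2bound1} and simply sum over $w\in\{0,1,\ldots,D-r\}$, showing that the resulting series is dominated by its initial terms. Substituting the values $D=\epsilon n/2+d$, $k=n-r$, and $p=n^{-\epsilon}$ into the bound from Claim~\ref{clm:NWT2bound1}, the summand becomes
$${k\choose w}\cdot n^{2d-2r+\epsilon r}\cdot n^{-(2-\epsilon)w}\cdot {N-2k+w\choose m}.$$
Factoring out $n^{2d-2r+\epsilon r}\cdot {N-2k\choose m}$, it therefore suffices to prove
$$\sum_{w=0}^{D-r}{k\choose w}\cdot n^{-(2-\epsilon)w}\cdot \frac{{N-2k+w\choose m}}{{N-2k\choose m}}\le n.$$

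To control the ratio of binomial coefficients I would invoke Lemma~\ref{lem:approx}. Since $N=n^3$, $k=n-r$, and $m=\tfrac{N}{2}\bigl(1-\tfrac{\ln n}{5\sqrt n}\bigr)$, both $N-2k$ and $N-2k-m$ are $\Theta(n^3)$, while $w\le D-r=O(n/\log n)$, so the hypothesis $(f+g)=o(a)$ of Lemma~\ref{lem:approx} is comfortably satisfied. Applying the lemma to numerator and denominator gives
$$\frac{{N-2k+w\choose m}}{{N-2k\choose m}}=\frac{(N-2k+w)!}{(N-2k)!}\cdot \frac{(N-2k-m)!}{(N-2k+w-m)!}\approx \left(\frac{N-2k}{N-2k-m}\right)^{\!w},$$
and $\frac{N-2k}{N-2k-m}$ is bounded above by an absolute constant $c$ (concretely, $c\approx 2$ for our parameters).

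Combining this with the crude estimate ${k\choose w}\le k^w/w!$ and $k\le n$, each summand is at most $(cn\cdot n^{-(2-\epsilon)})^w/w!=(c/n^{1-\epsilon})^w/w!$, so the total sum is bounded by $\exp(c/n^{1-\epsilon})=O(1)$. Since $\epsilon<0.01$ and $n$ is large, the whole sum is thus at most an absolute constant, which is certainly at most $n$, and the claim follows. The only delicate point is verifying that the approximation error absorbed into the symbol $\approx$ remains bounded by an absolute constant uniformly across $w$ in the summation range; this is where one must check the precise error term in Lemma~\ref{lem:approx} for each $w$, but because $(D-r)^2=O(n^2/\log^2 n)$ is much smaller than $N-2k-m=\Theta(n^3)$, the error is $O(1)$ throughout, so the bound goes through without additional work.
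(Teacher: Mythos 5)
Your proof is essentially correct and follows the same structure as the paper's: substitute parameters into the per-$w$ bound from Claim~\ref{clm:NWT2bound1}, use Lemma~\ref{lem:approx} to replace the binomial ratio $\frac{{N-2k+w\choose m}}{{N-2k\choose m}}$ by $\Theta(1)^w$, and sum over $w$. The one genuine difference is the last step: the paper observes that the (upper bound on the) $w$-th summand is a decreasing function of $w$ peaking at $w=0$ with value $n^{2D-2r}p^k=n^{2d-2r+\epsilon r}$, and crudely multiplies by the number $D$ of terms to pick up the extra factor of $n$; you instead dominate the series by $\sum_w (c/n^{1-\epsilon})^w/w! = \exp(c/n^{1-\epsilon}) = O(1)$, which is tighter (you do not even need the factor $n$). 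Both close the argument. Two small slips worth flagging, neither of which breaks anything: you assert $w \le D-r = O(n/\log n)$ and $(D-r)^2 = O(n^2/\log^2 n)$, but since $D = \epsilon n/2 + d$ with $\epsilon$ a positive constant, in fact $D - r = \Theta(n)$ and $(D-r)^2 = \Theta(n^2)$. Fortunately what matters for Lemma~\ref{lem:approx} is only that $w^2 = o(N-2k-m) = o(n^3)$, and $\Theta(n^2) = o(n^3)$, so the approximation error is still $O(1)$ uniformly over the range and your bound goes through.
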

\begin{proof}
Claim~\ref{clm:NWT2bound1} implies  that 
$$\sum_{0 \leq w \leq D-r}\sum_{\substack{ \gamma \in M(\alpha) \\ \gamma\neq \beta \\ |\text{Supp}(\gamma) \cap \text{Supp}(\beta)| = w }}\e_{V \leftarrow {\cal D}}[1_{\alpha, \beta, \gamma}\cdot |S_m(\alpha, \gamma)\cap S_m(\alpha, \beta)|]$$
is at most 
 $$\sum_{0 \leq w \leq D-r}  {k \choose w}\cdot n^{2(D-r-w)}\cdot p^{k-w}\cdot {N-2k+w \choose m}$$
 Let us set $g(w) =  {k \choose w}\cdot n^{2(D-r-w)}\cdot p^{k-w}\cdot {N-2k+w \choose m}$ and $g'(w) = g(w)/{N-2k \choose m}$. By our choice of parameters,  $w^2 = O(n^2)$, $k^2 = O(n^2)$ and $N = \Omega(n^2)$. So by Lemma~\ref{lem:approx}
$$ \frac{{N-2k+w \choose m}}{{N-2k \choose m}} \approx \left(\frac{N-2k}{N-m-2k}\right)^{w} $$
We also know from our choice of parameters that $\frac{N-2k}{N-m-2k} = \theta(1)$. So, $g'(w) = {k \choose w}\cdot n^{2(D-r-w)}\cdot p^{k-w}\cdot \theta(1)^{w}$. For $p = n^{-\epsilon}$ and $k = \theta(n)$, $g'(w) \leq k^{w}\cdot n^{2D-2r-2w}\cdot p^{k-w}\cdot {\theta(1)}^{w}$. In particular, $g'(w)$ is upper bounded by a decreasing function of $w$ and takes the maximum value $n^{2D-2r}p^k$ at $w = 0$. So
$$\sum_{0 \leq w \leq D-r}\sum_{\substack{ \gamma \in M(\alpha) \\ \gamma\neq \beta \\ |\text{Supp}(\gamma) \cap \text{Supp}(\beta)| = w }}\e_{V \leftarrow {\cal D}}[1_{\alpha, \beta, \gamma}\cdot |S_m(\alpha, \gamma)\cap S_m(\alpha, \beta)|] \leq D\cdot n^{2D-2r}\cdot p^k \cdot {N-2k \choose m}$$
Now, substituting $D = \frac{\epsilon n}{2} + d$, $p = n^{-\epsilon}$ and $k = n-r$, we get 
$$\sum_{0 \leq w \leq D-r}\sum_{\substack{ \gamma \in M(\alpha) \\ \gamma\neq \beta \\ |\text{Supp}(\gamma) \cap \text{Supp}(\beta)| = w }}\e_{V \leftarrow {\cal D}}[1_{\alpha, \beta, \gamma}\cdot |S_m(\alpha, \gamma)\cap S_m(\alpha, \beta)|] \leq n^{2d-2r + \epsilon r + 1} \cdot {N-2k \choose m}$$
\end{proof}
Putting this value back into the equality
$$\e_{V \leftarrow {\cal D}}[T_2|_V] = \sum_{\substack{\alpha \in {\cal M}^{[r]} \\ \beta \in M(\alpha)}} \e_{V \leftarrow {\cal D}}[1_{\alpha, \beta}] \sum_{0 \leq w \leq D-r}\sum_{\substack{\gamma \in M(\alpha) \\ \gamma\neq \beta \\ |\text{Supp}(\gamma) \cap \text{Supp}(\beta)| = w }}\e_{V \leftarrow {\cal D}}[1_{\gamma | \beta}\cdot |S_m(\alpha, \gamma)\cap S_m(\alpha, \beta)|]$$
we obtain 
$$\e_{V \leftarrow {\cal D}}[T_2|_V] \leq \sum_{\substack{\alpha \in {\cal M}^{[r]} \\ \beta \in M(\alpha)}} \e_{V \leftarrow {\cal D}}[1_{\alpha, \beta}] \cdot n^{2d-2r + \epsilon r + 1} \cdot {N-2k \choose m}$$
Now observe that $1_{\alpha, \beta} = 1$ when all the variables in the support of the monomial $\alpha\beta$ stay alive. This happens with probability exactly $p^n$ since $\alpha\cdot \beta$ is a multilinear monomial of degree equal to $n$. The number of pairs $\alpha, \beta$ such that $\alpha \in {\cal M}^{[r]} $ and $ \beta \in M(\alpha)$ is exactly equal to $n^{2D}$, since $|{\cal M}^{[r]}| = n^{2r}$ and for each such $\alpha$, the number of $\beta \in M(\alpha)$ equals $n^{2(D-r)}$. So, 
$$\e_{V \leftarrow {\cal D}}[T_2|_V] \leq p^n\cdot n^{2D}\cdot n^{2d-2r + \epsilon r + 1} \cdot {N-2k \choose m}$$
Plugging back the values of $p$ and $D$, we get Lemma~\ref{lem:NWT20}. 

\subsection{Expected values of $T_3(NW_{n,D}|_V)$}
We will again proceed as in the above case, but we  have to be a little more careful. 
$$\e_{V \leftarrow {\cal D}}[T_3|_V] = \sum_{\substack{\alpha_1, \alpha_2 \in {\cal M}^{[r]} \\ \beta_1 \in M(\alpha_1) \\ \beta_2 \in M(\alpha_2)\\(\alpha_1, \beta_1) \neq (\alpha_2, \beta2)}} \e_{V \leftarrow {\cal D}}[1_{\alpha_1, \alpha_2, \beta_1, \beta_2}\cdot |A_m(\alpha_1, \beta_1)\cap A_m(\alpha_2, \beta_2)|]$$

We will again split the sum based upon the number of agreements between $\alpha_1, \alpha_2$ and the number of agreements between $\beta_1, \beta_2$. We can rewrite $\e_{V \leftarrow {\cal D}}[T_3|_V]$ as 

\begin{equation*}
\begin{split}
\e_{V \leftarrow {\cal D}}[T_3|_V] = \sum_{\substack{0 \leq w_1 \leq r , 0 \leq w_2 \leq k \\ w_1+w_2 \leq D}}{\sum_{\substack{\alpha_1, \alpha_2 \in {\cal M}^{[r]} \\ \beta_1 \in M(\alpha_1) \\ \beta_2 \in M(\alpha_2)\\ 
|\text{Supp}(\alpha_1) \cap \text{Supp}(\alpha_2)| = w_1 \\ |\text{Supp}(\beta_1) \cap \text{Supp}(\beta2)| = w_2}} \e_{V \leftarrow {\cal D}}[1_{\alpha_1, \alpha_2, \beta_1, \beta_2}\cdot |A_m(\alpha_1, \beta_1)\cap A_m(\alpha_2, \beta_2)|]}
\end{split}
\end{equation*}

Observe that we can drop the constraint $(\alpha_1, \beta_1) \neq (\alpha_2, \beta_2)$ since the sum of  number of agreements between $\alpha_1$ and $\alpha_2$ and between $\beta_1$ and $\beta_2$ is at most $D$ which is strictly smaller than $n$.
Rearranging the order of summation, we get
\begin{equation}~\label{eqn:NWT3}
\begin{split}
\e_{V \leftarrow {\cal D}}[T_3|_V] &= \sum_{\substack{\alpha_1\in {\cal M}^{[r]} \\ \beta_1 \in M(\alpha)}} \e_{V \leftarrow {\cal D}}[1_{\alpha_1, \beta_1}] \\
&\times \sum_{\substack{0 \leq w_1 \leq r , 0 \leq w_2 \leq k \\ w_1+w_2 \leq D}}{\sum_{\substack{\alpha_2 \in {\cal M}^{[r]} \\  \beta_2 \in M(\alpha_2)\\ 
|\text{Supp}(\alpha_1) \cap \text{Supp}(\alpha_2)| = w_1 \\ |\text{Supp}(\beta_1) \cap \text{Supp}(\beta2)| = w_2}} \e_{V \leftarrow {\cal D}}[1_{\alpha_2|\alpha_1}\cdot 1_{\beta_2|\beta_1}\cdot |A_m(\alpha_1, \beta_1)\cap A_m(\alpha_2, \beta_2)|]}
\end{split}
\end{equation}
where $1_{\alpha_2|\alpha_1}$ is the event $1_{\alpha'}$ where $\alpha' = {\prod_{X \in \text{Supp}(\alpha_2) \setminus \text{Supp}(\alpha_1)} X}$ and similarly for $1_{\beta_2|\beta_1}$.
In the claim below, we  upper bound the expression 
$${\sum_{\substack{\alpha_2 \in {\cal M}^{[r]} \\  \beta_2 \in M(\alpha_2)\\ 
|\text{Supp}(\alpha_1) \cap \text{Supp}(\alpha_2)| = w_1 \\ |\text{Supp}(\beta_1) \cap \text{Supp}(\beta2)| = w_2}} \e_{V \leftarrow {\cal D}}[1_{\alpha_2|\alpha_1}\cdot 1_{\beta_2|\beta_1}\cdot |A_m(\alpha_1, \beta_1)\cap A_m(\alpha_2, \beta_2)|]}$$ for any fixed $\alpha_1 \in {\cal M}^{[r]}, \beta_1 \in M(\alpha_1), w_1, w_2$.

\begin{claim}~\label{claim:bound1}
Let $\alpha_1, \beta_1$ be monomials such that  $\alpha_1 \in {\cal M}^{[r]}$ and $\beta_1 \in M(\alpha_1)$. Let $0 \leq w_1 \leq r$ and $0\leq w_2\leq k$ be positive integers such that $w_1+w_2 \leq D$. Then

\begin{equation*}
\begin{split}
\sum_{\substack{\alpha_2 \in {\cal M}^{[r]} \\  \beta_2 \in M(\alpha_2)\\ 
|\text{Supp}(\alpha_1) \cap \text{Supp}(\alpha_2)| = w_1 \\ |\text{Supp}(\beta_1) \cap \text{Supp}(\beta2)| = w_2}} & {\e_{V \leftarrow {\cal D}}[1_{\alpha_2|\alpha_1}\cdot 1_{\beta_2|\beta_1} \cdot |A_m(\alpha_1, \beta_1)\cap A_m(\alpha_2, \beta_2)|]} \\ 
&\leq {r \choose w_1}\cdot {k \choose w_2}\cdot n^{2(D-w_1-w_2)}\cdot p^{k+r-w_1-w_2}\cdot {N-2k+w_2 \choose m-k+w_2} 
\end{split}
\end{equation*}

\end{claim}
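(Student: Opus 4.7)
The plan is to factor every term in the summation as the product of three independent contributions and then bound each separately: (i) the probability that the indicator product $1_{\alpha_2|\alpha_1}\cdot 1_{\beta_2|\beta_1}$ equals $1$, (ii) a pointwise upper bound on $|A_m(\alpha_1,\beta_1)\cap A_m(\alpha_2,\beta_2)|$ that depends only on $w_2$, and (iii) a count of the number of pairs $(\alpha_2,\beta_2)$ in the range of summation. Once these three pieces are assembled, the claimed inequality will drop out immediately by multiplying them together, because the first two quantities depend on $(\alpha_2,\beta_2)$ only through the overlap parameters $w_1,w_2$.

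For (i), note that $1_{\alpha_2|\alpha_1}\cdot 1_{\beta_2|\beta_1}$ is $1$ precisely when every variable in $\mathrm{Supp}(\alpha_2\beta_2)\setminus\mathrm{Supp}(\alpha_1\beta_1)$ stays alive. Since $\alpha_1\beta_1$ and $\alpha_2\beta_2$ are multilinear of degree $n=r+k$ with overlap $w_1+w_2$, this set has size $n-w_1-w_2 = (k+r) - w_1 - w_2$, so by independence of the variables the expected value is exactly $p^{k+r-w_1-w_2}$. For (ii), recall that every element of $A_m(\alpha_i,\beta_i)$ is a multilinear monomial of degree $m+k$ whose support contains $\mathrm{Supp}(\beta_i)$. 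A common element must therefore be multilinear of degree $m+k$ and contain $\mathrm{Supp}(\beta_1)\cup\mathrm{Supp}(\beta_2)$, a fixed set of size $2k-w_2$. The remaining $m-k+w_2$ variables in its support can be chosen in at most $\binom{N-2k+w_2}{m-k+w_2}$ ways, yielding the stated bound.

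The main obstacle, and the combinatorially substantive step, is (iii): counting pairs $(\alpha_2,\beta_2)$ with the prescribed overlap pattern. Here I would use the Nisan–Wigderson/Reed–Solomon structure as in the proof of Claim~7.1. Each pair $(\alpha_i,\beta_i)$ with $\beta_i\in M(\alpha_i)$ corresponds uniquely to a polynomial $f_i\in\F_{n^2}[z]$ of degree at most $D-1$ via $\alpha_i\beta_i=\prod_{j\in[n]}x_{j,f_i(j)}$, and the support overlap conditions translate precisely into $|\{j\in[r]:f_1(j)=f_2(j)\}|=w_1$ and $|\{j\in[n]\setminus[r]:f_1(j)=f_2(j)\}|=w_2$. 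I would then upper-bound the count by first choosing which $w_1$ positions of $[r]$ and which $w_2$ positions of $[n]\setminus[r]$ host the agreements, giving a factor of $\binom{r}{w_1}\binom{k}{w_2}$; the values of $f_2$ on these $w_1+w_2$ points are then forced, and any univariate polynomial of degree at most $D-1$ over $\F_{n^2}$ is determined by $D$ evaluations, so $f_2$ has at most $(n^2)^{D-w_1-w_2}=n^{2(D-w_1-w_2)}$ extensions. (The constraint $w_1+w_2\le D$ is exactly what makes this parameter count nonnegative; note that $f_1\neq f_2$ is automatic in the regime of interest since otherwise $w_1+w_2=n\gg D$.) This yields $\binom{r}{w_1}\binom{k}{w_2}n^{2(D-w_1-w_2)}$ pairs. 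Multiplying this count with the probability from (i) and the size bound from (ii) gives exactly the bound asserted in the claim.
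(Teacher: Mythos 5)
Your proposal is correct and follows essentially the same route as the paper: the same three-factor decomposition (survival probability $p^{k+r-w_1-w_2}$ via independence across $\alpha$- and $\beta$-blocks, the $\binom{N-2k+w_2}{m-k+w_2}$ bound on the intersection of the $A_m$ sets as multilinear degree-$(m+k)$ monomials containing $\mathrm{Supp}(\beta_1)\cup\mathrm{Supp}(\beta_2)$, and the $\binom{r}{w_1}\binom{k}{w_2}n^{2(D-w_1-w_2)}$ count of admissible $f_2$ via the fact that a degree-$(D-1)$ polynomial is determined by $D$ evaluations). No gaps.
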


\begin{proof}
Recall that every monomial in $NW$ corresponds to a univariate polynomial $f \in \F_{n^2}[Z]$ of degree at most $D-1$. So, every pair $\alpha_1 \in {\cal M}^{[r]}$ and $\beta_1 \in M(\alpha_1)$  satisfies $\alpha_1\beta_1 = \prod_{i \in [n]}x_{i, f_1(i)}$ for $f_1 \in \F_{n^2}[Z]$ of degree at most $D-1$. For a fixed $\alpha_1 \in {\cal M}^{[r]}$ and $\beta_1 \in M(\alpha)$ and $w_1, w_2$, the summation above runs over precisely the set of polynomials $f_2 \in \F_{n^2}[Z]$ of degree at most $D-1$  that satisfy the following two properties:
\begin{itemize}
\item $|\{i \in [r] : f_1(i) = f_2(i)\}| = w_1$
\item $|\{i \in [n]\setminus[r] : f_1(i) = f_2(i)\}| = w_2$
\end{itemize}
Since every polynomial of degree $D-1$ is uniquely determined by its evaluation at some $D$ points, the number polynomial $f_2$ satisfying the above properties equals ${r \choose w_1}\cdot{k \choose w_2}\cdot n^{2(D-w_1-w_2)}$. This follows from the observation there is an one-one map from the set of polynomials $f_2$ satisfying the above properties and the set of tuples $(U_1, U_2, U_3)$, where 
\begin{itemize}
\item $U_1 \subseteq [r]$ is the set of $w_1$ elements of $[r]$ where $f_1$ and $f_2$ agree
\item $U_2 \subseteq [n]\setminus[r]$ is the set of $w_2$ elements of $[n]\setminus[r]$ where $f_1$ and $f_2$ agree
\item $U_3$ specifies the evaluation of $f_2$ on some $D-w_1-w_2$ elements of $[n]\setminus(U_1\cup U_2)$.
\end{itemize}
Thus, the number of summands in the sum equals ${r \choose w_1}\cdot{k \choose w_2}\cdot n^{2(D-w_1-w_2)}$.

Now observe that for every such fixed $\alpha_1, \alpha_2, \beta_1, \beta_2$, $1_{\alpha_2|\alpha_1}$ is $1$ when all the variables in $\text{Supp}(\alpha_2)\setminus \text{Supp}(\alpha_1)$ survive the random restriction procedure and it is zero otherwise. So, $1_{\alpha_2|\alpha_1}$ is $1$ with probability $p^{|\text{Supp}(\alpha_2)\setminus \text{Supp}(\alpha_1)|} = p^{r-w_1}$. Similarly, $1_{\beta_2|\beta_1}$ is $1$ with probability $p^{k-w_2}$. Moreover, $1_{\alpha_2|\alpha_1}$ and $1_{\beta_2|\beta_1}$ are independent events. Also, observe that $|A_m(\alpha_1, \beta_1)\cap A_m(\alpha_2, \beta_2)|$ is upper bounded by the number of multilinear monomials $\gamma$ of degree $m$ such that $\gamma\cdot\beta_1$ and $\gamma\cdot\beta_2$ are both multilinear and $\gamma\cdot\beta_1= \gamma\cdot\beta_2$. This is equal to ${N-2k+w_2 \choose m-(k-w_2)}$. Hence, 
$$\e_{V \leftarrow {\cal D}}[1_{\alpha_2|\alpha_1}\cdot 1_{\beta_2|\beta_1}\cdot |A_m(\alpha_1, \beta_1)\cap A_m(\alpha_2, \beta_2)|] \leq p^{r-w_1}\cdot p^{k-w_2} \cdot {N-2k+w_2 \choose m-(k-w_2)}$$
The bound in the lemma follows by multiplying the above bound with the upper bound on the number of summands in the summation. 
\end{proof}
Using the bound in Claim~\ref{claim:bound1}, we  now upper bound the expression 
$$\sum_{\substack{0 \leq w_1 \leq r , 0 \leq w_2 \leq k \\ w_1+w_2 \leq D}}{\sum_{\substack{\alpha_2 \in {\cal M}^{[r]} \\  \beta_2 \in M(\alpha_2)\\ 
|\text{Supp}(\alpha_1) \cap \text{Supp}(\alpha_2)| = w_1 \\ |\text{Supp}(\beta_1) \cap \text{Supp}(\beta2)| = w_2}} \e_{V \leftarrow {\cal D}}[1_{\alpha_2|\alpha_1}\cdot 1_{\beta_2|\beta_1}\cdot |A_m(\alpha_1, \beta_1)\cap A_m(\alpha_2, \beta_2)|]}$$
\begin{claim}~\label{claim:bound2}
Let $\alpha_1, \beta_1$ be monomials such that  $\alpha_1 \in {\cal M}^{[r]}$ and $\beta_1 \in M(\alpha_1)$. Then 
$$\sum_{\substack{0 \leq w_1 \leq r , 0 \leq w_2 \leq k \\ w_1+w_2 \leq D}}{\sum_{\substack{\alpha_2 \in {\cal M}^{[r]} \\  \beta_2 \in M(\alpha_2)\\ 
|\text{Supp}(\alpha_1) \cap \text{Supp}(\alpha_2)| = w_1 \\ |\text{Supp}(\beta_1) \cap \text{Supp}(\beta2)| = w_2}} \e_{V \leftarrow {\cal D}}[1_{\alpha_2|\alpha_1}\cdot 1_{\beta_2|\beta_1}\cdot |A_m(\alpha_1, \beta_1)\cap A_m(\alpha_2, \beta_2)|]} \leq n^{2d+2}\cdot {N-2k \choose m-k}$$
\end{claim}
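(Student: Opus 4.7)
The plan is to start from the per-term bound of Claim~\ref{claim:bound1} and then simply sum it over the ranges of $w_1, w_2$, showing that (i) each term is at most $n^{2d} \cdot {N-2k \choose m-k}$ after appropriate estimation, and (ii) the maximum is attained essentially at $w_1 = w_2 = 0$, so the total loss from summation is only a polynomial factor of at most $n^{2}$.

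Concretely, first I would substitute the bound from Claim~\ref{claim:bound1} and pull the factor ${N-2k \choose m-k}$ out of each summand. This requires estimating the ratio
\[
\frac{{N-2k+w_2 \choose m-k+w_2}}{{N-2k \choose m-k}} = \prod_{i=1}^{w_2} \frac{N-2k+i}{m-k+i}.
\]
Since $k = n - r = O(n)$ and $m$, $N-m$ are both $\Theta(N) = \Theta(n^3)$, we have $\frac{N-2k}{m-k} = \Theta(1)$, and in fact by Lemma~\ref{lem:approx} the whole ratio is $\Theta(1)^{w_2}$. Thus, dividing the bound from Claim~\ref{claim:bound1} by ${N-2k \choose m-k}$, each term $g'(w_1,w_2)$ is bounded by
\[
{r \choose w_1}\,{k \choose w_2}\,n^{2(D-w_1-w_2)}\,p^{k+r-w_1-w_2}\,\Theta(1)^{w_2}
\leq n^{w_1+w_2}\cdot n^{2D-2(w_1+w_2)}\cdot n^{-\epsilon(k+r-w_1-w_2)}\cdot \Theta(1)^{w_2},
\]
using $r, k \leq n$ and $p = n^{-\epsilon}$.

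Next I would analyze the exponent of $n$ as a function of $w_1+w_2$: it equals
\[
2D - \epsilon(k+r) + (w_1+w_2)(\epsilon - 1) + o(1),
\]
which, because $\epsilon < 1$, is a decreasing function of $w_1+w_2$. Hence the maximum is attained at $w_1 = w_2 = 0$ and equals $n^{2D - \epsilon(k+r)}$. Substituting $D = \tfrac{\epsilon n}{2} + d$ and $k + r = n$, this becomes $n^{2d}$. So every term $g'(w_1, w_2)$ is at most $n^{2d}$ (up to absolute constants absorbed into the symbol $\approx$).

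Finally, the outer summation has at most $(r+1)(k+1) \leq n^2$ terms, so
\[
\sum_{\substack{0 \leq w_1 \leq r,\, 0 \leq w_2 \leq k\\ w_1 + w_2 \leq D}} g'(w_1, w_2) \leq n^{2d+2},
\]
which, after multiplying back by ${N-2k \choose m-k}$, gives exactly the claimed bound. I do not expect any real obstacle here: the only mildly delicate point is the estimation of the binomial ratio, which is a direct application of Lemma~\ref{lem:approx} given that $w_2^2 = O(N)$. The rest is monotonicity of the exponent in $w_1 + w_2$ and a coarse count of summands.
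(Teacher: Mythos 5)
Your proof is correct and follows essentially the same approach as the paper: invoke Claim~\ref{claim:bound1}, estimate the binomial ratio $\binom{N-2k+w_2}{m-k+w_2}/\binom{N-2k}{m-k}$ via Lemma~\ref{lem:approx}, show the resulting summand is a decreasing function of the indices so the maximum $n^{2d}$ is attained at $w_1=w_2=0$, and multiply by a coarse count of at most $n^2$ summands. The only (cosmetic) difference from the paper is that you treat the double sum at once as a function of $w_1+w_2$, whereas the paper factors it into two one-dimensional sums and bounds each by its number of terms ($D$ and $r$ respectively); both yield the same $n^{2d+2}\binom{N-2k}{m-k}$. One small imprecision worth noting: the factor $\Theta(1)^{w_2}$ coming from the binomial ratio is not an additive $o(1)$ in the exponent — it is $w_2\cdot O(1/\log n)$, which can be $\Theta(n/\log n)$ when $w_2$ is large — but it is dominated by the $(\epsilon-1)(w_1+w_2)$ term (since $\epsilon<0.01$), so monotonicity and the conclusion that each term is $\leq n^{2d}$ still hold.
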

\begin{proof}
From Claim~\ref{claim:bound1}, it follows that 
$$\sum_{\substack{0 \leq w_1 \leq r , 0 \leq w_2 \leq k \\ w_1+w_2 \leq D}}{\sum_{\substack{\alpha_2 \in {\cal M}^{[r]} \\  \beta_2 \in M(\alpha_2)\\ 
|\text{Supp}(\alpha_1) \cap \text{Supp}(\alpha_2)| = w_1 \\ |\text{Supp}(\beta_1) \cap \text{Supp}(\beta2)| = w_2}}  \e_{V \leftarrow {\cal D}}[1_{\alpha_2|\alpha_1}\cdot 1_{\beta_2|\beta_1}\cdot |A_m(\alpha_1, \beta_1)\cap A_m(\alpha_2, \beta_2)|]}$$ 
is at most 
$$\sum_{\substack{0 \leq w_1 \leq r , 0 \leq w_2 \leq k \\ w_1+w_2 \leq D}} {r \choose w_1}\cdot {k \choose w_2}\cdot n^{2(D-w_1-w_2)}\cdot p^{k+r-w_1-w_2}\cdot {N-2k+w_2 \choose m-k+w_2}
 $$
 By separating out the parts dependent upon $w_1$ and $w_2$, the expression above is equal to 
$$p^{k+r}\cdot n^{2(D)} \cdot \sum_{0 \leq w_1\leq r} {r \choose w_1}\cdot n^{-2w_1}p^{-w_1}\cdot\sum_{0\leq w_2 \leq D-w_1} {k \choose w_2}\cdot n^{-2w_2} \cdot p^{-w_2}\cdot {N-2k+w_2 \choose m-k+w_2}$$
Let $g(w_2) = {k \choose w_2}\cdot n^{-2w_2} \cdot p^{-w_2}\cdot {N-2k+w_2 \choose m-k+w_2}$. Let us consider the expression $g'(w_2) = g(w_2)/{N-2k \choose m-k}$. By our choice of parameters,  $w_1^2 = O(n^2)$, $k^2 = O(n^2)$ and $N = \Omega(n^2)$. So by Lemma~\ref{lem:approx}
$$ \frac{{N-2k+w_2 \choose m-k+w_2}}{{N-2k \choose m-k}} \approx \left(\frac{N-2k}{m-k}\right)^{w_2} $$
 We also know from our choice of parameters that $\frac{N-2k}{m-k} = \theta(1)$. So, $g'(w_2) = {k \choose w_2}\cdot n^{-2w_2} \cdot p^{-w_2} \cdot {\theta(1)}^{w_2}$. For $p = n^{-\epsilon}$ and $k = \theta(n)$, $g'(w_2) \leq k^{w_2}\cdot n^{\epsilon w_2-2w_2}\cdot {\theta(1)}^{w_2}$. In particular, $g'(w_2)$ is upper bounded by a decreasing function of $w_2$ and takes the maximum value $1$ at $w = 0$. Hence, 
 $$\sum_{0\leq w_2\leq D-w_1} g(w_2) \leq D\cdot {N-2k \choose m-k}$$
 By a similar reasoning, 
 $$\sum_{0 \leq w_1\leq r} {r \choose w_1}\cdot n^{-2w_1}p^{-w_1} \leq r\cdot 1 $$
 So
$$ \sum_{\substack{0 \leq w_1 \leq r , 0 \leq w_2 \leq k \\ w_1+w_2 \leq D}} {r \choose w_1}\cdot {k \choose w_2}\cdot n^{2(D-w_1-w_2)}\cdot p^{k+r-w_1-w_2}\cdot {N-2k+w_2 \choose m-k+w_2}$$
is upper bounded by 
$$ p^{k+r}\cdot n^{2D}\cdot D\cdot {N-2k \choose m-k} \cdot r$$
For $k = n-r$, $D = \frac{\epsilon n}{2} + d$ and $p = n^{-\epsilon}$, this is at most $$n^{2d+2}\cdot {N-2k \choose m-k}$$
\end{proof}

Now, plugging this bound back into Equation~\ref{eqn:NWT3}, we get
\begin{eqnarray*}
\e_{V \leftarrow {\cal D}}[T_3|_V] &\leq \sum_{\substack{\alpha_1\in {\cal M}^{[r]} \\ \beta_1 \in M(\alpha)}} \e_{V \leftarrow {\cal D}}[1_{\alpha_1, \beta_1}]\cdot n^{2d+2}\cdot {N-2k \choose m-k} \\
\end{eqnarray*}
Now, $1_{\alpha_1, \beta_1} = 1$ when all the variables in the supports of $\alpha$ and $\beta$ are alive. This happens with probability exactly $p^n$ since $\alpha\beta$ is a multilinear monomial of degree $n$. Also, there are $n^{2r}$ possible $\alpha$ and for each of these, there are exactly $n^{2(D-r)}$ many $\beta$ in $M(\alpha)$. So, 
$$ \e_{V \leftarrow {\cal D}}[T_3|_V] \leq p^n\cdot n^{2r} \cdot n^{2(D-r)} \cdot n^{2d+2}\cdot {N-2k \choose m-k} $$
Putting in $D = \frac{\epsilon n}{2} + d$ and $p = n^{-\epsilon}$, we get 
$$ \e_{V \leftarrow {\cal D}}[T_3|_V] \leq  n^{4d+2}\cdot {N-2k \choose m-k}$$
So, we obtain Lemma~\ref{lem:NWT30}. 

\section{Lower bound for $\imm$}\label{sec:imm}
In this section, we  prove the lower bound on the size of homogeneous $\spsp$ circuit computing an entry in the product of generic matrices. The proof is similar in spirit to the proof of lower bound for the Nisan-Wigderson polynomials. In fact, the choice of parameters in this proof is strongly motivated by the choice of parameters in the earlier proof.

We will first introduce some notation needed for the proof. 

\subsection{Notation}

Let $\imm$ be the the polynomial computed by the $(1,1)$ coordinate of the product of $n$ different $\n \times \n$ matrices, where the entries of the matrices are distinct variables. Thus there are $\n^2 \times n$ variables in total.  
 
Let $\n,n, r',k'$ be positive integers such that  and $(k' + 2)r' = n$.
Let $\imm^{\ast} (\n,n, r',k')$ be an $n$-tuple of $\n \times \n$ matrices of the following form:
The $n$ tuples will be composed of $r'$ blocks, each block having $k'+2$ matrices. In each block, the first matrix will be a special matrix, the next $k'$ will be regular matrices, and the last one will be the all $1$s matrix that we call $J$. In the $i$th block, we call the special matrix $Y^{(i)}$, the regular matrices are $X^{(i,1)}, X^{(i,2)}, \ldots, X^{(i,k')}$, and the last all $1$s matrix is $J^{(i)}$. In the $n$-tuple, we  arrange the matrices of the first block first, in the order described above, then the matrices of the second block, and so on. Thus the $i$th block, which we call $B^{(i)}$ is a $(k'+2)$-tuple of the form $$\left(Y^{(i)},   X^{(i,1)}, X^{(i,2)}, \ldots, X^{(i,k')}, J^{(i)} \right),$$ and the $n$-tuple $\imm^{\ast} (\n,\k, r',k')$  is a concatenation of the different blocks $B^{(i)}$, for $i \in [r']$.

Thus $\imm^{\ast} (\n,n, r',k')$  is of the following form: $$\left( Y^{(1)},   X^{(1,1)}, X^{(1,2)}, \ldots, X^{(1,k')}, J^{(1)}, \ldots \ldots \ldots , Y^{(r')},   X^{(r',1)}, X^{(r',2)}, \ldots, X^{(r',k')}, J^{(r')}  \right) .$$

We will select the parameters $ (\n,n, r',k')$ right in the beginning and the use these fixed parameters for the rest of the paper. Thus for ease of notation we will often suppress the parameters and let $\imm^{\ast} = \imm^{\ast} (\n,n, r',k')$. 


For any matrix $M$, we  let $m_{i,j}$ be the variable in the $(i,j)$th entry of $M$. We will use capital letters to denote the name of the matrix and the small letter to denote the variables in the matrix. For instance, the $(i,j)$th entry of the matrix $X^{(u,v)}$ is $x^{(u,v)}_{i,j}$.

Let $\imm^{\times}$ be the matrix which is the product of all $n$ matrices in $\imm^{\ast} (\n,n, r',k')$ in the order given above. 

For $i,j \in [\n],$ let $P_{ij}$ be the polynomial computed at the $(i,j)$ entry of $\imm^{\times}$.

For our proof, we will initially fix a value of $\tilde{n}$ and $n$ and work with it. So for the rest of the paper, we will supress the subscript $\tilde{n}, n$ from our notations.

Let $\simm$ be $\supp(P_{11})$.


Let $\simm_X$ be the set of monomials obtained from $\simm$ after setting all the variables in the special matrices to $1$.  (When we talk about the set of monomials obtained, we disregard the information in the coefficients of the monomials obtained, and just treat them all to be monic.)

Let $\simm_X^{(i)}$ be the set of monomials obtained from $\simm$ after setting all the variables in all the matrices except the regular matrices of the $i$th block to $1$. (Again, we disregard the coefficients of the monomials and treat them as monic monomials.)

Notice that $$\simm_X = \prod_{i \in [r']}\simm_X^{(i)},$$ where every element of the product set is identified with the monomial formed by the product of the monomials from the individual sets. 

Let $\simm_Y$ be the set of monomials (all monomials are treated as monic in the set) obtained from $\simm$ after setting all the variables in the regular matrices to $1$.  Notice that $|\simm_Y| = (\n^2)^{r'}$, since we get a monomial for every $r'$-tuple of variables where the $i$th element is a variable in $Y^{(i)}$.

For $\alpha \in \simm_Y$, let $\simm(\alpha)$ be the set of monomials $\beta$ in $\simm_X$ such $\alpha \cdot \beta$ is an element of $\simm$. 

For $\alpha \in \simm_Y$, let $\simm(\alpha)^{(i)} $ be the set of monomials in $\simm(\alpha)$ obtained after all the variables that are not in the $i$th block have been set to $1$.

\subsection{Choice of parameters}\label{sec:imm-params}

We will pick the following choice of parameters:

\begin{enumerate}
\item $n$. (This denotes the total number of matrices in $\imm^{\ast}$)
\item $r = \sqrt n$.  (This will be the order of partial derivatives in the complexity measure) 
\item $\n = n^5$. (This is the dimension of the matrices)
\item $s = \frac{\sqrt{n}}{64}$. (This indicates the target support of a product gate in the circuit after random restrictions)
\item $\Lambda = 32$. (This is a parameter used in the proof)
\item $r' = \Lambda r$. (This is the number of blocks)
\item $k = n-2r'$. (This is the number of regular matrices.)
\item  $k' = k/r'$. (This is the number of regular matrices per block)
\item  $N = (n-r')\cdot \n^2$.   (This is the total number of variables in $\imm$)
\item $\Gamma$ is a parameter (it will be a number very close to 2) which is chosen so that the following equalities hold. Set $m = \frac{N}{2}\left(1 - \frac{\ln n}{\Gamma \sqrt n}\right)$. Then choose $\Gamma$ so that  $$n^r \cdot \left(\frac{N}{N-m}\right)^k = \left(\frac{N}{m}\right)^k.$$

Thus $$n^r = \left(\frac{N-m}{m}\right)^k.$$ Using the choices of $r = \sqrt n, k = n-2r'$ and $m = \frac{N}{2}\left(1 - \frac{\ln n}{\Gamma \sqrt n}\right) $, we get that $$n = \left(\frac{\left(1 + \frac{\ln n}{\Gamma \sqrt n}\right) }{\left(1 - \frac{\ln n}{\Gamma \sqrt n}\right)}\right)^{\sqrt n -(2/\Lambda)} = n^{\frac{2+o(1)}{\Gamma}}.$$
So,  $\Gamma = 2 + o(1)$. 
\item  $m = \frac{N}{2}\left(1 - \frac{\ln n}{\Gamma \sqrt n}\right)$. (This is the degree of the multilinear shifts)
\item $D = N/(N-m)$. Thus $D^k =  \left(\frac{N}{(N-m)}\right)^k$. (This is an indicator of the number of monomials in the support of the resulting polynomial after applying a restriction from our  distribution and taking partial derivative with respect to a suitable monomial. Note that $D$ is a number slightly smaller than $2$ for our choice of $m$) 
\item $\eta$ is a parameter chosen so that $$n^{\eta\cdot r'} \cdot 2^{k - (2\log n + 1)r'} = D^k$$
Thus $$ \left(\frac{n^{\eta-2}}{2}\right)^{r'} \cdot 2^k = D^k = 2^k \cdot \left(\frac{1}{1+\frac{\ln n}{\Gamma \sqrt n}}\right)^k .$$
Thus $$ \frac{n^{\eta-2}}{2} =  \left(\frac{1}{1+\frac{\ln n}{\Gamma \sqrt n}}\right)^{k'} = \left(\frac{1}{1+\frac{\ln n}{\Gamma \sqrt n}}\right)^{(1+o(1))\sqrt n/\Lambda} = n^{- \frac{1+o(1)}{\Gamma\Lambda}}.$$
Thus $\eta = 2 - \frac{1+o(1)}{\Gamma\Lambda}$. 
\end{enumerate}


\subsection{Random restrictions}

The total number of variables $N$ in $\imm$ is $N = \n^2 \times (n-r')$. There are $(\n^2 \times r')$ $y$-variables and $(\n^2 \times k'r')$ $x$-variables. Let this total set of variables be $\cal V$. 
We will randomly set certain of these variables to zero, to get a distribution over {\it restrictions} of $\imm$. 
We will now define a distribution $\cal D$ over subsets $V \subset \cal V$. The random restriction procedure will sample $V \gets \cal D$ and then keep only those variables ``alive" that come from $V$ and set the rest to zero.

For each matrix in $\imm^{\ast}$ we  specify a random procedure for deciding which variables to set to zero, and then we will apply this procedure independently for each matrix. 

\paragraph{Random restriction for special matrices}
\begin{itemize}
\item
For each special matrix $Y^{(i)}$, choose $\n^{3/4}$ entries uniformly at random from the first row and keep those nonzero. Set all other variables to zero. 
\end{itemize}

\paragraph{Random restriction for regular matrices}
Let $2 > \eta > 1$ be the parameter that was set in item 13 above.  
\begin{itemize}
\item For each regular matrix of the form $X^{(i,1)}$ (i.e. the first regular matrix in any block), in each row, pick $n^{\eta}$ distinct variables (uniformly at random), and keep them nonzero. Set the remaining variables to zero. Do this independently for each row. 
\item For each regular matrix of the form $X^{(i,j)}$, where $j>k'- 2 \log n$ (i.e. the last $2\log n$ regular matrices in any block), in each row, pick $1$ distinct variable (uniformly at random), and keep it nonzero. Set the remaining variables to zero. Do this independently for each row. 
\item For each regular matrix of the form $X^{(i,j)}$, where $2 \leq j \leq k'- 2 \log n$, in each row, pick $2$ distinct variable (uniformly at random), and keep them nonzero. Set the remaining variables to zero. Do this independently for each row. 
\end{itemize}

In this manner, independently for each matrix in $\imm^{\ast}$ we only keep a random subset of variables alive, and thus we get a distribution $\cal D$ over subsets $V \subset \cal V$ where $V$ is the total set of alive variables.  
Notice that every $V \gets \cal D$ is such that $$|V| = r'\cdot (\n^{3/4} + \n \cdot n^{\eta} + (k'-2\log n -1)\cdot\n\cdot 2 + 2\log n\cdot\n).$$

\paragraph{Notation for restricted matrices}
For each random subset of variables $V \gets \cal D$ obtained in this way, let $\immv^\ast$ be the the $n$-tuple of matrices $\imm^\ast$ where only the variables in $V$ are kept alive and the rest have been set to zero. Let $\immv$ be the $(1,1)$ entry of the product of the matrices in $\immv^{\ast}$.
 Let $(X^{(i,j)})|_V$ be the $j$th regular matrix of the $i$th block in $\immv^{\ast}$. Let $(Y^{(i)})|_V$ be the $i$th special matrix in $\immv^{\ast}$. 

Let $\simmv$,$(\simmv)_X$, $(\simmv)_X^{(i)}$, $(\simmv)_Y$, $\simmv(\alpha)$ and $\simmv(\alpha)^{(i)}$ be obtained from $\simm$,$\simm_X$, $\simm_X^{(i)}$, $\simm_Y$, $\simm(\alpha)$ and $\simm(\alpha)^{(i)}$ respectively by keeping only those variables `alive' that are present in $V$, and setting the remaining to zero.

\paragraph{Viewing $\immv^{\ast}$ as a graph}

Note than one can view any $\n \times \n $ matrix as the incidence matrix of a bipartite graph with $\n$ left vertices and $\n$ right vertices. For each entry in the $(i,j)$ location that is nonzero, we add an edge from the $i$th left vertex to the $j$th right vertex with the variable written in the $(i,j)$th entry now written on the edge. (In the case of the $J$ matrices (of all $1$s), we just label the edges with $1$. 

Thus one can view  any $\immv^{\ast} $ as an $n$-tuple of bipartite graphs, where for any two adjacent matrices $M, M'$ in the $n$-tuple, we identify the right vertices of $M$ with the left vertices of $M'$. Thus we get a layered bipartite graph, with $n$ layers, and each monomial in $\simmv$ corresponds to a path from the leftmost layer to the rightmost layer. 
We define the $i$th layer in $\immv^{\ast}$ to be precisely the bipartite graph corresponding the $i$th matrix in $\immv^{\ast}$. The {\it degree} of a layer is defined to be the left-degree of the corresponding bipartite graph. Notice that at least for all the regular matrices, the corresponding bipartite graphs (after restricting to $V$) are regular with respect to the left-degrees. For the regular matrix $X^{(i,j)}|_V$, we let $\deg(X^{(i,j)}|_V)$ denote the left degree of the corresponding bipartite graph, and by the random restriction process, note that this is a number only depending on the value of $j$. For ease of notation, we may some times refer to this quantity as $\deg(j)$. For every left vertex of this graph (of degree $\deg(j)$), we give each of the outgoing edge a distinct label from $1$ to $\deg(j)$. This choice of labels is assigned independently and uniformly at random for each left vertex. Thus for instance, for every left vertex, if we follow the edge labelled $1$ that leaves it, we  get a uniformly random element of $[\n]$ as the right vertex.

Any element of $(\simmv)_X^{(i)}$ is a monomial of degree $k'$, and it corresponds to a path of length $k'$ in the $k'$-layered bipartite graph corresponding to the regular matrices of the $i$th block. Each such monomial can thus be fully specified by first specifying the {\it start} vertex, i.e. an element of $[\n]$, and the labels of the edges along the path, i.e. a $k'$-tuple where the $j$th entry is free to vary in $[\deg(X^{(i,j)}|_V)]$. This correspondence will be very useful in the arguments that will be coming up.


\subsection{Choosing a set of monomials}
From our definition of the complexity measure $\Phi$, it depends upon two parameters. The degree of multilinear shift $m$ has already been set by our choice of parameters. For every $V \leftarrow {\cal D}$, we will first choose an appropriate set of monomials of degree $r'$ denoted by $\T(\immv)$. The final set of monomials with respect to which we will take derivatives will be a large subset of $\T(\immv)$. As we will see, the complexity of the circuit just depends on the parameter $r'$ and is totally independent of the precise set of monomials with respect to which partial derivatives are taken. Hence, choosing the set of monomials dependent upon $V$ does not lead to a problem.

For any $V \gets\cal D$, let $\T(\immv)$ be a subset of $(\simmv)_Y$ chosen such that the following properties hold: 
\begin{itemize}
\item $|\T(\immv)| = n^r$
\item For any two distinct monomials $\alpha, \beta \in \T(\immv)$, $$|\text{Supp}(\alpha) \setminus \text{Supp}(\beta)| = |\text{Supp}(\beta) \setminus \text{Supp}(\alpha)| \geq r'-r$$  
\end{itemize}

The following lemma shows that such a set exists with a probability $1$ over $V \leftarrow {\cal D}$.
\begin{lem}~\label{lem:reedsolomon}
For any $V \subseteq  {\cal V}$ such that $V$ lies in the support of the distribution ${\cal D}$, there exists $\T(\immv) \subseteq (\simmv)_Y$ such that the following two properties hold.
\begin{itemize}
\item $|\T(\immv)| = n^r$
\item For any two distinct monomials $\alpha, \alpha' \in \T(\immv)$, $$|\text{Supp}(\alpha) \setminus \text{Supp}(\alpha')| = |\text{Supp}(\alpha') \setminus \text{Supp}(\alpha)| \geq r'-r$$  
\end{itemize}
\end{lem}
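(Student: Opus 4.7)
The plan is to realize $\T(\immv)$ as the image of a Reed--Solomon code under a natural embedding into $(\simmv)_Y$. Fix any $V$ in the support of $\mathcal D$ and, for each block $i$, let $A_i\subseteq[\n]$ be the set of indices $j$ with $y^{(i)}_{1,j}\in V$, so that $|A_i|=\n^{3/4}$. I first want to verify that every tuple $(j_1,\ldots,j_{r'})\in A_1\times\cdots\times A_{r'}$ lifts to a monomial $\prod_{i=1}^{r'} y^{(i)}_{1,j_i}\in(\simmv)_Y$. Under the random restriction, every row other than the first of each $Y^{(i)}$ is set to zero, so any alive path through $\immv^{\ast}$ must enter $Y^{(i)}$ at left vertex $1$; it then exits at right vertex $j_i$ via the chosen variable, and to continue I must route $j_i$ through $X^{(i,1)},\ldots,X^{(i,k')}$ and $J^{(i)}$ into left vertex $1$ of $Y^{(i+1)}$. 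This is always possible because each $X^{(i,\ell)}$ has strictly positive alive out-degree at every left vertex (namely $n^{\eta}$, $2$, or $1$, depending on $\ell$), so some right vertex of $X^{(i,k')}$ is reachable from $j_i$, and the all-ones matrix $J^{(i)}$ then sends that vertex to left vertex $1$ of $Y^{(i+1)}$ for free. The same reasoning handles the last block, landing at the target $(1,1)$ entry.

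For the code, let $q$ be any prime power with $n\le q\le\n^{3/4}$; such a $q$ exists by Bertrand's postulate since $\n^{3/4}=n^{15/4}$. Fix distinct elements $\gamma_1,\ldots,\gamma_{r'}\in\F_q$ (possible since $q\ge n\ge r'$) and, for each $i$, an arbitrary injection $\phi_i:\F_q\hookrightarrow A_i$ (possible since $|A_i|=\n^{3/4}\ge q$). To each polynomial $f\in\F_q[z]$ of degree at most $r-1$ I associate
$$\mu_f \;=\; \prod_{i=1}^{r'} y^{(i)}_{1,\,\phi_i(f(\gamma_i))},$$
which lies in $(\simmv)_Y$ by the previous paragraph. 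Distinct polynomials of degree $<r$ cannot agree on all of $\gamma_1,\ldots,\gamma_{r'}$, so $f\mapsto\mu_f$ is injective, yielding $q^r\ge n^r$ distinct monomials in $(\simmv)_Y$. I would then take $\T(\immv)$ to be any subfamily consisting of exactly $n^r$ of them.

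The distance bound is then immediate from the Reed--Solomon property. For $f\neq g$ of degree $<r$, the difference $f-g$ is a nonzero polynomial of degree $<r$ and hence has at most $r-1$ roots, so $f$ and $g$ agree on at most $r-1$ of the points $\gamma_1,\ldots,\gamma_{r'}$. Since each $\phi_i$ is injective, $\mu_f$ and $\mu_g$ share at most $r-1$ variables in their supports; because both monomials are multilinear of degree exactly $r'$, the two asymmetric differences automatically coincide and satisfy $|\text{Supp}(\mu_f)\setminus\text{Supp}(\mu_g)|=|\text{Supp}(\mu_g)\setminus\text{Supp}(\mu_f)|\ge r'-(r-1)\ge r'-r$, as required. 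The only nontrivial step in this plan is the path-lifting argument of the first paragraph; once that is in hand, the rest is the standard Reed--Solomon existence argument.
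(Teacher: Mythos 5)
Your proof is correct and follows essentially the same route as the paper: identify the alive first-row entries of the $Y$-matrices with a field of size roughly $\n^{3/4}$ and take the monomials corresponding to Reed--Solomon codewords of degree less than $r$ evaluated at $r'$ points, whose pairwise distance is at least $r'-(r-1)$. Your write-up is in fact slightly more careful than the paper's, since you explicitly verify via the path-lifting argument that every tuple of alive $Y$-variables is realized in $(\simmv)_Y$ and you handle the prime-power issue by choosing $q$ between $n$ and $\n^{3/4}$ rather than deferring it to a footnote.
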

\begin{proof} 
From the definition of the random restriction procedure, it follows that for each of $Y$ matrices, $\tilde{n}^{3/4}$ variables in the first row are kept alive. We will identify the set of these variables with elements in the field $\F_q$ with $q = \tilde{n}^{3/4}$\footnote{If $\tilde{n}^{3/4}$ is not a prime power then we can just take $q$ to be something slightly larger and the analysis still works. For simplicity we assume for now that it is a prime power.} for each of the $Y$ matrices. Then, the cartesian product of the subset of alive (i.e. nonzero) variables in each of the $Y$ matrices can be identified with $\F_q^{r'}$. For $r < r'$, we consider the set of all codewords of the Reed-Solomon codes corresponding to polynomials of degree at most $r-1$, and evaluated at $r'$ distinct field elements. This gives is a subset of $\F_q^{r'}$ of size $q^r = \tilde{n}^{3r/4} = n^{15r/4}$ such that the distance between any two elements (which are $r'$-tuples) is at least $r'-r$. We take, $\T(\immv)$ to be any subset of these codewords of size exactly $n^r$.  
\end{proof}

Eventually in our proof, we will only look at derivatives of $\immv$ with respect to a {\it good} subset ${\cal G}$ of monomials in $\T(\immv)$. We will argue that with a high probability this set will have some good properties, which will help us lower bound the complexity of $\immv$. 

\subsection{Proof overview}
The proof of the lower bound for $\imm$ is a little more subtle than the proof of lower bounds for $NW_{n,D}$. 
\begin{itemize}
\item If the circuit was large to start with, we have nothing to prove. Else, we will argue that under the random restrictions given by the distribution ${\cal D}$, with high probability none of product gates in the bottom layer $C$ has high support (all the high support gates set to zero).
\item Assuming that the circuit has bounded support, we will obtain a good upper bound on its complexity. This is similar to the corresponding step in $NW_{n,D}$. 
\item We will then show that with a good probability, the complexity of a random restriction of $\imm$ remains high. This is the most technical part of the proof. We elaborate more on this step next. 
\item We will argue that the probability that both of the above items happen together is high. Then, comparing the complexity of the circuit and the polynomial $\immv$ completes the proof.
\end{itemize} 

\vspace{2mm}
\noindent
{\bf Lower bound on the complexity of a random restriction of $\imm$: }In spirit, this proof is like that for $NW_{n,D}$. Analogous to the definitions of the expressions $T_1$, $T_2$, $T_3$ for $NW_{n,D}$, for every restriction $V \leftarrow {\cal D}$, and with respect to a set of monomials $\T(\immv)$ as given by the Lemma~\ref{lem:reedsolomon}, we define  

\begin{itemize}
\item $T_1(\immv) = \sum_{\substack{\alpha \in \T(\immv)\\ \beta \in \text{Supp}(\partial_{\alpha}(\imm))}}1_{\alpha, \beta}\cdot |S_m(\alpha, \beta)|$
\item $T_2(\immv) = \sum_{\substack{\alpha \in \T(\immv) \\ \beta, \gamma \in \text{Supp}(\partial_{\alpha}(\imm)) \\ \beta \neq \gamma}}1_{\alpha, \beta, \gamma}\cdot |S_m(\alpha, \gamma)\cap S_m(\alpha, \beta)|$
\item $T_3(\immv) = \sum_{\substack{\alpha_1, \alpha_2 \in \T(\immv) \\ \beta_1 \in \text{Supp}(\partial_{\alpha_1}(\imm)) \\ \beta_2 \in \text{Supp}(\partial_{\alpha_2}(\imm)) \\ (\alpha_1, \beta_1) \neq (\alpha_2, \beta_2)}} 1_{\alpha_1, \alpha_2, \beta_1, \beta_2}\cdot |A_m(\alpha_1, \beta_1)\cap A_m(\alpha_2, \beta_2)|$
\end{itemize}

We will use $T_1|_V$ for $T_1(\immv)$, $T_2|_V$ for $T_2(\immv)$ and $T_3|_V$ for $T_3(\immv)$. Observe that the definitions above are  equivalent to the following definitions. 

\begin{itemize}
\item $T_1|_V = \left[\sum_{\substack{\alpha \in {\T(\immv)}\\ \beta \in \simmv(\alpha)}} |S_m(\alpha, \beta)| \right] = \left[\sum_{\substack{\alpha \in {\T(\immv)}\\ \beta \in \simmv(\alpha)}} {N-k \choose m} \right], $\\
where the last equality holds because $S(\alpha, \beta)$ is the set of all multilinear monomials of degree $m$ which are disjoint from $\beta$. 
\item \begin{align*}
T_2|_V =& \sum_{\alpha \in \T(\immv)} \left(\sum_{\beta, \gamma \in \simmv(\alpha)} |S_m(\alpha, \gamma)\cap S_m(\alpha, \beta)| \right) \\
=& \sum_{\alpha \in \T(\immv)} \left(\sum_{\beta, \gamma \in \simmv(\alpha)} {N - k - \Delta(\beta,\gamma) \choose m} \right)
\end{align*}
Where the last equality holds because $|S_m(\alpha, \gamma)\cap S_m(\alpha, \beta)|$ counts the number of multilinear monomials of degree $m$ which are disjoint from both $\beta$ and $\gamma$. 

\item  \begin{align*}
T_3|_V &= \sum_{\substack{\alpha_1, \alpha_2 \in \T(\immv) \\ \beta_1 \in \simmv(\alpha_1) \\ \beta_2 \in \simmv(\alpha_2) \\ (\alpha_1, \beta_1) \neq (\alpha_2, \beta_2)}} |A_m(\alpha_1, \beta_1)\cap A_m(\alpha_2, \beta_2)| \\
& \leq \sum_{\substack{\alpha_1, \alpha_2 \in \T(\immv) \\ \beta_1 \in \simmv(\alpha_1) \\ \beta_2 \in \simmv(\alpha_2) \\ (\alpha_1, \beta_1) \neq (\alpha_2, \beta_2)}} {N - k - \Delta(\beta,\gamma) \choose m-\Delta(\beta,\gamma)}
\end{align*}
Where the last inequality holds since $|A_m(\alpha_1, \beta_1)\cap A_m(\alpha_2, \beta_2)|$ is upper bounded by the number of multilinear monomials $\gamma$ of degree $m$ such that $\gamma \cdot \beta_1$ and $\gamma\cdot \beta_2$ are both multilinear, and  $\gamma \cdot \beta_1 = \gamma\cdot \beta_2$.

\end{itemize}

For every pair of monomials $\alpha, \alpha' \in \T(\immv)$, we define 

\begin{itemize}
\item $T_1|_V(\alpha) = \sum_{\substack{ \beta \in \simmv(\alpha)}} |S_m(\alpha, \beta)| $
\item $T_2|_V(\alpha) = \sum_{\beta, \gamma \in \simmv(\alpha)} {N - k - \Delta(\beta,\gamma) \choose m}$
\item If $\alpha = \alpha'$, then $T_3|_V(\alpha, \alpha') = \sum_{\substack{\beta,\gamma \in \simmv(\alpha) \\ \beta \neq \gamma}} {N - k - \Delta(\beta,\gamma) \choose m-\Delta(\beta,\gamma)} $
\item If $\alpha \neq \alpha'$, $T_3|_V(\alpha, \alpha') =   \sum_{\substack{\beta \in \simmv(\alpha) \\ \gamma \in \simmv(\alpha')}} {N - k - \Delta(\beta,\gamma) \choose m-\Delta(\beta,\gamma)}$
\end{itemize}

We will now describe the strategy to prove to a lower bound on the complexity of $\immv$. We compute the expected values of expression $T_1|_V$, $T_2|_V$ and $T_3|_V$ for $V$ sampled according to ${\cal D}$. Then, we argue that with a high probability,  $T_2|_V$ and $T_3|_V$ have values not much larger than their expectations and $T_1|_V$ has value close to its expectation. For such {\it good} restrictions, we show the existence of a set ${\cal G}_V \subseteq \T(\immv)$ with the following properties. 
\begin{enumerate}
\item For each $\alpha$ in ${\cal G}_V$, $T_1|_V(\alpha)$ is large.
\item For each $\alpha$ in ${\cal G}_V$, $T_2|_V(\alpha)$ is not too large compared to $T_1(\alpha)$. 
\item $\sum_{\alpha_1, \alpha_2 \in {\cal G}_V} T_3|_V(\alpha_1, \alpha_2)$ is not too large when compared to $\sum_{\alpha \in {\cal G}_V, \beta \in \text{Supp}(\partial_{\alpha}(\immv))} |A_m(\alpha, \beta)|$.
\end{enumerate} 

Then, we show that these conditions suffice to show that $\Phi_{{\cal G}_V, m}(\immv)$ is large. This argument has the following major steps. 
\begin{itemize}

\item For each $\alpha \in {\cal G}_V$, since $T_1|_V(\alpha)$ is large, it follows that $\sum_{\beta \in \simmv(\alpha)} |S_m(\alpha, \beta)|$ is large.
\item For each $\alpha \in {\cal G}_V$, since $T_2|_V(\alpha)$ is  not much larger than $T_1|_V(\alpha)$, Lemma~\ref{lem:inc-exc-sample} and Lemma~\ref{lem:ASrelation} imply that for each $\alpha \in {\cal G}_V$,    $\sum_{\beta \in \simmv(\alpha)} |A_m(\alpha, \beta)|$ is large.   
\item We also know that  $\sum_{\alpha_1, \alpha_2 \in {\cal G}_V} T_3|_V(\alpha_1, \alpha_2) = \sum_{\substack{\alpha_1, \alpha_2 \in {\cal G}_V\\\beta_1 \in \simmv(\alpha_1) \\ \beta_2 \in \simmv(\alpha_2) \\ (\alpha_1, \beta_1) \neq (\alpha_2, \beta_2)}} |A_m(\alpha_1, \beta_1) \cap A_m(\alpha_2, \beta_2)|$ is not much larger than $\sum_{\alpha \in {\cal G}_V, \beta \in \simmv(\alpha)} |A_m(\alpha, \beta)|$. 
\item Lemma~\ref{lem:inc-exc-sample} will then imply that $\left|\bigcup_{\substack{\alpha \in {\cal G}_V\\ \beta \in \simmv(\alpha)}} A_m(\alpha, \beta) \right|$ is large. Hence, by Lemma~\ref{lem:complexity1}, $\Phi_{{\cal G}_V, m}(\immv)$ is large. 
\end{itemize}


\subsection{Effect of random restrictions on the circuit}

We will now analyze the effect of the random restrictions on a homogeneous $\spsp$ circuit computing the polynomial $\imm$ and show that with a high probability, no large support product gate survives.
\begin{lem}~\label{lem:imm-circuit-simplification}
Let $C$ be a homogeneous $\spsp$ circuit of size at most $n^{\frac{\sqrt{n}}{128}}$ computing the polynomial $\imm$ . Then, with a probability at least $1-o(1)$ over $V \leftarrow {\cal D}$, $C|_V$ is a $\spsp^{\{s\}}$ circuit, for $s = \frac{\sqrt{n}}{64}$.
\end{lem}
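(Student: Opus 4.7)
The plan is a straightforward union bound. A bottom-level product gate $g$ computes a monomial, and $g$ is removed (equivalently, contributes the zero polynomial) in $C|_V$ precisely when some variable feeding into $g$ lies outside $V$. So $C|_V$ fails to be a $\spsp^{\{s\}}$ circuit only if at least one bottom gate $g$ with $|\text{Supp}(g)| > s$ has all its variables in $V$; I would call such an event ``survival'' of $g$ and union-bound over all bottom gates.

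The crux is the per-variable survival probability under ${\cal D}$. Reading off the four cases of the restriction: a first-row entry of any special matrix $Y^{(i)}$ is kept alive with probability $\tilde n^{3/4}/\tilde n = n^{-5/4}$ (non-first-row special entries die with probability $1$); an entry of $X^{(i,1)}$ survives with probability $n^{\eta}/\tilde n = n^{\eta-5}$; an entry of a middle regular matrix with probability $2/\tilde n$; and an entry of one of the last $2\log n$ regular matrices with probability $1/\tilde n$. Since $\eta < 2$, the maximum single-variable survival probability is $p^\ast = n^{-5/4}$, attained inside the special matrices. Because the random restrictions are mutually independent across distinct matrices (and across distinct rows inside each regular matrix), and because within a single row where $t$ out of $\tilde n$ entries are kept alive a standard hypergeometric estimate gives $\binom{\tilde n - a}{t-a}/\binom{\tilde n}{t} \le (t/\tilde n)^a$ for the joint survival of any $a$ fixed entries of that row, the probability that all distinct variables of $g$ survive is at most $(p^\ast)^{|\text{Supp}(g)|} \le n^{-5s/4}$ whenever $|\text{Supp}(g)| > s$.

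Finally, union-bounding over the at most $\text{Size}(C) \le n^{\sqrt n/128}$ bottom product gates, the probability that some gate with support greater than $s=\sqrt n/64$ survives is at most
\[
n^{\sqrt n/128}\cdot n^{-5s/4} \;=\; n^{2\sqrt n/256 \,-\, 5\sqrt n/256} \;=\; n^{-3\sqrt n/256} \;=\; o(1).
\]
On the complementary high-probability event, every surviving bottom gate has support at most $s$, so $C|_V$ is a $\spsp^{\{s\}}$ circuit, as claimed. There is really no substantive obstacle: the numerology in the choice of $s$ in Section~\ref{sec:imm-params} and of the size threshold $n^{\sqrt n/128}$ is calibrated precisely so that $(5/4)\cdot s = 5\sqrt n/256$ comfortably dominates the size exponent $\sqrt n/128 = 2\sqrt n/256$. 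The only minor care needed is to notice that special-matrix variables dominate the per-variable survival probability (the regular matrices are much sparser after restriction), which is what pins down $p^\ast = n^{-5/4}$ and hence the exponent $5/4$ in the above calculation.
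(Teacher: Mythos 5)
Your proof is correct and follows essentially the same route as the paper: a union bound over bottom product gates, with the per-gate survival probability controlled via the hypergeometric estimate within each row/matrix and independence across matrices (and rows). The only cosmetic difference is that the paper groups the calculation ``per layer'' and uses the slightly weaker uniform bound $n^{-t}$ for a gate with support $t$ in a layer, whereas you bound per-variable by the sharper $p^\ast = n^{-5/4}$; both yield failure probability $n^{-\Omega(\sqrt n)}$ with the given thresholds.
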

\begin{proof}
We will analyze the probability that a fixed product gate at the bottom layer of $C$ (that computes a monomial) of support size $s$ (we will later set $s = \frac{\sqrt{n}}{64}$) survives\footnote{We say that a product gate survives the random restriction if none of the variables feeding in to it are set to zero.} the random restriction procedure. Observe that the events that two variables in different matrices in $\imm^{\ast}$ survive are independent, but the probability that two variables within the same matrix survive are correlated. We will first upper bound the probability that a monomial has support $t$ within any layer (i.e. $t$ distinct variables that all come from the same layer) survives the random restriction procedure, based on the type of the layer. We will think of $t$ to be $O(\sqrt{n})$.
\begin{itemize}
\item {\bf Special matrices: } In a special layer, a random subset of  ${\tilde{n}}^{3/4}$ variables in the first row is kept alive. The probability that  a monomial of support $t$ within this layer survives is, therefore equal to $\frac{{\tilde{n} - t \choose {\tilde{n}}^{3/4} - t}}{{\tilde{n} \choose \tilde{n}^{3/4}}}$. Since $t$ is $O(\sqrt{n})$ and $\tilde{n} = n^5$, so $\tilde{n}$ and $\tilde{n}^{3/4}$ are both $\Omega(t^2)$. Hence, $\frac{{\tilde{n} - t \choose {\tilde{n}}^{3/4} - t}}{{\tilde{n} \choose \tilde{n}^{3/4}}} \approx \frac{\tilde{n}^{-t}}{\tilde{n}^{-3t/4}}$, by Lemma~\ref{lem:approx}. So, the probability of survival is at most $\frac{1}{\tilde{n}^{t/4}} < \frac{1}{n^t}$. 
\item {\bf Regular matrices of the form $X^{(i,1)}$: } Here, in each row exactly $n^{\eta}$ random variables are kept alive. For $\eta \geq 1$, the probability that a fixed monomial with support at least $t' = O(\sqrt{n})$ within any row survives is at most $\frac{{\tilde{n} - t' \choose n^{\eta} - t'}}{{\tilde{n} \choose n^{\eta}}} \approx \frac{\tilde{n}^{-t'}}{{n}^{-\eta\cdot {t'}}}$.  Also, the events across different rows are independent. So, the probability that a monomial with support at least $t$ in the variables in this matrix survives is at most $\frac{\tilde{n}^{-t}}{{n}^{-\eta\cdot {t}}} \leq n^{(\eta-5)\cdot t} < n^{-t}$.
\item {\bf Regular matrices $X^{i,j}$ for $j > k'-2\log n$: } In these matrices, exactly one variable in each row is kept alive uniformly at random. So, the probability that a monomial of support at least $t$ within one of these matrices survives the random restriction procedure is at most $\tilde{n}^{-t}$. 
\item {\bf Regular matrices $X^{i,j}$ for $2 \leq j \leq  k'-2\log n$: } In these matrices, from each row, two distinct variables chosen uniformly at random are kept alive  by the random restriction procedure. So, the probability that a fixed variable within a fixed row survives is at most $2\cdot \tilde{n}^{-1}$.  Therefore, the probability that a monomial of support at least $t$ in such a matrix survives is at most $2^t\cdot \tilde{n}^{-t}$. For $\tilde{n} = n^5$, this is at most $n^{-t}$. 
\end{itemize}
From the above bounds, it follows that for $t = O(\sqrt{n})$, the probability that a monomial that has support at least $t$ within any single layer survives is at most $\frac{1}{n^t}$. Also, the events are independent across different layers. So the probability that any monomial with support at least $t$ across all layers survives is at most $\frac{1}{n^t}$. Therefore, by the union bound, the probability that at least one gate with support larger than $s$ survives is at most  $\frac{\text{Size}(C)}{n^s}$. For $C$ such that $Size(C)\leq  n^{\frac{\sqrt{n}}{128}}$ and $s = \frac{\sqrt{n}}{64}$, the probability that any product gate with support at least $s$ survives the random restriction procedure is at most $ n^{-\frac{\sqrt{n}}{128}}$. So, the lemma follows.
\end{proof}

\subsection{Effect of random restrictions on $\imm$}
In this subsection, we will show that with a high probability over the random restrictions, the complexity of $\imm$ remains high, assuming that the bounds given by the following lemmas. 

\begin{lem}\label{lem:t11}
For all $\alpha \in \T(\immv)$, for all $V \gets \cal D$
$$T_1|_V(\alpha) = D^k\cdot{N-k \choose m}.$$
\end{lem}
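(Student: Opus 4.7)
The plan is to reduce the computation of $T_1|_V(\alpha)$ to counting paths in the layered bipartite graph view of $\immv^{\ast}$. Every $\beta \in \simmv(\alpha)$ is a multilinear monomial of degree exactly $k$ (one variable from each of the $k = k'r'$ regular matrices), so $|S_m(\alpha,\beta)| = {N-k \choose m}$ independent of $\beta$. Hence
\[
T_1|_V(\alpha) \;=\; |\simmv(\alpha)| \cdot {N-k \choose m},
\]
and it suffices to prove that $|\simmv(\alpha)| = D^k$.

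I would next invoke the block decomposition $\simmv(\alpha) = \prod_{i \in [r']} \simmv(\alpha)^{(i)}$, which holds because any $\beta \in \simmv(\alpha)$ splits uniquely as a product of $X$-monomials, one per block, and the choices in different blocks are independent. Within block $i$, the $y$-variable picked by $\alpha$ has the form $y^{(i)}_{1, c_i}$ (since only the first row of $Y^{(i)}$ is kept alive), which forces the path through the regular matrices of block $i$ to enter $X^{(i,1)}|_V$ at left vertex $c_i$. The path exits $X^{(i,k')}|_V$ at some right vertex, then passes through the all-ones matrix $J^{(i)}$; because $J^{(i)}$ connects every left vertex to every right vertex with coefficient $1$, it imposes no constraint on the right vertex of $X^{(i,k')}|_V$ and contributes a multiplicative factor of $1$ to the monomial, allowing the path to reach left vertex $1$ of the next block's $Y^{(i+1)}$ (or the $(1,1)$ output in the case $i = r'$).

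Next I would count $|\simmv(\alpha)^{(i)}|$ via a product-of-degrees formula: since each $X^{(i,j)}|_V$ is left-regular (the restriction keeps the same number of alive entries in every row), the number of paths from a fixed left vertex of $X^{(i,1)}|_V$ to any right vertex of $X^{(i,k')}|_V$ equals $\prod_{j=1}^{k'} \deg(X^{(i,j)}|_V)$. Reading off the degrees from the definition of $\cal D$, we have $\deg(X^{(i,1)}|_V) = n^{\eta}$, $\deg(X^{(i,j)}|_V) = 2$ for $2 \leq j \leq k'-2\log n$, and $\deg(X^{(i,j)}|_V) = 1$ for $k'-2\log n < j \leq k'$, giving
\[
|\simmv(\alpha)^{(i)}| \;=\; n^{\eta} \cdot 2^{\,k' - 2\log n - 1}.
\]
Distinct edge sequences yield distinct monomials because each matrix entry is its own variable, so this count is exact.

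Finally, taking the product over all $r'$ blocks and using $k = k'r'$ yields
\[
|\simmv(\alpha)| \;=\; n^{\eta r'} \cdot 2^{\,k - (2\log n + 1) r'},
\]
which by the defining equation for $\eta$ in Section~\ref{sec:imm-params} equals $D^k$. I do not expect a significant obstacle here: the two points to verify carefully are that the block-by-block decomposition is genuinely independent (handled by the $J$ layers, which absorb both the exit vertex of one block and the forced entry at left vertex $1$ of the next), and that distinct paths give distinct monomials (immediate from the fact that each $X$-entry corresponds to a unique variable).
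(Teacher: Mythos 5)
Your proposal is correct and follows essentially the same route as the paper: the paper's proof is just the three-line computation $T_1|_V(\alpha)=\sum_{\beta\in\simmv(\alpha)}{N-k\choose m}=D^k{N-k\choose m}$, which silently relies on the identity $|\simmv(\alpha)|=\prod_{i\in[r']}\prod_{j=1}^{k'}\deg(X^{(i,j)}|_V)=D^k$ built into the choice of $\eta$ and $D$. You simply make that path-counting step explicit, and your block-by-block degree product $n^{\eta r'}\cdot 2^{k-(2\log n+1)r'}=D^k$ matches the defining equation for $\eta$ exactly.
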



\begin{lem}\label{thm:t20}
$$\E_{V \gets \cal D}[T_2|_V ] \leq  n^r \cdot D^k \cdot {N - k \choose m} \cdot n^{o(r)}$$
\end{lem}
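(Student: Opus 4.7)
The plan is to reduce the bound on $\E_{V \gets \cal D}[T_2|_V]$ to the deterministic identity $T_1|_V(\alpha) = D^k \binom{N-k}{m}$ of Lemma~\ref{lem:t11}. The random restriction for the regular matrices is row-symmetric, so $\E_V[T_2|_V(\alpha)]$ is the same for every fixed $\alpha \in \T(\immv)$; it therefore suffices to show $\E_V[T_2|_V(\alpha)] \leq D^k \binom{N-k}{m} \cdot n^{o(r)}$ and multiply by $n^r = |\T(\immv)|$. For a fixed $\alpha$, I partition the defining sum by $t = \Delta(\beta,\gamma)$ and set $\mathcal{P}_t(\alpha,V) = \{(\beta,\gamma) \in \simmv(\alpha)^2 : \Delta(\beta,\gamma) = t\}$. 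Lemma~\ref{lem:approx} together with $D = N/(N-m)$ and the fact that $(k+t)^2 = O(N)$ gives $\binom{N-k-t}{m} \leq O(1) \cdot D^{-t}\binom{N-k}{m}$, so the task reduces to proving
\[
\E_V \Bigl[\, \sum_{t \geq 0} |\mathcal{P}_t(\alpha,V)| \cdot D^{-t} \Bigr] \leq D^k \cdot n^{o(r)}.
\]

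I bound this weighted sum using the block structure of $\imm^\ast$. Every $\beta \in \simmv(\alpha)$ decomposes as $\prod_{i=1}^{r'} \beta^{(i)}$ with $\Delta(\beta,\gamma) = \sum_i \Delta(\beta^{(i)},\gamma^{(i)})$, so the weighted sum factors over blocks. Within block $i$, both $\beta^{(i)}$ and $\gamma^{(i)}$ begin at the common vertex determined by $\alpha$ via $Y^{(i)}$, and either they coincide on all $k'$ edges (contributing $B := D^{k'}$ diagonal pairs with weight $1$) or they first diverge at some position $j^\ast$ with $\deg(j^\ast) \geq 2$. The restriction gives $\deg(1) = n^\eta$, $\deg(j) = 2$ for $2 \leq j \leq k' - 2\log n$, and $\deg(j) = 1$ for $j > k' - 2\log n$, so divergence is possible only at $j^\ast \leq k' - 2\log n$. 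A direct count shows that, absent re-convergence, the number of ordered pairs in one block with first divergence at $j^\ast$ is $B_{j^\ast-1} \cdot \deg(j^\ast)(\deg(j^\ast)-1) \cdot (B/B_{j^\ast})^2$ where $B_a := \prod_{j \leq a}\deg(j)$. Weighting by $D^{-(k' - j^\ast + 1)}$ and using $B = D^{k'}$, the $j^\ast = 1$ term evaluates to $\Theta(B)$ while each $j^\ast \geq 2$ contributes at most $O(B/n^\eta)$, yielding a per-block total of $O(B)$. Taking the product over the $r'$ blocks gives $C^{r'} \cdot D^k$ for an absolute constant $C$, and since $r' = \Lambda r = O(\sqrt n)$, we have $C^{r'} \leq 2^{O(\sqrt n)} \leq n^{o(r)}$.

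The main technical obstacle is handling accidental re-convergence of diverged paths within a block. After diverging at $j^\ast$ the two paths lie at different vertices, but they may collide later: at a degree-$2$ matrix the two independent surviving edges out of two different rows coincide with probability $O(1/\tilde n)$, and at a trailing degree-$1$ matrix the unique surviving edge out of each row coincides with probability $1/\tilde n$. Such re-convergence shrinks $t_i$ and hence inflates the weight $D^{-t_i}$, so it must be controlled. The fix is to take expectation over the random labeling of the restricted graph: since $\tilde n = n^5$ dwarfs all relevant local quantities (in particular $k' = O(\sqrt n)$), the expected number of re-convergence events per block is $o(1)$, and the contribution of re-converged pairs to the weighted sum multiplies the no-re-convergence estimate by at most an $n^{o(1)}$ factor per block. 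Summed over the $r' = O(\sqrt n)$ blocks this is absorbed into the $n^{o(r)}$ budget, completing the argument after multiplying by $n^r = |\T(\immv)|$.
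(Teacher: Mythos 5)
Your overall route is the same as the paper's: reduce to bounding, for a fixed derivative monomial, the weighted pair-sum $\sum_{\beta,\gamma} D^{-\Delta(\beta,\gamma)}$, factor it over the $r'$ blocks, and within a block separate the pairs that share a prefix and then diverge forever (handled by direct counting against the degrees $\deg(j)$ and the identity $\prod_j \deg(j)=D^{k'}$) from the pairs that re-converge after diverging (handled via the $1/\n$ collision probability of independently labelled edges). The paper does exactly this through Proposition~\ref{prop:t2t3main}, Lemma~\ref{lem:calcsum}, and Lemmas~\ref{lem:agreedisagree} and~\ref{lem:disagreeagree}; the only cosmetic difference is that the paper fixes $\beta$ and sums over $\gamma$ (then multiplies by the $D^k$ choices of $\beta$) while you sum over ordered pairs directly. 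Your no-re-convergence count per block is correct.

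The one genuine gap is the re-convergence step. You correctly identify that re-convergence shrinks $\Delta(\beta^{(i)},\gamma^{(i)})$ and hence inflates the weight $D^{-\Delta}$, but your stated justification --- ``the expected number of re-convergence events per block is $o(1)$'' --- does not by itself yield the claimed $n^{o(1)}$ multiplicative loss. A pair that diverges and immediately re-converges carries weight as large as $D^{-1}$ instead of $D^{-(k'-j^*+1)}$, an inflation of up to $D^{k'}=2^{\Theta(\sqrt n)}$, so a small expected count of events must be weighed against an exponentially large per-event weight; rarity of events alone is not enough. What is actually needed is the computation the paper carries out in Lemma~\ref{lem:disagreeagree}: parametrize the re-convergent pairs by the number $t$ of switches and the set $T$ of disagreeing coordinates, bound the weighted count of label-patterns by $\bigl(\prod_{j\in T}\deg(X^{(i,j)})\bigr)D^{-|T|}\le n^2$ (Lemma~\ref{lem:degprod}), multiply by the collision probability $\n^{-\max\{1,(t-1)/2\}}$, and verify that $\sum_t \binom{k'}{t}\, n^2\, \n^{-\max\{1,(t-1)/2\}}=O(1/n)$ using $\n=n^5$ and $k'=O(\sqrt n)$. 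Your proposal asserts the conclusion of this estimate without performing it; supplying that calculation closes the argument.
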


\begin{lem}\label{thm:t30}
$$\E_{V \gets \cal D}[T_3|_V ] \leq  n^r \cdot D^k \cdot O(n^{(4/\Lambda)r})\cdot {N - k \choose m}$$
\end{lem}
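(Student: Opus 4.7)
The plan is to decompose $T_3|_V$ block-by-block, adapting the strategy used in Lemma~\ref{lem:NWT30} for the Nisan-Wigderson polynomial to the layered-graph structure of $\imm$. For any summand indexed by $((\alpha_1, \beta_1), (\alpha_2, \beta_2))$, write $\beta_j = \prod_{i=1}^{r'} \beta_j^{(i)}$, so that $\Delta(\beta_1, \beta_2) = \sum_i a_i$ with $a_i = \Delta(\beta_1^{(i)}, \beta_2^{(i)}) \in \{0, 1, \ldots, k'\}$, and let $S = \{i : \alpha_1^{(i)} = \alpha_2^{(i)}\}$ be the set of blocks on which $\alpha_1$ and $\alpha_2$ agree. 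By Lemma~\ref{lem:reedsolomon}, $|S| \leq r$ whenever $\alpha_1 \neq \alpha_2$, and trivially $S = [r']$ when $\alpha_1 = \alpha_2$. I will split the sum according to the pattern $(S, (a_i))$, bound each piece in expectation, and sum.

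Two ingredients drive the bound. First, Lemma~\ref{lem:approx} combined with the definition $D = N/(N-m)$ gives $\binom{N-k-\Delta}{m-\Delta}/\binom{N-k}{m} \lesssim ((D-1)/D)^{\Delta}$, so each unit of $\Delta$ contributes roughly a factor of $1/D$. Second, for each block $i$ I need an expected per-block pair-count: given start-vertices $v_1, v_2$ for the two paths (with $v_1 = v_2$ when $i \in S$ and $v_1 \neq v_2$ when $i \notin S$), I must upper bound the expected number of path-pairs through the randomly restricted regular matrices whose edge-monomial distance is exactly $a$. Using the prescribed per-position left-degrees (degree $n^\eta$ at position $1$, degree $2$ on $k'-2\log n-1$ middle positions, degree $1$ on the $2\log n$ tail) and the fact that vertex-trajectory divergences compound multiplicatively while the forced-degree-$1$ tail prevents spurious re-convergences from inflating pair-counts, each additional unit of in-block distance should cost roughly a factor of $2$.

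Combining, each pattern contributes in expectation at most a product of per-block bounds times the binomial ratio. Summing $(a_i)$ within each block, the $D^{-\Delta}$ factor from the binomial approximately cancels the doubling from divergence, leaving a modest per-block slack. Using $P^{r'} = D^k$ (where $P = n^\eta \cdot 2^{k'-2\log n-1}$ is the number of alive paths from a single vertex in a block, by the choice of $\eta$) and $|\T(\immv)| = n^r$, summing over patterns with $|S| \leq r$ and over $\alpha_1$ yields a bound of the form $n^r \cdot D^k \cdot n^{O(r/\Lambda)} \cdot \binom{N-k}{m}$, with the $n^{O(r/\Lambda)}$ slack arising from the enumeration over $S$, the choice of $\alpha_2$ given $\alpha_1$ restricted to the $|S|$ agreement blocks, and residual per-block slack; the exact constant $4$ in the target exponent should then follow by tracking each of these sources.

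The main obstacle will be the per-block count. The subtle point is that two paths from a common start-vertex may diverge at some position and later re-converge at a common vertex, and since monomial distance tracks edges rather than vertex trajectories, one must show that re-convergences contribute only a mild correction. The $2\log n$ forced-degree-$1$ tail of each block is what ultimately pins the paths down, but assembling the bookkeeping cleanly, especially in the $v_1 = v_2$ case where one cannot treat the two paths as independent draws, is where the delicate combinatorics lives.
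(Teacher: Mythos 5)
Your block-by-block plan is structurally sound — the factorization of $\E_V\bigl[\sum_{\beta,\gamma}(m/N)^{\Delta(\beta,\gamma)}\bigr]$ across blocks is valid since the restriction is applied independently per matrix, and you correctly identify the ingredients ($P^{r'}=D^k$, the Reed–Solomon bound $|S|\leq r$, the binomial-ratio approximation). But there is a genuine gap in your accounting of where the necessary gain comes from. Summing over $\alpha_1,\alpha_2\in\T(\immv)$ contributes $n^{2r}$, whereas the target is $n^r\cdot n^{O(r/\Lambda)}$, so a net gain of roughly $n^{-r}$ is required. This gain must come from the per-block bounds in the at least $r'-r$ blocks $i\notin S$: there you need a per-block factor of about $n^{-1/\Lambda}$, not $O(1)$. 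Your heuristic that ``each unit of $\Delta$ costs $1/D$, which cancels the per-step branching of roughly $2$'' only yields $O(1)$ per block, because it silently drops the second half of $(m/N)^{a_i} = D^{-a_i}(D-1)^{a_i}$. That $(D-1)^{a_i}$ factor, combined with the observation that for $i\notin S$ the paths $\beta^{(i)},\gamma^{(i)}$ start from distinct rows of $X^{(i,1)}$ (so $a_i\geq 1$, the $a_i=0$ term is gone, and any agreement with $a_i<k'$ forces a re-convergence that costs $1/\n$ as in Lemma~\ref{lem:disagreeagree}), is what produces the $n^{-1/\Lambda}$ gain per disagreement block: the $a_i=k'$ case contributes $P\cdot(m/N)^{k'}=(D-1)^{k'}=n^{-1/\Lambda}$, and the $a_i<k'$ case contributes $O(1/\n)$, which is negligible. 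You frame re-convergences only as a nuisance to be controlled (``prevents spurious re-convergences from inflating pair-counts''), when in fact the $1/\n$ penalty from re-convergence is one of the two mechanisms that \emph{supply} the gain. Likewise, your attribution of the $n^{O(r/\Lambda)}$ slack to ``the choice of $\alpha_2$ given $\alpha_1$ restricted to the $|S|$ agreement blocks'' is misdirected: there are still $\Theta(n^r)$ choices of $\alpha_2$ for small $|S|$, so this enumeration does not help.

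For comparison, the paper routes around this by splitting $T_3 = T_3^{=} + T_3^{\neq}$ and then splitting $T_3^{\neq}$ by whether $\Delta(\beta,\gamma) \geq (1-\epsilon')k$ with $\epsilon'=2/\Lambda$. In the large-$\Delta$ regime it uses the global factor $(m/(N-m))^{\Delta}\leq (m/(N-m))^{(1-\epsilon')k}=n^{-(1-\epsilon')r}$ together with Proposition~\ref{prop:t2t3main}; in the small-$\Delta$ regime the Reed–Solomon property forces at least $\epsilon'r'-r$ blocks where $\alpha,\alpha'$ disagree yet $\beta,\gamma$ partially agree, and then Lemma~\ref{lem:disagreeagree} gives a compounded gain of $(1/n)^{\epsilon'r'-r}=n^{-r}$. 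Your finer decomposition by $(S,(a_i))$ would also work and might even give a slightly smaller final exponent, but only if you prove the per-block lemma for $i\notin S$ that your writeup currently treats as a side issue rather than the crux.
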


We will also need the following lemma, which implies Lemma~\ref{thm:t20} via linearity of expectations. 

Recall that for $\alpha \in \T(\immv)$, we define $T_2|_V (\alpha) =\sum_{\beta, \gamma \in \simmv(\alpha)} {N - k - \Delta(\beta,\gamma) \choose m}$.
When $\alpha \not \in\T(\immv)$, we define $T_2|_V (\alpha) =0$. 

\begin{lem}\label{thm:t21}
$\forall \alpha \in (\simmv)_Y$, $$\E_{V \gets \cal D}[T_2|_V (\alpha)] \leq  D^k \cdot {N - k \choose m} \cdot n^{o(r)}$$
\end{lem}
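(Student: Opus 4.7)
The plan is to bound $T_2|_V(\alpha)$ pointwise via a binomial approximation, then exploit the block structure of $\imm^{\ast}$ to factor the resulting expectation across the $r'$ blocks. By Lemma~\ref{lem:approx}, since $k^2 = O(n^2) \ll N$, a direct computation gives the uniform bound
$$\binom{N-k-d}{m} \leq (1+o(1))\binom{N-k}{m}\cdot D^{-d} \quad\text{for every } 0 \leq d \leq k,$$
so it suffices to prove $\E_V\!\left[\sum_{\beta,\gamma \in \simmv(\alpha)} D^{-\Delta(\beta,\gamma)}\right] \leq D^k \cdot n^{o(r)}$. Because the variables from different blocks of $\imm^{\ast}$ are disjoint, we have $\Delta(\beta,\gamma) = \sum_{i=1}^{r'} d^{(i)}(\beta^{(i)},\gamma^{(i)})$, where $\beta^{(i)},\gamma^{(i)}$ are the projections to block $i$, and $\simmv(\alpha) = \prod_i \simmv(\alpha)^{(i)}$. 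The sum over pairs therefore factors over blocks, and since the restrictions on different blocks are independent under $\cal D$, the expectation factors as well. The whole problem reduces to establishing the single-block bound
$$\E_{V_i}\!\left[\sum_{\beta^{(i)},\gamma^{(i)} \in \simmv(\alpha)^{(i)}} D^{-d^{(i)}}\right] \leq 2\,D^{k'}(1+o(1)),$$
since taking the product over $r' = \Lambda r$ blocks then yields $D^k \cdot 2^{r'}(1+o(1))^{r'}$, and $2^{r'} = n^{\Lambda r/\log_2 n} = n^{o(r)}$ because $\Lambda$ is a constant.

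For the per-block bound, I will parametrize every path in $\simmv(\alpha)^{(i)}$ by its sequence of edge-labels $(l_1,\ldots,l_{k'}) \in \prod_j [\deg(j)]$, where $\deg(1)=n^\eta$, $\deg(j)=2$ for $2 \leq j \leq k'-2\log n$, and $\deg(j)=1$ in the tail. This parametrization is a bijection of $\simmv(\alpha)^{(i)}$ onto a set of size $D^{k'}$ that is independent of $V_i$. For two paths $\beta,\gamma$ with label sequences $(l_j),(m_j)$, let $T$ be the largest prefix length on which the two sequences agree. Up through layer $T$ both paths stay merged and use identical edges; at layer $T+1$ they sit at the same left-vertex but pick different labels, so they use different edges and land at distinct right-vertices. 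After that, because the random restriction assigns to each left-vertex a uniformly random subset of $[\tilde n]$ of size $\deg(j)$ as out-neighbours and $\tilde n = n^5 \gg k'$, the probability (over $V_i$) that the two paths ever re-merge is $O(k'/\tilde n) = o(D^{-k'})$. Conditioned on no re-merge, $d^{(i)} = k'-T$ exactly, so $\E_{V_i}[D^{-d^{(i)}} \mid T=j] \leq D^{-(k'-j)} + O(k'/\tilde n)$.

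Writing $\sum_{\beta,\gamma} D^{-d^{(i)}} = D^{2k'}\cdot \E_{\text{unif } \beta,\gamma}[D^{-d^{(i)}}]$ and noting that the distribution of $T$ depends only on the label sequences (hence is independent of $V_i$), the degree schedule above gives $\Pr[T=0] = 1-n^{-\eta}$, $\Pr[T=j] = 1/(n^\eta 2^j)$ for $1 \leq j \leq k'-2\log n-1$, $\Pr[T=k'] = 1/D^{k'}$, and all other probabilities are zero (a divergence inside the degree-one tail is impossible). Hence
$$\sum_{j} \Pr[T=j]\cdot D^{-(k'-j)} \leq D^{-k'} + \frac{D^{-k'}}{n^\eta}\!\!\sum_{j=1}^{k'-2\log n -1}\!(D/2)^j + \frac{1}{D^{k'}} \leq 2D^{-k'}(1+o(1)),$$
since $D \leq 2$ forces the geometric sum to be $O(k')$ and so the middle contribution is $o(D^{-k'})$. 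Adding the re-merge correction (which is also $o(D^{-k'})$) and multiplying by $D^{2k'}$ delivers the single-block bound, and the lemma then follows by taking the product over the $r'$ blocks.

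The main obstacle is the per-block calculation, and within it the most delicate point is controlling the contribution of pairs that diverge and later re-merge; the key input is that the restriction makes out-neighbourhoods of vertices into uniformly random subsets of $[\tilde n]$, so the per-layer re-merge probability is $O(1/\tilde n)$ while $\tilde n$ is chosen so large ($n^5$) that $k'/\tilde n$ is negligible against $D^{-k'}$. A second subtle point is bookkeeping: the per-block constant $2$ compounds to $2^{r'} = 2^{\Lambda\sqrt n}$, which is absorbed into $n^{o(\sqrt n)}$ only because $\Lambda$ is chosen to be a fixed constant rather than a function growing with $n$.
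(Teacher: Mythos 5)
Your overall route is the same as the paper's: replace ${N-k-\Delta \choose m}$ by ${N-k \choose m}D^{-\Delta}$ via Lemma~\ref{lem:approx}, factor $\sum_{\beta,\gamma}D^{-\Delta(\beta,\gamma)}$ over the $r'$ independent blocks, and within each block split pairs according to whether the two paths only share a prefix or ``re-merge'' after diverging; your prefix computation is essentially Lemma~\ref{lem:agreedisagree}. The gap is in the re-merge term. You bound the probability (over $V$) that two diverged paths ever re-merge by $O(k'/\tilde{n})$ and assert this is $o(D^{-k'})$. It is not: $k'/\tilde{n}=\Theta(\sqrt{n}/n^{5})$ is only polynomially small, while $D^{-k'}=2^{-\Theta(\sqrt{n})}$ is exponentially small (here $k'\approx \sqrt{n}/32$ and $D$ is close to $2$). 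Consequently, after multiplying the per-pair expectation by the $D^{2k'}$ pairs, your re-merge correction contributes $D^{2k'}\cdot O(k'/\tilde{n})=2^{\Theta(\sqrt{n})}\cdot n^{-O(1)}$, which is exponentially larger than the target $D^{k'}$, and the single-block bound does not follow as written.

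The missing idea is that you cannot charge a re-merging pair only for the $1/\tilde{n}$ collision events while bounding $D^{-d^{(i)}}$ crudely by $O(1)$: for $D^{-d^{(i)}}$ to be large the pair must also agree on many edge labels after re-merging, and under your uniform parametrization each such agreement costs an extra factor $1/\deg(j)$ that exactly offsets the count of candidate pairs. This joint accounting is what the paper's argument performs: for a fixed $\beta^{(i)}$ it enumerates re-merging $\gamma^{(i)}$ by the set $T$ of disagreeing coordinates, so the number of candidates is only $\prod_{j\in T}\deg(X^{(i,j)})$ rather than $D^{k'}$; Lemma~\ref{lem:degprod} gives $\prod_{j\in T}\deg(X^{(i,j)})\cdot D^{-|T|}\le n^{2}$, and combined with the $\tilde{n}^{-\max\{1,(t-1)/2\}}$ probability of the required agree-switches this yields the $O(1/n)$ bound of Lemma~\ref{lem:disagreeagree}, hence $O(D^{k'}/n)$ for the whole re-merge contribution of a block. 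With that repair, the rest of your argument (the block factorization and the bookkeeping $2^{r'}=n^{o(r)}$ for constant $\Lambda$) goes through.
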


We will prove these lemmas in Section~\ref{sec:immcal}

We will now show using Markov's inequality that $T_2|_V$ and $T_3|_V$ take values close to their expected values with a high probability. 
\begin{lem}~\label{lem:markov}
$$Pr_{V \gets \cal D}\left[T_2|_V < 20\cdot\E_{V' \leftarrow {\cal D}}[T_2|_{V'}] \wedge T_3|_V < 20\cdot\E_{V' \leftarrow {\cal D}}[T_3|_{V'}]\right] \geq 0.9$$
\end{lem}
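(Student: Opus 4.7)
The plan is to prove this lemma by a direct application of Markov's inequality to each of $T_2|_V$ and $T_3|_V$ separately, followed by a union bound. Both quantities are non-negative random variables (being sums of non-negative terms involving indicator variables and binomial coefficients), so Markov's inequality applies cleanly.

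First I would apply Markov's inequality to $T_2|_V$ to obtain
\[
\Pr_{V \gets \mathcal{D}}\!\left[T_2|_V \geq 20 \cdot \E_{V' \gets \mathcal{D}}[T_2|_{V'}]\right] \leq \frac{1}{20}.
\]
By the same reasoning applied to $T_3|_V$,
\[
\Pr_{V \gets \mathcal{D}}\!\left[T_3|_V \geq 20 \cdot \E_{V' \gets \mathcal{D}}[T_3|_{V'}]\right] \leq \frac{1}{20}.
\]
Then by a union bound, the probability that at least one of the two bad events occurs is at most $1/20 + 1/20 = 1/10$, so the probability that both good events hold simultaneously is at least $1 - 1/10 = 0.9$, which is precisely the statement of the lemma.

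There is essentially no obstacle here: the lemma is a routine probabilistic tail bound whose only prerequisite is non-negativity of $T_2|_V$ and $T_3|_V$, which is immediate from their definitions as sums of binomial coefficients times indicator variables $\mathbf{1}_{\alpha_1,\alpha_2,\beta_1,\beta_2}$. The real work lies elsewhere, namely in the expected-value bounds of Lemma~\ref{thm:t20} and Lemma~\ref{thm:t30}, which are invoked only later when one actually wants to use this concentration statement to extract a useful lower bound on the complexity measure. Hence this lemma is best viewed as a convenient black-box packaging step rather than a place where any genuinely new idea is needed.
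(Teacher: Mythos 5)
Your proof is correct and is exactly the argument the paper gives: the paper's proof of this lemma is the one-line remark that it "follows from Markov's inequality and the union bound," which you have simply spelled out. Nothing further is needed.
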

\begin{proof}
The proof follows from the Markov's inequality and the union bound.
\end{proof}
\noindent
Lemma~\ref{lem:markov} implies the following lemma, which we will use to prove a lower bound on the complexity of a random restriction of the $\imm$.  


\begin{lem}~\label{lem:markov2}
With probability at least $0.9$ over $V \leftarrow {\cal D}$, there exists a set ${\cal G}_V \subseteq \T(\immv)$ such that the following are true:
\begin{align*}
|{\cal G}_V| &\geq  \frac{4}{5}\cdot |\T(\immv)|\\
 \forall \alpha \in {\cal G}_V, T_2|_V(\alpha) &\leq 100\cdot \E_{V'\leftarrow {\cal D}}[T_2|_{V'}]/(n^r)
\end{align*}
\end{lem}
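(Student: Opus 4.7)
The plan is to obtain Lemma~\ref{lem:markov2} as a direct two-step Markov-style averaging argument on top of Lemma~\ref{lem:markov}. First, condition on the high-probability event from Lemma~\ref{lem:markov}: with probability at least $0.9$ over $V \gets \mathcal{D}$, the aggregate quantity satisfies $T_2|_V \leq 20\cdot \E_{V' \gets \mathcal{D}}[T_2|_{V'}]$ (we only need the $T_2$ half of that lemma here, not the $T_3$ half). Fix any such $V$. By definition, $T_2|_V = \sum_{\alpha \in \T(\immv)} T_2|_V(\alpha)$, and by Lemma~\ref{lem:reedsolomon} we have $|\T(\immv)| = n^r$ with probability $1$, so the average value of $T_2|_V(\alpha)$ over $\alpha \in \T(\immv)$ is at most $20 \cdot \E_{V'\gets \mathcal{D}}[T_2|_{V'}]/n^r$.

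Next, apply Markov's inequality over the uniform distribution on $\T(\immv)$: the fraction of monomials $\alpha \in \T(\immv)$ for which $T_2|_V(\alpha) > 100 \cdot \E_{V'\gets\mathcal{D}}[T_2|_{V'}]/n^r$ is at most
\[
\frac{\tfrac{1}{n^r}\sum_{\alpha \in \T(\immv)} T_2|_V(\alpha)}{100 \cdot \E_{V'\gets \mathcal{D}}[T_2|_{V'}]/n^r} \;=\; \frac{T_2|_V}{100\cdot \E_{V'\gets\mathcal{D}}[T_2|_{V'}]} \;\leq\; \frac{20}{100} \;=\; \tfrac{1}{5}.
\]
Define $\mathcal{G}_V \subseteq \T(\immv)$ to be the complement, namely the set of $\alpha \in \T(\immv)$ with $T_2|_V(\alpha) \leq 100\cdot \E_{V' \gets \mathcal{D}}[T_2|_{V'}]/n^r$. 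Then $|\mathcal{G}_V| \geq \tfrac{4}{5}|\T(\immv)|$ and the pointwise bound on $T_2|_V(\alpha)$ holds for every $\alpha \in \mathcal{G}_V$, exactly as required.

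The whole argument is soft: no properties of $\imm$ or of the distribution $\mathcal{D}$ beyond what is packaged into Lemma~\ref{lem:markov} and Lemma~\ref{lem:reedsolomon} are invoked. There is no real obstacle — the only thing to be careful about is bookkeeping the two layers of randomness (first the Markov bound on $T_2|_V$ across $V$, then the Markov bound on $T_2|_V(\alpha)$ across $\alpha$), and making sure the constants $20$ and $100$ multiply to give a failure probability of $1/5$ in the inner step so that the set of surviving $\alpha$'s has density at least $4/5$. The $0.9$ probability in the statement is simply inherited from the outer event in Lemma~\ref{lem:markov}.
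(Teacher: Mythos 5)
Your proposal is correct and is essentially identical to the paper's own argument: the paper also conditions on the event of Lemma~\ref{lem:markov}, defines ${\cal G}_V$ as the set of $\alpha$ with $T_2|_V(\alpha) \leq 100\cdot \E_{V'}[T_2|_{V'}]/n^r$, and derives $|{\cal G}_V|\geq \frac45|\T(\immv)|$ by the same averaging over the $n^r$ elements of $\T(\immv)$ (phrased there as a proof by contradiction rather than as Markov's inequality). No gaps.
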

\begin{proof}
Let $V \subseteq {\cal V}$ be such that the bounds in Lemma~\ref{lem:markov} hold. Let ${\cal G}_V$ be the set of $\alpha \in \T(\immv)$ such that $T_2|_V(\alpha) \leq 100\cdot \E_{V'\leftarrow {\cal D}}[T_2|_{V'}]/(n^r)$. We will now argue that $|{\cal G}_V| \geq \frac{4}{5}\cdot|\T(\immv)|$. Let us assume this is not true, then $\sum_{\alpha \in \T(\immv)} T_2|_V(\alpha) \geq \sum_{\alpha \in \T(\immv)\setminus {\cal G}_V} T_2|_V(\alpha) > \frac{1}{5} \cdot 100\cdot \E_{V'\leftarrow {\cal D}}[T_2|_{V'}]/(n^r)\cdot |\T(\immv)| = 20\cdot \E_{V'\leftarrow {\cal D}}[T_2|_{V'}]$ which contradicts the fact that $\sum_{\alpha \in \T(\immv)} T_2|_V(\alpha) = T_2|_V < 20\cdot \E_{V'\leftarrow {\cal D}}[T_2|_{V'}]$. 


\end{proof}

\begin{lem}~\label{lem:imm-complexity-lb}
With probability at least $0.9$ over $V \leftarrow {\cal D}$, there exists a set of monomials ${\cal G}_V$, each of degree equal to $r'$ such that 
$$\Phi_{{\cal G}_V, m}(\immv) \geq  \frac{n^r}{O(n^{(4/\Lambda)r})\cdot n^{o(r)}} \cdot D^k\cdot{N-k \choose m}$$
\end{lem}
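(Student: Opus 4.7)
The plan is to execute the two-stage strategy sketched in Section~\ref{sec:immstrat}: apply Lemma~\ref{lem:inc-exc-sample} once locally at each $\alpha$ to pass from unions of $S_m$-sets to unions of $A_m$-sets, then once globally across pairs $(\alpha,\beta)$ to lower bound the dimension via Lemma~\ref{lem:complexity1}.

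First I would condition on a restriction $V \gets {\cal D}$ that simultaneously satisfies the conclusions of Lemma~\ref{lem:markov} and Lemma~\ref{lem:markov2}; by a union bound this happens with probability at least $0.9$. For such $V$, we obtain a set ${\cal G}_V \subseteq \T(\immv)$ of size at least $(4/5)n^r$ with $T_2|_V(\alpha) \leq 100 \cdot \E_{V'\gets{\cal D}}[T_2|_{V'}]/n^r$ for every $\alpha \in {\cal G}_V$, together with the global bound $T_3|_V \leq 20 \cdot \E_{V'\gets{\cal D}}[T_3|_{V'}]$. Combining with Lemmas~\ref{lem:t11}, \ref{thm:t20} and \ref{thm:t30}, this translates to $T_1|_V(\alpha) = D^k {N-k \choose m}$ and $T_2|_V(\alpha) \leq T_1|_V(\alpha)\cdot n^{o(r)}$ for each $\alpha \in {\cal G}_V$, while globally $T_3|_V \leq n^r \cdot D^k {N-k \choose m} \cdot O(n^{(4/\Lambda)r})$.

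Next, for each $\alpha \in {\cal G}_V$, I would apply Lemma~\ref{lem:inc-exc-sample} to the family $\{S_m(\alpha,\beta) : \beta \in \simmv(\alpha)\}$ with parameter $\lambda = O(n^{o(r)})$, since $\sum_{\beta}|S_m(\alpha,\beta)| = T_1|_V(\alpha)$ while $\sum_{\beta_1 \neq \beta_2}|S_m(\alpha,\beta_1) \cap S_m(\alpha,\beta_2)|$ is dominated by $T_2|_V(\alpha)$. This yields $|\bigcup_{\beta}S_m(\alpha,\beta)| \geq T_1|_V(\alpha)/n^{o(r)}$, and Lemma~\ref{lem:ASrelation} upgrades it to $\sum_{\beta \in \simmv(\alpha)}|A_m(\alpha,\beta)| \geq T_1|_V(\alpha)/n^{o(r)}$. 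Summing over $\alpha \in {\cal G}_V$ gives $\sum_{\alpha \in {\cal G}_V,\,\beta \in \simmv(\alpha)}|A_m(\alpha,\beta)| \geq n^r \cdot D^k {N-k \choose m}/n^{o(r)}$. I would then invoke Lemma~\ref{lem:inc-exc-sample} a second time on the family $\{A_m(\alpha,\beta) : \alpha \in {\cal G}_V,\,\beta \in \simmv(\alpha)\}$, bounding the cross-intersection sum by $T_3|_V$. Dividing the bound on $T_3|_V$ by the previous step gives $\lambda = O(n^{(4/\Lambda)r})\cdot n^{o(r)}$, so Lemma~\ref{lem:inc-exc-sample} produces $|\bigcup A_m(\alpha,\beta)| \geq n^r \cdot D^k {N-k \choose m}/(O(n^{(4/\Lambda)r})\cdot n^{o(r)})$, and Lemma~\ref{lem:complexity1} converts this into the desired lower bound on $\Phi_{{\cal G}_V,m}(\immv)$.

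The main subtlety I anticipate is matching the hypothesis of Lemma~\ref{lem:inc-exc-sample} to the $T_2$ and $T_3$ quantities we actually control: in the $\imm$ section these are defined via binomial estimates ${N-k-\Delta \choose m}$ and ${N-k-\Delta \choose m-\Delta}$ rather than as raw intersection sizes, so one has to verify term-by-term that each binomial genuinely upper bounds the corresponding $|S_m \cap S_m|$ or $|A_m \cap A_m|$, and that the strictly off-diagonal portion (which is what Lemma~\ref{lem:inc-exc-sample} requires) is still dominated by the expectation bounds from Lemmas~\ref{thm:t20} and \ref{thm:t30}. Once this bookkeeping is in place, the two applications of Lemma~\ref{lem:inc-exc-sample} plug in mechanically.
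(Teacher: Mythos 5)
Your proposal is correct and essentially identical to the paper's proof: the paper conditions on a $V$ satisfying both Lemma~\ref{lem:markov} and Lemma~\ref{lem:markov2}, applies Lemma~\ref{lem:inc-exc-sample} once per $\alpha \in {\cal G}_V$ to pass from $\bigcup_\beta S_m(\alpha,\beta)$ to $\sum_\beta |A_m(\alpha,\beta)|$ via Lemma~\ref{lem:ASrelation}, then applies Lemma~\ref{lem:inc-exc-sample} a second time globally against $T_3|_V$ and finishes with Lemma~\ref{lem:complexity1}. The subtlety you flag is also handled the same way in the paper: $|S_m(\alpha,\beta_1)\cap S_m(\alpha,\beta_2)|$ equals ${N-k-\Delta\choose m}$ exactly, while ${N-k-\Delta\choose m-\Delta}$ is only an upper bound on $|A_m\cap A_m|$, which is the correct direction for the hypothesis of Lemma~\ref{lem:inc-exc-sample}.
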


\begin{proof}
Lemma~\ref{lem:markov2} guarantees that with a probability at least 0.9 over $V \leftarrow {\cal D}$, there exists a subset ${\cal G}_V \subseteq \T(\immv)$, satisfying
\begin{align*}
|{\cal G}_V| &\geq  \frac{4}{5}\cdot |\T(\immv)|\\
 \forall \alpha \in {\cal G}_V, T_2|_V(\alpha) &\leq 100\cdot \E_{V'\leftarrow {\cal D}}[T_2|_{V'}]/(n^r).
\end{align*}

Moreover, $T_2|_V < 20\cdot\E_{V' \leftarrow {\cal D}}[T_2|_{V'}]$ and $T_3|_V < 20\cdot\E_{V' \leftarrow {\cal D}}[T_3|_{V'}]$.
From the definition of sets $S_m(\alpha, \beta)$, and the above mentioned bounds, it follows  that for all $\alpha \in {\cal G}_V$
$$T_1|_V(\alpha) = \sum_{\beta \in \simmv(\alpha)} |S_m(\alpha, \beta)| = D^k\cdot{N-k \choose m}$$
and
$$T_2|_V(\alpha ) = \sum_{\substack{\beta_1, \beta_2 \in \simmv(\alpha) \\ \beta_1 \neq \beta_2}} |S_m(\alpha, \beta_1) \cap S_m(\alpha, \beta_2)| \leq 100\cdot n^{o(r)}\cdot D^k\cdot{N-k \choose m}$$
Hence, by Lemma~\ref{lem:inc-exc-sample}, we get that for all $\alpha \in {\cal G}_V$, 
$$\left|\bigcup_{\substack{\beta \in \simmv(\alpha)}} S_m(\alpha, \beta) \right| \geq \frac{1}{O(n^{o(r)})} \cdot D^k\cdot{N-k \choose m}$$
By Lemma~\ref{lem:ASrelation}, it follows that for all $\alpha \in {\cal G}_V$
$$\sum_{\beta \in \simmv(\alpha)} |A_m(\alpha, \beta)| \geq \left|\bigcup_{\substack{\beta \in \simmv(\alpha)}} S_m(\alpha, \beta) \right| \geq \frac{1}{O(n^{o(r)})} \cdot D^k\cdot{N-k \choose m}$$
Consequently, 
$$\sum_{\alpha \in {\cal G}_V} \sum_{\beta \in \simmv(\alpha)} |A_m(\alpha, \beta)| \geq \frac{1}{O(n^{o(r)})} \cdot D^k\cdot{N-k \choose m}\cdot |{\cal G}_V| \geq \frac{n^r}{O(n^{o(r)})} \cdot D^k\cdot{N-k \choose m}$$

Also,
$$\sum_{\substack{\alpha_1, \alpha_2 \in {\cal G}_V }} T_3|_V(\alpha_1, \alpha_2) \leq \sum_{\substack{\alpha_1, \alpha_2 \in \T(\immv) }} T_3|_V(\alpha_1, \alpha_2)= T_3|_V < 20 \E_{V' \gets \cal D}[T_3|_{V'}] \leq O(n^{(4/\Lambda)r}) \cdot n^r D^k\cdot{N-k \choose m},$$
and hence
$$\sum_{\substack{\alpha_1, \alpha_2 \in {\cal G}_V\\\beta_1 \in \simmv(\alpha_1) \\ \beta_2 \in \simmv(\alpha_2) \\ (\alpha_1, \beta_1) \neq (\alpha_2, \beta_2)}} |A_m(\alpha_1, \beta_1) \cap A_m(\alpha_2, \beta_2)|  = \sum_{\alpha_1, \alpha_2 \in {\cal G}_V} T_3|_V(\alpha_1, \alpha_2) \leq  O(n^{(4/\Lambda)r}) \cdot n^r D^k\cdot{N-k \choose m}$$
So, we have
$$\sum_{\substack{\alpha_1, \alpha_2 \in {\cal G}_V\\\beta_1 \in \simmv(\alpha_1) \\ \beta_2 \in \simmv(\alpha_2) \\ (\alpha_1, \beta_1) \neq (\alpha_2, \beta_2)}} |A_m(\alpha_1, \beta_1) \cap A_m(\alpha_2, \beta_2)| \leq O(n^{(4/\Lambda)r}) \cdot n^{o(r)} \cdot \sum_{\alpha \in {\cal G}_V} \sum_{\beta \in \simmv(\alpha)} |A_m(\alpha, \beta)|$$
Therefore, by Lemma~\ref{lem:inc-exc-sample}, we have 
$$\left|\bigcup_{\substack{\alpha \in {\cal G}_V \\ \beta \in \simmv(\alpha)}} A_m(\alpha, \beta)\right|\geq \frac{1}{O(n^{(4/\Lambda)r})\cdot n^{o(r)}}\cdot \sum_{\substack{\alpha \in {\cal G}_V \\ \beta \in \simmv(\alpha)}} |A_m(\alpha, \beta)| \geq \frac{n^r}{O(n^{(4/\Lambda)r})\cdot n^{o(r)}} \cdot D^k\cdot{N-k \choose m}$$
Now by Lemma~\ref{lem:complexity1}, $$\Phi_{{\cal G}_V, m}(\immv) \geq  \frac{n^r}{O(n^{(4/\Lambda)r})\cdot n^{o(r)}} \cdot D^k\cdot{N-k \choose m}$$
\end{proof}

\subsection{Wrapping up the proof}

We will now complete the proof of the main theorem. 
\begin{thm}~\label{thm:imm-lb}
Any homogeneous $\spsp$ circuit computing the polynomial $\imm$ has size at least $2^{\Omega(\sqrt{n}\log n)}$.
\end{thm}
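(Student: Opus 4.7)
The plan is to wrap up the proof by combining the three main ingredients already established: the circuit simplification under random restrictions (Lemma~\ref{lem:imm-circuit-simplification}), the generic upper bound on the complexity of low support $\spsp^{\{s\}}$ circuits (Lemma~\ref{lem:lowsupbound1}), and the high-probability lower bound on the complexity of the random restriction of $\imm$ (Lemma~\ref{lem:imm-complexity-lb}). This mirrors exactly the structure of the proof of Theorem~\ref{thm:mainthm} for $NW_{n,D}$, and most of the hard work has been pushed into the lemmas.

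First I would dispose of the easy case: if $\text{Size}(C)\geq n^{\sqrt n/128}$ then we are already done, since this is $2^{\Omega(\sqrt n \log n)}$. Otherwise $\text{Size}(C)<n^{\sqrt n/128}$, and I would apply Lemma~\ref{lem:imm-circuit-simplification} to conclude that with probability $\geq 1-o(1)$ over $V\gets\cal D$, the restriction $C|_V$ is a homogeneous $\spsp^{\{s\}}$ circuit with $s=\sqrt n/64$. For any such $V$ and any set ${\cal G}_V$ of degree-$r'$ monomials, Lemma~\ref{lem:lowsupbound1} (applied with parameters $r=r'$, $s=\sqrt n/64$, $m$, $N$ of Section~\ref{sec:imm-params}, and noting that $m+r's<N/2$) yields
$$\Phi_{{\cal G}_V,m}(C|_V)\leq \text{Size}(C)\cdot\binom{\lceil 2n/s\rceil+r'}{r'}\binom{N}{m+r's}.$$

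Second, I would invoke Lemma~\ref{lem:imm-complexity-lb}: with probability $\geq 0.9$ over $V\gets\cal D$ there exists a set ${\cal G}_V$ of degree-$r'$ monomials with
$$\Phi_{{\cal G}_V,m}(\immv)\;\geq\;\frac{n^r}{O(n^{(4/\Lambda)r})\cdot n^{o(r)}}\cdot D^k\binom{N-k}{m}.$$
A union bound shows both high-probability events occur simultaneously for some $V$. Since $C|_V$ computes $\immv$, the two quantities $\Phi_{{\cal G}_V,m}(\immv)$ and $\Phi_{{\cal G}_V,m}(C|_V)$ are equal, so rearranging gives
$$\text{Size}(C)\;\geq\;\Omega\!\left(\frac{n^r\cdot D^k\cdot \binom{N-k}{m}}{n^{(4/\Lambda)r+o(r)}\cdot \binom{\lceil 2n/s\rceil+r'}{r'}\cdot\binom{N}{m+r's}}\right).$$

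Finally, I would simplify this expression using the parameter choices from Section~\ref{sec:imm-params}. The key is the defining identity for $\Gamma$, namely $n^r=((N-m)/m)^k$, together with $D^k=(N/(N-m))^k$. Applying Lemma~\ref{lem:approx} to estimate the ratio of binomials,
$$\frac{\binom{N-k}{m}}{\binom{N}{m+r's}}\;\approx\;\left(\frac{N-m}{N}\right)^k\left(\frac{m}{N-m}\right)^{r's},$$
so that
$$n^r\cdot D^k\cdot\frac{\binom{N-k}{m}}{\binom{N}{m+r's}}\;\approx\;n^r\left(\frac{m}{N-m}\right)^{r's}\;=\;n^{r(1-r's/k)}.$$
With the chosen $r=\sqrt n$, $r'=\Lambda r=32\sqrt n$, $s=\sqrt n/64$, $k=n-2r'$, one has $r's/k\to 1/2$, giving $n^{\Omega(\sqrt n)}$. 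The polynomial corrections $n^{(4/\Lambda)r+o(r)}$ and $\binom{\lceil 2n/s\rceil+r'}{r'}=2^{O(r')}$ are absorbed since $\Lambda=32$ is a large enough constant. Hence $\text{Size}(C)\geq n^{\Omega(\sqrt n)}=2^{\Omega(\sqrt n \log n)}$.

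The main obstacle here is purely bookkeeping: one must verify that the constants $\Lambda,\Gamma$ and the relative sizes of $r,r',s,k$ indeed make the exponent $r(1-r's/k)-(4/\Lambda)r-o(r)$ come out as $\Omega(\sqrt n)$, while simultaneously satisfying the hypotheses of every invoked lemma (in particular $m+r's\leq N/2$ and the support bound for $\text{Size}(C)<n^{\sqrt n/128}$). Since all the genuinely difficult structural arguments have been isolated in Lemmas~\ref{lem:imm-circuit-simplification} and~\ref{lem:imm-complexity-lb}, the wrap-up is essentially algebraic.
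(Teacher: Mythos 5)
Your proposal is correct and follows essentially the same route as the paper's own wrap-up of Theorem~\ref{thm:imm-lb}: dispose of the large-size case, apply Lemma~\ref{lem:imm-circuit-simplification} and Lemma~\ref{lem:lowsupbound1} for the circuit-side upper bound, apply Lemma~\ref{lem:imm-complexity-lb} for the polynomial-side lower bound, take a union bound, and simplify the resulting ratio using the defining identity $n^r = ((N-m)/m)^k$. The only cosmetic difference is that you write the simplified quantity as $n^{r(1-r's/k)}$ while the paper keeps it as $\bigl((N-m)/m\bigr)^{k-r's}$; these are algebraically identical given the definition of $\Gamma$, and both yield $n^{\Omega(\sqrt n)}$ once the $n^{(4/\Lambda)r + o(r)}$ and $2^{O(r')}$ corrections are subtracted.
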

\begin{proof}
Let $C$ be a homogeneous $\spsp$ circuit computing the polynomial $\imm$. If $\text{Size}(C) \geq n^{\frac{\sqrt{n}}{128}}$, then we have nothing to prove and we are done, else  Lemma~\ref{lem:imm-circuit-simplification} implies  that with a probability $1-o(1)$, the circuit $C|_V$ does not have any product gate in the bottom layer of support larger than $s = \frac{\sqrt{n}}{64}$. Also, $\text{Size}(C|_V) \leq \text{Size}(C)$. Therefore, for any set ${\cal G}_V$ of monomials of degree $r'$ and any positive integer $m$, 
\begin{align}~\label{eqn:bound1}
\Phi_{{\cal G}_V, m}(C|_V) &\leq \text{Size}(C|_V)\cdot {\lceil\frac{2n}{s}\rceil  + r'  \choose r'} \cdot {N \choose m+r's}  
\end{align}

From Lemma~\ref{lem:imm-complexity-lb}, we also know that with a probability at least $0.9$, for random restriction $V \leftarrow {\cal D}$, there exists a set ${\cal G}_V$ of monomials of degree $r'$ such that 
\begin{align}~\label{eqn:bound2}
\Phi_{{\cal G}_V, m}(\immv) \geq  \frac{n^r}{O(n^{(4/\Lambda)r})\cdot n^{o(r)}} \cdot D^k\cdot{N-k \choose m}
\end{align}

Therefore, with a probability at least $0.9-o(1)$, both these bounds hold. Since the circuit $C|_V$ computes the polynomial $\immv$. Hence, $\Phi_{{\cal G}_V, m}(C|_V) \geq \Phi_{{\cal G}_V, m}(\immv)$ for all $V$. Plugging back the values from above, and the observation that  $\text{Size}(C|_V) \leq \text{Size}(C)$, we get 
\begin{align}~\label{eqn:1}
\text{Size}(C) &\geq \frac{\frac{n^r}{O(n^{(4/\Lambda)r})\cdot n^{o(r)}} \cdot D^k\cdot{N-k \choose m}}{{\lceil\frac{2n}{s}\rceil + r' \choose r'} \cdot {N \choose m+r's}} \\
\end{align}
From our choice of parameters
\begin{itemize}
\item  $r' = \Lambda r$
\item $n^r\cdot D^k = \left(\frac{N}{m}\right)^k$
\item $k = n-2r'$
\item $m = \frac{N}{2}\left(1-\frac{\ln n}{\Gamma \sqrt{n}} \right)$
\item $s = \frac{\sqrt{n}}{64}$
\item $\Lambda = 32$
\end{itemize}
For these choice of parameters, observe that 
\begin{itemize}
\item ${\lceil\frac{2n}{s}\rceil + r'  \choose r'} = 2^{O(\sqrt{n})}$
\item $\frac{{N-k \choose m}}{{N \choose m+r's}} = \frac{N-k!}{N!} \cdot \frac{(m+r's)!}{m!} \cdot \frac{(N-m-r's)!}{(N-m-k)!} \approx  \frac{m^{r's}}{N^k} \cdot \frac{(N-m)^k}{(N-m)^{r's}}$
\end{itemize}
Plugging the value of the parameters and the bounds above back into  equation~\ref{eqn:1}, we get 
\begin{align*}
\text{Size}(C) &\geq \frac{\frac{n^r}{O(n^{(4/\Lambda)r})\cdot n^{o(r)}} \cdot D^k\cdot{N-k \choose m}}{{\lceil\frac{2n}{s}\rceil + r' \choose r'} \cdot {N \choose m+r's}} \\
&\geq \frac{1}{O(n^{(4/\Lambda)r})\cdot n^{o(r)}} \cdot  \left(\frac{N}{m}\right)^k \cdot 2^{-O(\sqrt{n})} \cdot \frac{m^{r's}}{N^k} \cdot \frac{(N-m)^k}{(N-m)^{r's}} \\
&=\frac{2^{-O(\sqrt{n})} }{O(n^{(4/\Lambda)r})\cdot n^{o(r)}} \cdot \left(\frac{N-m}{m}\right)^{k-r's}  \\
&=\frac{2^{-O(\sqrt{n})} }{O(n^{(4/\Lambda)r})\cdot n^{o(r)}} \cdot \left(\frac{1+ \frac{\ln n}{\Gamma \sqrt{n}}}{1-\frac{\ln n}{\Gamma \sqrt{n}}}\right)^{k-r's}  \quad \quad \text{by substituting } m = \frac{N}{2}\left(1-\frac{\ln n}{\Gamma \sqrt{n}} \right) \\
&\geq \frac{2^{-O(\sqrt{n})} }{O(n^{(4/\Lambda)r})\cdot n^{o(r)}} \cdot \left(1+\frac{\ln n}{\Gamma \sqrt{n}} \right)^{k-r's} \\
&\geq \frac{2^{-O(\sqrt{n})} }{O(n^{(4/\Lambda)r})\cdot n^{o(r)}} \cdot e^{(n-2r'-r's)\frac{\ln n}{\Gamma \sqrt{n}}} \quad \quad \quad \quad \quad \text{since } k = n-2r'\\
&\geq \frac{2^{-O(\sqrt{n})} }{O(n^{(4/\Lambda)r})\cdot n^{o(r)}} \cdot n^{\frac{\sqrt n}{\Gamma} - \frac{r'(2+s)}{\Gamma \sqrt{n}}}\\
&\geq \frac{2^{-O(\sqrt{n})} }{O(n^{(4/\Lambda)r})\cdot n^{o(r)}} \cdot n^{\frac{\sqrt n}{\Gamma} - \frac{\Lambda r(2+s)}{\Gamma \sqrt{n}}}\\
&\geq \frac{2^{-O(\sqrt{n})} }{O(n^{(4/\Lambda)\sqrt{n}})\cdot n^{o(r)}} \cdot n^{\frac{\sqrt n-\Lambda s}{\Gamma} } \quad \quad \quad \quad \quad \text{by substituting }r = \sqrt{n}\\
\end{align*}
Now, by substituting $\Lambda = 32$, $\Gamma = 2+ o(1)$ and $s = \frac{\sqrt{n}}{64}$, we obtain 
$$\text{Size(C)} \geq 2^{-O(\sqrt{n})}\cdot n^{\Omega(\sqrt{n})}.$$
\end{proof}

\section{Calculations for $\imm$}~\label{sec:immcal}
In this section, we provide the calculations which establish the bounds in Lemma~\ref{lem:t11}, Lemma~\ref{thm:t20}, Lemma~\ref{thm:t30}. In the next section, we will first prove technical results that will be the building blocks of the lemmas.

\subsection{Preliminary lemmas}

\begin{prop}\label{prop:t2t3main}
For all $\beta \in \simm_X$,
$$\E_{V\gets \cal D}\left[\sum_{\gamma \in (\simmv)_X} D^{-\Delta(\beta,\gamma)}\right] \leq n^{o(r)}.$$
\end{prop}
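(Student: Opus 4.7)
The plan is to exploit the block structure of $\imm^{\ast}$: both sides of the bound factor cleanly over the $r'$ blocks, so the proof reduces to a single-block transfer-matrix calculation.

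\textbf{Step 1 (Factorization over blocks).} First I would observe that $\simm_X$ splits as a Cartesian product over blocks, $\simm_X = \prod_{i=1}^{r'} \simm_X^{(i)}$, that the random restriction $\mathcal D$ acts independently on distinct matrices, and that since the $x$-variables in different blocks are disjoint the distance is additive, $\Delta(\beta,\gamma) = \sum_{i=1}^{r'} \Delta(\beta^{(i)}, \gamma^{(i)})$. These three facts together yield
$$\E_{V \gets \mathcal D}\Big[\sum_{\gamma \in (\simmv)_X} D^{-\Delta(\beta,\gamma)}\Big] \;=\; \prod_{i=1}^{r'} E_i, \qquad E_i \;:=\; \E\Big[\sum_{\gamma^{(i)} \in (\simmv)_X^{(i)}} D^{-\Delta(\beta^{(i)}, \gamma^{(i)})}\Big].$$
Since $r' = \Lambda r$ with $\Lambda$ a constant, it suffices to show $E_i \le n^{o(1)}$ per block.

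\textbf{Step 2 (Single-block aliveness probability).} A single path $\gamma^{(i)}$ through block $i$ picks one variable from each of the $k'$ regular matrices of the block, which are restricted independently. Hence
$$\Pr[\gamma^{(i)} \text{ alive}] \;=\; \prod_{j=1}^{k'} \frac{\deg(j)}{\tilde n} \;=\; \frac{n^{\eta} \cdot 2^{k' - 2\log n - 1}}{\tilde n^{k'}} \;=\; \frac{D^{k'}}{\tilde n^{k'}},$$
where the final equality is precisely the defining relation for $\eta$ from Section~\ref{sec:imm-params}. Consequently, by linearity of expectation $E_i = (D^{k'}/\tilde n^{k'})\cdot S_i$ with $S_i := \sum_{\gamma^{(i)} \in \simm_X^{(i)}} D^{-\Delta(\beta^{(i)}, \gamma^{(i)})}$, and the task reduces to an asymptotic evaluation of the deterministic sum $S_i$.

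\textbf{Step 3 (Transfer matrix for $S_i$).} Writing each $\gamma^{(i)}$ as a vertex sequence $(v_0,\ldots,v_{k'})$, the indicator factors as $D^{-\Delta} = \prod_j D^{-\mathbf{1}[(v_{j-1},v_j)\ne (v^\beta_{j-1},v^\beta_j)]}$. Encoding layer $j$ by the $\tilde n\times\tilde n$ matrix
$$T^{(j)} \;=\; D^{-1} \mathbf{1}\mathbf{1}^T + (1 - D^{-1})\, e_{v^\beta_{j-1}}\, e_{v^\beta_j}^{T},$$
which is a rank-one perturbation of a scaled all-ones matrix, one has $S_i = u^{T} T^{(1)}T^{(2)}\cdots T^{(k')} \mathbf{1}$ for the appropriate start-vector $u$ that reflects the constraint coming from $\beta^{(i)}$'s starting vertex. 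Propagating the ansatz that the current state has the form $\alpha_j \mathbf{1} + \mu_j e_{v^\beta_{?}}$ through the product yields the two-term recurrence $\alpha_{j+1} = D^{-1}(\tilde n\,\alpha_j + \mu_j)$ and $\mu_{j+1} = (1 - D^{-1})(\alpha_j + \mu_j)$, whose dominant solution is $\alpha_j \approx (D^{-1}\tilde n)^j$. Substituting yields $S_i \approx D^{-k'}\tilde n^{k'}$, and combining with the aliveness factor cancels the large $\tilde n^{k'}$ to leave $E_i = n^{o(1)}$ per block, hence $\prod_i E_i \le n^{o(r')} = n^{o(r)}$.

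\textbf{Main obstacle.} The hard part is rigorously controlling the accumulation of the lower-order $(1-D^{-1})$ corrections through the transfer-matrix recurrence: naively iterating them across the $k' \approx \sqrt n/\Lambda$ layers can compound to a factor polynomial in $\tilde n$, which would destroy the desired bound. Making this precise relies on two cancellations baked into the parameter choices, namely the subpolynomial factor $(D/2)^{k'} = n^{-\Theta(1/(\Gamma\Lambda))}$ built into the choice of $\eta$, and the $\Theta(\ln n/\sqrt n)$ gap between $D$ and $2$ that comes from the choice of $m$. In addition, the three distinct layer types — the single initial $n^\eta$-degree layer, the middle degree-$2$ layers, and the final $2\log n$ degree-$1$ layers — must each be folded into the transfer-matrix analysis with their correct normalizations, although the way $\eta$ is defined is exactly what makes them combine cleanly.
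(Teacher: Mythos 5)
Your route is genuinely different from the paper's and its core identity is sound. The paper fixes a restriction $V$ and bounds the per-block sum combinatorially: it partitions the surviving $\gamma^{(i)}$ into those whose agreements with $\beta^{(i)}$ form a prefix (counted deterministically layer by layer via the restricted degrees, Lemma~\ref{lem:agreedisagree}) and those containing an ``agree switch'' (bounded in expectation, paying $1/\tilde{n}$ per switch from the randomness of the edge endpoints, Lemma~\ref{lem:disagreeagree}). You instead push the expectation all the way through: since every path in a block survives with the same probability $\prod_j \deg(j)/\tilde{n} = D^{k'}/\tilde{n}^{k'}$, linearity of expectation reduces everything to a deterministic sum over the \emph{unrestricted} complete layered graph, which your transfer matrix evaluates. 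Your recurrence is correct, and this is arguably a cleaner (and in fact tighter) computation. But you misidentify the difficulty. The corrections in the recurrence are harmless: $\mu_j/\alpha_j = O(1/\tilde{n})$ after one step, each feedback of $\mu$ into $\alpha$ is suppressed by $\tilde{n}^{-1}$ relative to the main term $D^{-1}\tilde{n}\alpha_j$, so over $k'=O(\sqrt{n})$ layers the accumulated multiplicative error is $1+O(k'/\tilde{n})$, nowhere near polynomial in $\tilde{n}$. Moreover the layer degrees and the defining relation for $\eta$ never enter $S_i$ at all --- the transfer matrix lives on complete bipartite layers --- they are consumed once and for all in the survival probability of Step 2. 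The ``cancellations baked into the parameter choices'' that you invoke for Step 3 do no work there.

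The genuine gap is your ``appropriate start-vector $u$.'' As defined, neither $\simm_X^{(i)}$ nor $(\simmv)_X^{(i)}$ ties $\gamma^{(i)}$'s start vertex to $\beta^{(i)}$'s: an element of $\simm_X^{(i)}$ is an arbitrary path $(v_0,\dots,v_{k'})\in[\tilde{n}]^{k'+1}$, with $v_0$ restricted after the random restriction only to the $\tilde{n}^{3/4}$ alive columns of $Y^{(i)}|_V$. Your Step 2 sum $S_i$ therefore ranges over all start vertices, and your own machinery with $u=\mathbf{1}$ gives $S_i \approx \tilde{n}\cdot D^{-k'}\tilde{n}^{k'}$, hence $E_i$ polynomial in $n$ per block (about $\tilde{n}^{3/4}=n^{15/4}$ once the $Y$-aliveness factor is included), which destroys the bound after taking the product over $r'$ blocks. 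The estimate $S_i\approx D^{-k'}\tilde{n}^{k'}$ requires $u$ to be a single standard basis vector, i.e.\ each block's start vertex must be pinned to one value; this happens in every application of the proposition because $\gamma$ there ranges over $\simmv(\alpha')$ for a fixed $\alpha'$, and the alive $Y^{(i)}$-variable of $\alpha'$ forces $v_0$. Asserting that the constraint ``comes from $\beta^{(i)}$'s starting vertex'' is not correct, and it is exactly where a factor of $\tilde{n}$ per block is hiding; you need to state and use the start-vertex pinning explicitly. (In fairness, the paper's proof makes the same implicit assumption --- the $j=0$ case of the bound on $|\B_V^{(i,j)}(\beta^{(i)})|$ also omits the count of possible start vertices --- so the proposition as literally stated should be read with the start vertices fixed.)
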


The proof follows from Lemma~\ref{lem:calcsum} that we state and prove below. We give the formal proof at the end of the subsection. 

For any monomial $\beta \in \simm_X$, we define $\beta^{(i)}\in \simm_X^{(i)}$ to be the resulting monomial after setting all the nonzero variables that are not in the $i$th block to $1$. 

\begin{lem}\label{lem:calcsum}
For all $\beta^{(i)} \in \simm_X^{(i)}$,
$$\E_{V\gets \cal D}\left[\sum_{\gamma^{(i)} \in (\simmv)_X^{(i)}} D^{-\Delta(\beta^{(i)},\gamma^{(i)})}\right] \leq O(1).$$
\end{lem}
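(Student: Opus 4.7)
My plan is to rewrite the expectation by linearity and exploit the independence of the random restriction across different matrices and rows. Since the $(k'+2)$-matrix block is treated independently layer by layer, and since within each regular matrix every row is an independent uniformly random subset of size $\deg(j)$, for any fixed vertex-path $\gamma = (u_0, u_1, \ldots, u_{k'}) \in [\tilde n]^{k'+1}$ through block $i$'s $k'$ regular matrices we have $\Pr[\gamma \text{ alive}] = \prod_{j=1}^{k'} (\deg(j)/\tilde n)$. In particular,
\[
\E_{V \gets \mathcal D}\!\left[\sum_{\gamma^{(i)} \in (\simmv)_X^{(i)}} D^{-\Delta(\beta^{(i)},\gamma^{(i)})}\right] \;=\; \sum_{\gamma \in \simm_X^{(i)}} \Pr[\gamma \text{ alive}] \cdot D^{-\Delta(\beta^{(i)},\gamma)}.
\]

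Next I would compute this sum by a transfer-matrix / backward recursion over the layers $j = k', k'{-}1, \ldots, 0$. Write $(w_0, w_1, \ldots, w_{k'})$ for the vertex sequence of $\beta^{(i)}$, and for each vertex $v \in [\tilde n]$ and each $j$ define $F_j(v)$ to be the expected contribution to the sum coming from alive paths that sit at $v$ at position $j$ and traverse layers $j{+}1, \ldots, k'$. Setting $S_j = \sum_v F_j(v)$ and using the base case $F_{k'} \equiv 1$, the recurrence splits depending on whether the current vertex agrees with $\beta^{(i)}$:
\[
F_{j-1}(v) \;=\;
\begin{cases}
D^{-1}(\deg(j)/\tilde n)\,S_j, & v \ne w_{j-1}, \\
D^{-1}(\deg(j)/\tilde n)\,S_j + (1 - D^{-1})(\deg(j)/\tilde n)\,F_j(w_j), & v = w_{j-1}.
\end{cases}
\]

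The core of the argument is then the parameter identity $\prod_{j=1}^{k'} \deg(j) = D^{k'}$, which is engineered directly into the definition of $\eta$ in Section~\ref{sec:imm-params}. This makes the ``generic'' per-layer multiplier $D^{-1}\deg(j)$ telescope to exactly $1$ across the $k'$ layers, so the dominant term in $S_0$ collapses to a bounded constant and the only $\beta^{(i)}$-dependent contribution comes from the ``matching'' corrections $(1 - D^{-1})(\deg(j)/\tilde n)\,F_j(w_j)$ that appear at the single vertex $v = w_{j-1}$ in each layer.

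The main obstacle will be showing that the accumulated matching corrections remain $O(1)$. I plan to handle this by proving, inductively on $j$ (from $k'$ down to $0$), the invariant that $F_j(w_j)/S_j$ stays of order $1/\tilde n$: the default value of $F_{j-1}(v)$ at non-matching vertices is $D^{-1}(\deg(j)/\tilde n) S_j$, so the matching term inflates $F_{j-1}(w_{j-1})$ only by a multiplicative factor $1 + (D-1)\,F_j(w_j)/S_j$, which under the invariant is $1 + O(1/\tilde n)$. Given this, each matching perturbation contributes a relative factor of order $\deg(j)/\tilde n^2$; summing the $k'$ such factors and using $\tilde n = n^5 \gg k'\cdot n^\eta$ yields a cumulative multiplicative correction of $1 + o(1)$, which combined with the telescoping main term gives the claimed $O(1)$ bound. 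The delicate point is that the invariant must be maintained in the face of the matching term (which mildly concentrates $F$ at $w_j$), but the smallness of $1/\tilde n$ relative to the per-layer factors is exactly what makes this stable.
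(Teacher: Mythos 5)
Your backward recursion for $F_j(v)$ and the invariant $F_j(w_j)/S_j = O(1/\tilde n)$ are set up correctly, but the conclusion $S_0 = O(1)$ does not follow. The base case is $F_{k'} \equiv 1$, so $S_{k'} = \tilde n$, and your recurrence $S_{j-1} = \deg(X^{(i,j)})\,D^{-1}S_j + (\deg(X^{(i,j)})/\tilde n)(1-D^{-1})F_j(w_j)$ then gives $S_0 = S_{k'}\cdot\prod_{j=1}^{k'}\deg(X^{(i,j)})D^{-1}\cdot(1+o(1)) = \tilde n\,(1+o(1))$; the telescoping $\prod_j\deg(X^{(i,j)}) = D^{k'}$ kills the product of per-layer multipliers but not the base value $S_{k'}=\tilde n$. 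This factor of $\tilde n$ is genuine: since the start vertex $u_0$ is free in $(\simmv)_X^{(i)}$, that set has $\approx \tilde n\, D^{k'}$ elements and the typical $\gamma^{(i)}$ is at distance $k'$ from $\beta^{(i)}$, so the sum as literally written is $\Theta(\tilde n)$, exactly as your recursion shows. What your argument actually proves is the single-start-vertex marginal $F_0(u_0) = O(1)$ for each fixed $u_0\in[\tilde n]$, and that is the quantity you should report, not $S_0 = \sum_v F_0(v)$.

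The fixed-start reading is in fact what is needed downstream and what the paper's proof implicitly establishes: wherever Lemma~\ref{lem:calcsum} or Proposition~\ref{prop:t2t3main} is invoked, $\gamma$ runs over $\simmv(\alpha)$ or $\simmv(\alpha')$, whose block-$i$ projection has its start vertex pinned by the $Y^{(i)}$-variable of $\alpha$ (resp.\ $\alpha'$). Correspondingly, the paper's bound $|\B_{V}^{(i,0)}(\beta^{(i)})|\leq\prod_{j'=1}^{k'}\deg(X^{(i,j')})$ in Lemma~\ref{lem:agreedisagree}, and the count of label sequences in Lemma~\ref{lem:disagreeagree} when coordinate $1$ is a disagreeing coordinate, both silently treat $u_0$ as determined; with a free $u_0$ each would pick up the same extra factor of $\tilde n$ and the stated $O(1)$ and $O(1/n)$ bounds would fail. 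Once you restrict to a fixed $u_0$ and report $F_0(u_0)$, your transfer-matrix computation is a clean, unified alternative to the paper's two-case analysis: the paper splits $\gamma^{(i)}$ into $\B_{V}^{(i)}(\beta^{(i)})$ (monotone disagreement, handled by a deterministic prefix count) and $\A_{V}^{(i)}(\beta^{(i)})$ (contains an agree-switch, handled via the random edge labels at a cost of $1/\tilde n$ per agree-switch), whereas in your recursion both effects appear together, the correction $(\deg(X^{(i,j)})/\tilde n)(1-D^{-1})F_j(w_j)$ at $v = w_{j-1}$ playing precisely the role of the agree-switch probability.
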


\begin{proof}
The proof follows immediately from Lemmas~\ref{lem:agreedisagree} and~\ref{lem:disagreeagree} below by taking a sum of the two bounds. 
\end{proof}

For all $\beta^{(i)} \in \simm_X^{(i)}$, we define the following two sets. 
\begin{itemize}
\item $\A_{V}^{(i)}(\beta^{(i)})$ is the set of all $\gamma^{(i)} \in (\simmv)_X^{(i)}$ such that 
there is some $j \in [k'-1]$ such that $\gamma^{(i,j)} \neq \beta^{(i,j)}$ and  $\gamma^{(i,j+1)} = \beta^{(i,j+1)}$

\item $\B_{V}^{(i)}(\beta^{(i)})$ is the set of all $\gamma^{(i)} \in (\simmv)_X^{(i)}$ such that if for $j,j' \in [k']$ $\gamma^{(i,j)} = \beta^{(i,j)}$ and  $\gamma^{(i,j')} \neq \beta^{(i,j')}$, then $j' > j$. 
\end{itemize}

Observe that $\A_V^{(i)}(\beta^{(i)}) \cup \B_V^{(i)}(\beta^{(i)}) =  (\simmv)_X^{(i)}$ .

Thus we have partitioned the set of $\gamma^{(i)} \in (\simmv)_X^{(i)}$ into two sets $\A_V^{(i)}(\beta^{(i)})$ and $\B_V^{(i)}(\beta^{(i)})$, and we  estimate the expression in Lemma~\ref{lem:calcsum} separately as $\gamma^{(i)}$ varies in these sets. This calculation is carried out in Lemmas~\ref{lem:agreedisagree} and~\ref{lem:disagreeagree} below.

\begin{lem}\label{lem:agreedisagree}
For all $\beta^{(i)} \in \simm_X^{(i)}$,
$$\E_{V \gets \cal D}\left[\sum_{\gamma^{(i)} \in \B_{V}^{(i)}(\beta^{(i)})} D^{-\Delta(\beta^{(i)},\gamma^{(i)})}\right] \leq O(1).$$
\end{lem}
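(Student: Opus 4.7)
The plan is to parametrise $\gamma^{(i)} \in \B_V^{(i)}(\beta^{(i)})$ by the length $j^* \in \{0,1,\ldots,k'\}$ of its initial agreement prefix with $\beta^{(i)}$. The defining property of $\B_V^{(i)}$ is precisely that the set of layers on which $\gamma^{(i)}$ and $\beta^{(i)}$ agree forms a prefix of $[k']$, so $j^*$ is well-defined and $\Delta(\beta^{(i)},\gamma^{(i)}) = k' - j^*$. I will upper bound, for each $j^*$, the expected number of such $\gamma^{(i)}$, multiply by $D^{-(k'-j^*)}$, and then sum over $j^*$.

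For $j^* \geq 1$, any such $\gamma^{(i)}$ must begin at the same starting vertex as $\beta^{(i)}$ and share its first $j^*$ edges with $\beta^{(i)}$; this requires the corresponding $j^*$ variables of $\beta^{(i)}$ to lie in $V$, an event of probability $\prod_{j=1}^{j^*} \deg(j)/\tilde n$ because the per-layer restrictions are independent and within a single layer each specific variable lies in $V$ with probability $\deg(j)/\tilde n$. Conditional on this event, at each layer $j > j^*$ the graph $X^{(i,j)}|_V$ is left-regular of degree $\deg(j)$, so there are at most $\prod_{j>j^*}\deg(j)$ continuations; the additional constraint $\gamma^{(i,j^*+1)} \neq \beta^{(i,j^*+1)}$ at layer $j^*+1$ only decreases this count, and at all later layers $\gamma^{(i)}$ sits at a different left vertex from $\beta^{(i)}$, so disagreement is automatic. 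The $j^* = 0$ case is handled by the same bound with the empty-product convention. Using the parameter identity $\prod_{j=1}^{k'}\deg(j) = D^{k'}$ from Section~\ref{sec:imm-params}, this gives
\[
\E\bigl[\#\{\gamma^{(i)}\in\B_V^{(i)}(\beta^{(i)}) : \text{prefix length exactly } j^*\}\bigr] \;\le\; \prod_{j=1}^{j^*}\frac{\deg(j)}{\tilde n}\cdot\prod_{j=j^*+1}^{k'}\deg(j) \;=\; \frac{D^{k'}}{\tilde n^{j^*}}.
\]

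Multiplying by $D^{-(k'-j^*)}$ and summing yields a geometric series in $D/\tilde n$:
\[
\E_{V\gets\cal D}\Bigl[\sum_{\gamma^{(i)}\in\B_V^{(i)}(\beta^{(i)})} D^{-\Delta(\beta^{(i)},\gamma^{(i)})}\Bigr] \;\le\; \sum_{j^*=0}^{k'} D^{-(k'-j^*)}\cdot\frac{D^{k'}}{\tilde n^{j^*}} \;=\; \sum_{j^*=0}^{k'}\Bigl(\frac{D}{\tilde n}\Bigr)^{j^*}.
\]
Since $D \leq 2$ and $\tilde n = n^5$, the ratio $D/\tilde n$ is $O(n^{-5})$ and this series is trivially bounded by a constant, giving the $O(1)$ bound claimed. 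The one point I expect to require real care is justifying the crude continuation count: one has to verify that after conditioning on the first $j^*$ variables of $\beta^{(i)}$ being alive, the deterministic number of alive continuations at layers $j^*+1,\ldots,k'$ really is at most $\prod_{j>j^*}\deg(j)$, independently of which $V$ made those first $j^*$ variables alive. This factorisation of the expectation is exactly where the layered, independent structure of the random restriction described in Section~\ref{sec:imm} is used; once it is in place, everything else reduces to the elementary geometric sum above, and the essential quantitative input is just the large gap $\tilde n \gg D$.
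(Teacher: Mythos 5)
Your proof is correct, and although it uses the same decomposition as the paper (partition $\B_{V}^{(i)}(\beta^{(i)})$ by the exact length $j^*$ of the initial agreement prefix, note $\Delta(\beta^{(i)},\gamma^{(i)})=k'-j^*$, and sum), the mechanism by which you bound each piece is genuinely different. The paper bounds $|\B_{V}^{(i,j^*)}(\beta^{(i)})|$ \emph{deterministically} by $\prod_{j'>j^*}\deg(X^{(i,j')})$ for every $V$ in the support, and then has to show that $D^{-(k'-j^*)}\prod_{j'>j^*}\deg(X^{(i,j')})=D^{j^*}\prod_{j'\le j^*}\deg(X^{(i,j')})^{-1}$ is small; this requires a three-regime case analysis ($j^*=0$, $1\le j^*\le k'-2\log n$, $j^*>k'-2\log n$) that leans on the engineered degree sequence ($n^{\eta}$, then $2>D$, then $1<D$), with the tail summing to $1/(1-D^{-1})$. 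You instead exploit the randomness of $V$: for $j^*\ge 1$ the prefix can only be matched if $\beta^{(i)}$'s first $j^*$ variables all survive, an event of probability $\prod_{j\le j^*}\deg(j)/\tilde n$, so $\E\bigl[|\B_{V}^{(i,j^*)}(\beta^{(i)})|\bigr]\le D^{k'}/\tilde n^{\,j^*}$ and the whole sum collapses to the geometric series $\sum_{j^*}(D/\tilde n)^{j^*}$. This is cleaner and uses only the identity $\prod_j\deg(j)=D^{k'}$ together with $\tilde n\gg D$, not the ordering of the degrees; the trade-off is that you bound only the expectation (which is all the lemma asserts), whereas the paper's bound holds pointwise in $V$, and your argument would not by itself recover that stronger pointwise statement.

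Two small caveats. Your remark that after the first disagreement "disagreement is automatic" at all later layers is not literally true (the two paths can reconverge at a common left vertex and agree again); this is immaterial since such $\gamma^{(i)}$ belong to $\A_V^{(i)}(\beta^{(i)})$ and you are over-counting continuations anyway. Also, your $j^*=0$ term (bounded by $D^{k'}\cdot D^{-k'}=1$) implicitly assumes $\gamma^{(i)}$ ranges over paths with a fixed entry vertex into $X^{(i,1)}$; if the start vertex were free over all $\tilde n$ left vertices of $X^{(i,1)}|_V$, the fully disagreeing paths alone would contribute roughly $\tilde n$. The paper's counting makes exactly the same implicit assumption, and it is harmless because in every application $\gamma$ ranges over $\simmv(\alpha')$ for a fixed $\alpha'$, which pins down the start vertex in each block — but it is worth being aware that the lemma is really a statement about paths with a designated start.
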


\begin{proof}
We partition $\B_V^{(i)}(\beta^{(i)})$ into $k' +1$ sets, based on the number of locations $j$ for which $\gamma^{(i,j)} = \beta^{(i,j)}$. For $0 \leq j \leq [k']$, let  $\B_V^{(i,j)}(\beta^{(i)})$ be the set of all $\gamma^{(i)} \in (\simmv)_X^{(i)}$ such that $\gamma^{(i)}$ and $\beta^{(i)}$ agree on exactly the first $j$ variables.

We now bound the size of $\B_V^{(i,j)}(\beta^{(i)})$. Notice that once we fix $\beta^{(i)}$, the first $j$ variables of any $\gamma^{(i)}$ in $\B_V^{(i,j)}(\beta^{(i)})$ are determined. For each of the remaining variables $\gamma^{(i,j')}$ such that $j'> j$, the total number different choices they can take is at most $\deg(X^{(i,j')})$. 

Thus $$|\B_V^{(i,j)}(\beta^{(i)})| \leq \prod_{j' = j+1}^{k'} \deg(X^{(i,j')}).$$

Now, observe that $\prod_{j' =1}^{k'} \deg(X^{(i,j')}) = D^{k'}$. This follows from the exact choice of degrees and value of $D$ as set in the choice of parameters in Section~\ref{sec:imm-params}. Thus we get that

\begin{align*}
\sum_{\gamma^{(i)} \in \B_V^{(i,j)}(\beta^{(i)})}  D^{- \Delta(\beta^{(i)},\gamma^{(i)})} &\leq \prod_{j' = j+1}^{k'} \deg(X^{(i,j')}) \cdot  D^{- (k'-j)} \\
&= D^j \prod_{j' = 1}^j \deg(X^{(i,j')})^{-1}
\end{align*}

Now for $j= 0$, the expression above equals $1$. For $j > k'- 2\log n$, since $\deg(X^{(i,j)}) = 1$, thus $$\sum_{\gamma^{(i)} \in \B_V^{(i,j)}(\beta^{(i)})}  D^{- \Delta(\beta^{(i)},\gamma^{(i)})} \leq D^{- (k'-j)}.$$ For $j \leq  k'- 2\log n$, using the fact that $D < 2$, $\deg(X^{(i,1)}) = n^{\eta}$ and $\deg(X^{(i,j')}) = 2$ for $2\leq j' \leq k'-2\log n$, we get that

\begin{align*}
 \sum_{\gamma^{(i)} \in \B_V^{(i,j)}(\beta^{(i)})}  D^{- \Delta(\beta^{(i)},\gamma^{(i)})}  &\leq D^j \prod_{j' = 1}^j 
\deg(X^{(i,j')})^{-1} \\
&= \frac{D}{\deg(X^{(i,1)})} \cdot \prod_{j' = 2}^{j}\frac{D}{\deg(X^{(i,j')})} \\
&\leq \frac{2}{n^{\eta}}
\end{align*}
 Putting together these values for all values of $j$, and using the fact that $k' < n/2$, we get that 
\begin{align*}
\sum_{\gamma^{(i)} \in \B_V^{(i)}(\beta^{(i)})}  D^{- \Delta(\beta^{(i)},\gamma^{(i)})} &= \sum_{j= 0}^{k'} \sum_{\gamma^{(i)} \in \B_V^{(i,j)}(\beta^{(i)})}  D^{- \Delta(\beta^{(i)},\gamma^{(i)})}\\
&\leq 1 + (k'-2\log n)\cdot  \frac{2}{n^{\eta}} + \sum_{j= k'-2 \log n + 1}^{k'} D^{- (k'-j)}\\
&\leq 2 + \sum_{j=0}^{2\log n} D^{-j}\\
&\leq 2 + \frac{1}{1- D^{-1}}\\
&\leq 5
\end{align*}
\end{proof}

\begin{lem}\label{lem:disagreeagree}
For all $\beta^{(i)} \in \simm_X^{(i)}$,
$$\E_{V \gets \cal D}\left[\sum_{\gamma^{(i)} \in \A_{V}^{(i)}(\beta^{(i)})} D^{-\Delta(\beta^{(i)},\gamma^{(i)})}\right] \leq O(1/n).$$
\end{lem}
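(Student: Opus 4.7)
The plan is to partition $\A_V^{(i)}(\beta^{(i)})$ by the position of the disagree-then-agree event and apply a union bound. For each $j \in [k'-1]$, define
$$\A_V^{(i,j)}(\beta^{(i)}) := \{\gamma^{(i)} \in (\simmv)_X^{(i)} : \gamma^{(i,j)} \neq \beta^{(i,j)} \text{ and } \gamma^{(i,j+1)} = \beta^{(i,j+1)}\},$$
so that $\A_V^{(i)}(\beta^{(i)}) \subseteq \bigcup_{j=1}^{k'-1} \A_V^{(i,j)}(\beta^{(i)})$. By sub-additivity, it suffices to show $\E_V[\sum_{\gamma \in \A_V^{(i,j)}} D^{-\Delta}] = O(\n^{-5/4})$ for each $j$; summing over the $O(k') = O(\sqrt n)$ values of $j$ and using $\n = n^5$ then gives $O(\sqrt n / n^{25/4}) = O(n^{-23/4})$, which is much stronger than the stated $O(1/n)$.

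For a fixed $j$, I would pass to the graph view and parameterize each $\gamma^{(i)}$ by its vertex sequence $(u_1^\gamma, \ldots, u_{k'+1}^\gamma)$. The conditions $\gamma^{(i,j+1)} = \beta^{(i,j+1)}$ and $\gamma^{(i,j)} \neq \beta^{(i,j)}$ translate into $u_{j+1}^\gamma = u_{j+1}^\beta$, $u_{j+2}^\gamma = u_{j+2}^\beta$, and $u_j^\gamma \neq u_j^\beta$. The structural fact that drives the bound is that the random edge $(u_j^\gamma, u_{j+1}^\beta)$ at layer $j$ must be alive in $V$ even though $u_j^\gamma \neq u_j^\beta$, an event of marginal probability $\deg(X^{(i,j)})/\n$. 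I would then further parameterize $\gamma$ by the binary agreement pattern $B \in \{0,1\}^{k'+1}$ with $B_{j'} = 1$ iff $u_{j'}^\gamma = u_{j'}^\beta$ (so $B_j = 0$ and $B_{j+1} = B_{j+2} = 1$ are forced). For a given $B$, there are $(\n-1)^{z(B)}$ paths, each alive in $V$ with probability $\n^{-1/4}\cdot D^{k'}/\n^{k'}$ (by independence of edge samples across the $k'$ distinct matrices together with the defining identity $\prod_{j'} \deg(X^{(i,j')}) = D^{k'}$ from Section~\ref{sec:imm-params}), and satisfying $\Delta(\beta^{(i)},\gamma^{(i)}) = k' - A(B)$ where $A(B) = |\{j' \in [k'] : B_{j'} = B_{j'+1} = 1\}|$. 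This yields
$$\E_V\left[\sum_{\gamma \in \A_V^{(i,j)}(\beta^{(i)})} D^{-\Delta}\right] = \n^{-1/4 - k'} \sum_{B:\, B_j = 0,\, B_{j+1} = B_{j+2} = 1} (\n-1)^{z(B)} D^{A(B)}.$$

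Writing $B = (m_1, 0, 1, 1, m_2)$ with $m_1 \in \{0,1\}^{j-1}$ and $m_2 \in \{0,1\}^{k'-j-1}$, the combinatorial sum factors as $(\n - 1) D \cdot \phi_{j-1} \cdot \tilde\phi_{k'-j-1}$, where $\phi_\ell := \sum_{m \in \{0,1\}^\ell}(\n-1)^{z(m)} D^{A(m)}$ and $\tilde\phi_\ell$ carries an extra $D^{[m_1 = 1]}$ bonus (so $\tilde\phi_\ell \leq D\,\phi_\ell$). A short two-state transfer-matrix induction, splitting $\phi_\ell$ into the contributions from $m$ ending in $1$ versus $0$, gives the recurrence $\phi_\ell \leq (\n - 1 + D)\phi_{\ell-1}$, and hence $\phi_\ell \leq (\n - 1 + D)^\ell = O(\n^\ell)$ for $\ell \leq k'$ and $D = O(1)$. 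Plugging this back in gives the combinatorial sum at most $O(D^2 \n^{k'-1}) = O(\n^{k'-1})$, so $\E_V[\sum_{\gamma \in \A_V^{(i,j)}} D^{-\Delta}] = O(\n^{-5/4})$, and summing over $j$ completes the proof. The main obstacle will be the combinatorial bookkeeping: one must carefully separate the fixed part of $B$ from the free parts $m_1, m_2$, track the layer-dependent aliveness probabilities $\deg(X^{(i,j')})/\n$ (which equal $n^\eta/\n$, $2/\n$, or $1/\n$ depending on whether $j'$ is the first regular layer, a middle layer, or one of the last $2\log n$ layers), and verify the transfer-matrix bound on $\phi_\ell$; the key cancellation that makes everything clean is the identity $\prod_{j'} \deg(X^{(i,j')}) = D^{k'}$, which renders the background aliveness factor layer-invariant and interacts naturally with $D^{-\Delta} = D^{A-k'}$.
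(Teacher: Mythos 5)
Your proposal is correct in its essentials and yields the stated bound, but it takes a genuinely different route from the paper's. The paper partitions $\A_V^{(i)}(\beta^{(i)})$ by the \emph{number of switches} $t$, parameterizes candidates by the set $T$ of disagreeing coordinates together with \emph{edge labels} in $\prod_{j'\in T}[\deg(X^{(i,j')})]$, and bounds $\E[|\A_{V,t,T}^{(i)}|]$ by observing that each ``agree switch'' requires a random edge endpoint to re-join $\beta$'s path, an event of probability $1/\n$; it then controls $\prod_{j'\in T}\deg(X^{(i,j')})\cdot D^{-|T|}$ crudely by Lemma~\ref{lem:degprod} (the $n^2$ factor). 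You instead partition by the \emph{position} $j$ of a disagree-then-agree, parameterize candidates by the \emph{vertex-agreement pattern} $B$, and compute the unconditional aliveness probability of a fixed vertex path as $\prod_{j'}\deg(X^{(i,j')})/\n = D^{k'}/\n^{k'}$, which uses the normalization $\prod_{j'}\deg(X^{(i,j')}) = D^{k'}$ to make the aliveness layer-uniform and absorb the degree bookkeeping exactly; a two-state transfer matrix then controls the remaining sum over $B$. Your route avoids the $n^2$ slack of Lemma~\ref{lem:degprod} and replaces the $t$-and-$T$ enumeration with a single product-form sum, arguably a cleaner computation, at the cost of re-encoding the re-joining cost implicitly in the two forced ones $B_{j+1}=B_{j+2}=1$ of the pattern.

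Two small points. First, the factor $\n^{-1/4}$ in your per-path aliveness probability is spurious: $(\simmv)_X^{(i)}$ is the restriction of $\simm_X^{(i)}$, which involves only the $X$-matrix variables of block $i$, so the $Y^{(i)}$-column aliveness does not constrain membership. Dropping it, the per-$j$ bound is $O(1/\n)$ rather than $O(\n^{-5/4})$, and summing over $j\in[k'-1]$ gives $O(k'/\n) = O(\sqrt n/n^5)$, which is still comfortably $O(1/n)$; the conclusion is unaffected. Second, you should define $z(B)$ explicitly as $|\{j'\in[k'+1]:B_{j'}=0\}|$ (the forced zero at position $j$ contributes $\n-1$ and is accounted for by the leading $(\n-1)$ in your factorization), and note that the $\A_V^{(i,j)}$ form a cover (with possible overlaps, not a partition) of $\A_V^{(i)}$, which is exactly what the union-bound step needs.
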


\begin{proof}

For $\gamma^{(i)} \in \A_V^{(i)}(\beta^{(i)})$, we call a coordinate $j$ such that $2 \leq j \leq k'$ a {\it switch} if either $\gamma^{(i,j-1)} \neq \beta^{(i,j-1)}$ and  $\gamma^{(i,j)} = \beta^{(i,j)}$ or if $\gamma^{(i,j-1)} = \beta^{(i,j-1)}$ and  $\gamma^{(i,j)} \neq \beta^{(i,j)}$. In the first case we call it an {\it agree switch} and in the latter case we call it a {\it disagree switch}. It is clear from this definition that the sequence of switches for any $\gamma^{(i)}$ in $\A_V^{(i)}(\beta^{(i)})$ must alternate between agree switch and disagree switch. We also know that each member of $\A_V^{(i)}(\beta^{(i)})$ has at least one agree switch (by definition). 
 
We partition the set $\A_V^{(i)}(\beta^{(i)})$ according the the number of switch coordinates of its members. Let $\A_{V,t}^{(i)}(\beta^{(i)})$ be the set of all $\gamma^{(i)} \in \A_V^{(i)}(\beta^{(i)})$ containing exactly $t$ switches. 

Thus, to specify an element of $\A_{V,t}^{(i)}(\beta^{(i)})$ one needs to specify the locations $ S_t \subseteq [k']$ ($|S| =t$)  of its switch coordinates, and whether the first switch is an agree switch or a disagree switch, which can be specified by a bit $b\in \{0,1\}$. Once this information is known, this fully determines the set of coordinates $j$ for which $\gamma^{(i,j)} \neq \beta^{(i,j)}$. Let $\dis_{S_t,b}$ be this set of coordinates - we call these the disagreeing coordinates. For each one of these coordinates $j$ in $\dis_{S_t,b}$, one needs to specify the value of $\gamma^{(i,j)}$.

Given the values of all coordinates before the $j$th coordinate, the value of $\gamma^{(i,j)}$ can be one of only $\deg(X^{(i,j)})$ many choices, as it is determined by the {\it label} of the outgoing edge in the graph of $X^{(i,j)}$. Thus, once $\dis_{S_t,b}$ is determined , if $\dis_{S_t,b} = \{t_1, t_2, \ldots, t_s\}\subseteq [k']$ is the set of disagreeing coordinates, let $L(\dis_{S_t,b}) =\{(a_{t_1}, a_{t_2}, \ldots, a_{t_s}) : a_{t_j} \in [\deg(X^{(i,t_j)})]\}$ be set of labels of edges  the disagreeing coordinates could correspond to. Thus every $\gamma^{(i)}$ corresponding to the set $\dis_{S_t,b}$  of disagreeing coordinates would also correspond to some element of $L(\dis_{S_t,b})$.

Thus the maximum number of possible choices for $\gamma^{(i)} \in \A_{V,t}^{(i)}(\beta^{(i)})$ is at most the number of ways of choosing the set $\dis_{S_t,b}$, which is ${k' \choose t} \cdot 2 $, multiplied by $\prod_{j \in T}\deg(X^{(i,j)}).$

 However, not every element of $L(\dis_{S_t,b})$  would correspond to a choice of $\gamma^{(i)} \in \A_{V,t}^{(i)}(\beta^{(i)})$. The reason being that when a disagreeing coordinate appears right before an {\it agree switch}, the only way there can be an ``agree" after a ``disagree" is that the endpoint of a disagreeing edge coincides with the start point of an agree edge in the corresponding layered graph. However, for every edge label of the disagreeing edge, the end point was chosen to be a uniformly random element of $\n$ in the distribution $\cal D$. Thus this event happens only with probability exactly $1/\n$ for $V \gets \cal D$, and this is independent for each agree switch.  Thus for every fixing of $\dis_{S_t,b}$ coordinates corresponding to the disagreeing coordinates, and every sequence $s_t \in L(\dis_{S_t,b})$, the probability that the sequence corresponds to a $\gamma^{(i)} \in \A_{V,t}^{(i)}(\beta^{(i)})$ is at most the probability that for each agree switch, the endpoint of a disagreeing edge coincides with the start point of an agree edge. For each agree switch this happens independently with probability $1/\n$. Recall that the number of agree switches is at least $\max\{1, (t-1)/2\}$.

Let$\A_{V,t,T}^{(i)}(\beta^{(i)})$ be the set of all $\gamma^{(i)} \in \A_{V,t}^{(i)}(\beta^{(i)})$ containing exactly $t$ switches and such that $T$ is the set of disagreeing coordinates. 
$$\E_{V\gets \cal D}\left[|\A_{V,t,T}^{(i)}(\beta^{(i)})|\right] \leq  \prod_{j \in T}\deg(X^{(i,j)}) \cdot \frac{1}{\n^{\max\{1,(t-1)/2\}}}.$$

Before the final computation, we  need the following simple lemma:
\begin{lem}\label{lem:degprod}
$\forall i \in [r']$, $\forall T \subseteq [k']$, $\left(\prod_{j \in T}\deg(X^{(i,j)}) \right)\cdot  D^{-|T|} \leq n^2$. 
\end{lem}
\begin{proof}
Observe that since $1< D < 2$, thus for all $j$ such that $1 \leq j \leq k'-2\log n$, we have that $\deg(X^{(i,j)}) > D$, and for all $j$ such that $k'-2\log n < j \leq k'$, $\deg(X^{(i,j)}) < D$. 
Thus the expression  $\left(\prod_{j \in T}\deg(X^{(i,j)}) \right)\cdot  D^{-|T|}$ is maximized for $T = [k'-2\log n]$, and for this choice of $T$, $\prod_{j \in T}\deg(X^{(i,j)})  = D^{k'}$ and $D^{|T|} = \frac{D^{k'}}{D^{2\log n}}.$
Thus $\prod_{j \in T}\deg(X^{(i,j)}) \cdot  D^{-|T|} \leq D^{2 \log n} \leq n^2$.
\end{proof}

Thus 
\begin{align*}
\E_{V\gets\cal D}[\sum_{\gamma^{(i)} \in \A_{V,t,T}^{(i)}(\beta^{(i)})} D^{- \Delta(\beta^{(i)},\gamma^{(i)})}] &\leq \prod_{j \in T}\deg(X^{(i,j)}) \cdot \frac{1}{\n^{\max\{1,(t-1)/2\}}} \cdot D^{- \Delta(\beta^{(i)},\gamma^{(i)})} \\
&= \frac{1}{\n^{\max\{1,(t-1)/2\}}} \cdot\left(\prod_{j \in T}\deg(X^{(i,j)}) \right)\cdot  D^{-|T|}\\
&\leq \frac{1}{\n^{\max\{1,(t-1)/2\}}} \cdot n^2. \quad\quad (\text{by Lemma~\ref{lem:degprod}})
\end{align*}

Now, given $t$, there are at most $2 \cdot {k' \choose t}$ ways of choosing the set $T$. Thus $ \A_{V,t}^{(i)}(\beta^{(i)})$ can be written as a union of at most $2 \cdot {k' \choose t}$ different sets of the form  $ \A_{V,t,T}^{(i)}(\beta)$. Thus 
$$\E_{V\gets \cal D}\left[\sum_{\gamma^{(i)} \in \A_{V,t}^{(i)}(\beta^{(i)}) } D^{- \Delta(\beta^{(i)},\gamma^{(i)})}\right] 
\leq \frac{1}{\n^{\max\{1,(t-1)/2\}}} \cdot n^2 \cdot 2 \cdot {k' \choose t}.$$

Summing over the various choices of $t$, we get that $$\E_{V\gets \cal D}\left[\sum_{\gamma^{(i)} \in \A_V^{(i)}(\beta^{(i)})} D^{- \Delta(\beta^{(i)},\gamma^{(i)})}\right] \leq \sum_{t=1}^{k'} \frac{1}{\n^{\max\{1,(t-1)/2\}}} \cdot n^2 \cdot 2 \cdot {k' \choose t}.$$

Since $\n = n^5$ and $k' = O(\sqrt n)$, it is easily verified that  $$\E[\sum_{\gamma^{(i)} \in \A_V^{(i)}(\beta^{(i)})} D^{- \Delta(\beta^{(i)},\gamma^{(i)})}] \leq O(1/n).$$ 
\end{proof}

We now give a proof of Proposition~\ref{prop:t2t3main}.

\begin{proof}[Proof of Proposition~\ref{prop:t2t3main}]

For all $\beta \in \simm_X$, observe that
$$\sum_{\gamma \in (\simmv)_X} D^{-\Delta(\beta,\gamma)} = \prod_{i\in [r']} \sum_{\gamma^{(i)} \in (\simmv)^{(i)}_X} D^{-\Delta(\beta^{(i)},\gamma^{(i)})}.$$

Moreover, since the choice of $V \gets {\cal D}$ chooses variables in distinct matrices independently, thus

$$\E_{V\gets \cal D}\left[\sum_{\gamma \in (\simmv)_X} D^{-\Delta(\beta,\gamma)}\right] = \prod_{i \in [r']}\E_{V\gets \cal D}\left[\sum_{\gamma^{(i)} \in (\simmv)^{(i)}_X} D^{-\Delta(\beta^{(i)},\gamma^{(i)})}\right] \leq \left(O(1)\right)^{r'}  \leq  n^{o(r)},$$

Where the second to last inequality follows from Lemma~\ref{lem:calcsum}, and the last inequality follows form the fact that $r' = O(r)$.

\end{proof}

\subsection{Expected value of $T_1(\immv)$}



We now prove Lemma~\ref{lem:t11}. 
\begin{proof}[Proof of Lemma~\ref{lem:t11}]
For all $\alpha \in \T(\immv)$,

\begin{align*}
T_1|_V(\alpha) &= \sum_{\beta \in \simmv(\alpha)} |S_m(\alpha, \beta)|\\
&=\sum_{\beta \in \simmv(\alpha)} {N-k\choose m} \\
&= D^k \cdot{N-k\choose m} 
\end{align*}
\end{proof}

\subsection{Expected value of $T_2(\immv)$}








Let $V \gets \cal D$. 
Recall that $$T_2|_V= \sum_{\alpha \in \T(\immv)} \left(\sum_{\substack{\beta, \gamma \in \simmv(\alpha) \\ \beta\neq \gamma}} {N - k - \Delta(\beta,\gamma) \choose m}\right).$$
For $\alpha \in (\simmv)_Y$ and $\beta \in \simmv(\alpha)$,
let $$T_2|_V(\alpha, \beta) = \sum_{\substack{\gamma \in \simmv(\alpha)\\ \gamma \neq \beta}}{N - k - \Delta(\beta,\gamma) \choose m}.$$
For $\alpha \not \in (\simmv)_Y$ or $\beta \not \in \simmv(\alpha)$, let $T_2|_V(\alpha, \beta) = 0$. 
For every fixed $\alpha \in (\simmv)_Y$ and $\beta \in \simmv(\alpha)$,  $T_2|_V(\alpha, \beta)$ counts for every $\gamma\in \simmv(\alpha)$ such that $\gamma \neq \beta$, the number of multilinear shifts of degree $m$ that are disjoint from both $\beta$ and $\gamma$. It then takes the sum of this quantity over all $\gamma \in \simmv(\alpha)$.
We now prove Lemma~\ref{thm:t20} and Lemma~\ref{thm:t21}. In order to do so, we first bound $\E_{V \gets \cal D}[T_2|_V(\alpha, \beta)]$, and then sum over $\alpha$ and $\beta$ as appropriate to obtain Lemma~\ref{thm:t20} and Lemma~\ref{thm:t21}.

\begin{lem}\label{thm:t2alphabeta}
For $\alpha \in \simm_Y$ and $\beta \in \simm(\alpha)$,
$$\E_{V \gets \cal D}[T_2|_V(\alpha, \beta)] \leq {N - k \choose m}\cdot n^{o(r)}.$$
\end{lem}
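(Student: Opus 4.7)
The plan is to reduce this lemma to Proposition~\ref{prop:t2t3main} by first replacing the binomial coefficient $\binom{N-k-\Delta(\beta,\gamma)}{m}$ with an expression of the form $D^{-\Delta(\beta,\gamma)}\cdot \binom{N-k}{m}$, and then enlarging the sum from $\simmv(\alpha)$ to all of $(\simmv)_X$.

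First I would establish a deterministic comparison of binomial coefficients: for every integer $0 \le \Delta \le k$,
\[
\binom{N-k-\Delta}{m} \;\le\; C \cdot D^{-\Delta}\cdot\binom{N-k}{m},
\]
for some absolute constant $C$. To see this, note that
\[
\frac{\binom{N-k-\Delta}{m}}{\binom{N-k}{m}} \;=\; \frac{(N-k-\Delta)!}{(N-k)!}\cdot\frac{(N-k-m)!}{(N-k-m-\Delta)!},
\]
and since $\Delta \le k = O(n)$ while $N-k, N-k-m = \Theta(N) = \Theta(n^{11})$, an application of Lemma~\ref{lem:approx} controls each factorial ratio up to a constant. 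Hence the ratio is at most $C\cdot \left(\tfrac{N-k-m}{N-k}\right)^{\Delta}$, and a direct check (the function $k\mapsto \tfrac{N-k-m}{N-k}$ is decreasing in $k$) gives $\tfrac{N-k-m}{N-k} \le \tfrac{N-m}{N} = D^{-1}$, which yields the displayed inequality.

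Given this bound, I would argue as follows. If $\alpha \notin (\simmv)_Y$ or $\beta \notin \simmv(\alpha)$, then $T_2|_V(\alpha,\beta) = 0$ by definition, and the lemma holds trivially. Otherwise $\simmv(\alpha)\subseteq (\simmv)_X$, so
\[
T_2|_V(\alpha,\beta) \;=\;\sum_{\substack{\gamma\in\simmv(\alpha)\\ \gamma\ne\beta}}\binom{N-k-\Delta(\beta,\gamma)}{m} \;\le\; C\cdot\binom{N-k}{m}\sum_{\gamma\in(\simmv)_X} D^{-\Delta(\beta,\gamma)}.
\]
This upper bound holds for every $V$ in the support of $\mathcal D$ (interpreting the right-hand side as $0$ when $\gamma=\beta$ contributes $D^0=1$, which is absorbed into the constant). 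Taking expectation over $V\gets\mathcal D$ and applying Proposition~\ref{prop:t2t3main} to the fixed monomial $\beta\in\simm_X$, we get
\[
\E_{V\gets\mathcal D}[T_2|_V(\alpha,\beta)] \;\le\; C\cdot\binom{N-k}{m}\cdot n^{o(r)} \;=\; \binom{N-k}{m}\cdot n^{o(r)},
\]
as desired.

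I do not expect a real obstacle here; the substantive work has already been done in proving Proposition~\ref{prop:t2t3main} (i.e.\ in Lemmas~\ref{lem:agreedisagree} and~\ref{lem:disagreeagree}). The only mildly delicate point is verifying that the approximation $\binom{N-k-\Delta}{m}/\binom{N-k}{m} \lesssim D^{-\Delta}$ holds uniformly in $\Delta \le k$ with only a constant-factor loss, which is handled via Lemma~\ref{lem:approx} together with the monotonicity observation above. Everything else is a direct substitution and an appeal to the already-proved proposition.
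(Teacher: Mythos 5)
Your proposal is correct and follows essentially the same approach as the paper: factor out $\binom{N-k}{m}$, bound the ratio $\binom{N-k-\Delta}{m}/\binom{N-k}{m}$ by $D^{-\Delta}$ (up to a constant) via Lemma~\ref{lem:approx}, enlarge the sum from $\simmv(\alpha)$ to $(\simmv)_X$, and invoke Proposition~\ref{prop:t2t3main}. The only cosmetic difference is that you pass through $\frac{N-k-m}{N-k}\le D^{-1}$ explicitly via a monotonicity check, whereas the paper writes the ratio directly as $\approx\left(\frac{N-m}{N}\right)^{\Delta}$.
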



\begin{proof}

\begin{align*}
T_2|_V(\alpha, \beta) & = \sum_{\substack{\gamma \in \simmv(\alpha)\\ \gamma \neq \beta}} {N - k - \Delta(\beta,\gamma) \choose m}\\
&=  \sum_{\substack{\gamma \in \simmv(\alpha)\\ \gamma \neq \beta}} {N - k - \Delta(\beta,\gamma) \choose m} \cdot \frac{{N-k \choose m}}{{N-k \choose m}}\\
&= {N-k \choose m}\cdot \sum_{\substack{\gamma \in \simmv(\alpha)\\ \gamma \neq \beta}} \frac{ {N - k - \Delta(\beta,\gamma) \choose m} }{{N-k \choose m}}\\
&\approx {N-k \choose m}\cdot\sum_{\substack{\gamma \in \simmv(\alpha)\\ \gamma \neq \beta}} 
\left(\frac{N-m}{N}\right)^{\Delta(\beta,\gamma)} \quad\quad \text{by Lemma~\ref{lem:approx}}\\
&\leq {N-k \choose m}\cdot\sum_{\substack{\gamma \in \simmv(\alpha)\\ \gamma \neq \beta}}  D^{- \Delta(\beta,\gamma)} 
\end{align*}
Thus, $$\E_{V \gets \cal D}\left[ T_2|_V(\alpha, \beta) \right] \leq {N-k \choose m}\cdot \E_{V \gets \cal D}\left[   \sum_{\substack{\gamma \in \simmv(\alpha)\\ \gamma \neq \beta}}  D^{- \Delta(\beta,\gamma)}  \right] \leq {N-k \choose m}\cdot n^{o(r)},$$
where the second inequality follows from Proposition~\ref{prop:t2t3main}.
\end{proof}
\begin{proof}[Proof of Lemma~\ref{thm:t21}]
$\forall \alpha \in (\simmv)_Y$, 

\begin{align*}
\E_{V \gets \cal D}[T_2|_V (\alpha)] &\leq \sum_{\beta\in \simmv(\alpha)} \E_{V \gets \cal D}[T_2|_V (\alpha,\beta)]  \\
&= D^k \cdot {N-k \choose m}\cdot n^{o(r)}.
\end{align*}
\end{proof}

\begin{proof}[Proof of Lemma~\ref{thm:t20}]
\begin{align*}
\E_{V \gets \cal D}[T_2|_V] &= \sum_{\alpha \in \T(\immv)} \E_{V \gets \cal D}[T_2|_V (\alpha)] \\
&\leq \sum_{\alpha \in \T(\immv)} D^k \cdot {N-k \choose m}\cdot n^{o(r)} \\
&= n^r \cdot D^k \cdot {N-k \choose m}\cdot n^{o(r)} 
\end{align*}
\end{proof}


\subsection{Expected value of $T_3(\immv)$}
We now prove Lemma~\ref{thm:t30}.

\begin{proof}[Proof of Lemma~\ref{thm:t30}]
Let $V \gets \cal D$. Let
$$T_3^{=}|_V= \sum_{\alpha \in \T(\immv)} \left(\sum_{\substack{\beta,\gamma \in \simmv(\alpha) \\ \beta \neq \gamma}} {N - k - \Delta(\beta,\gamma) \choose m-\Delta(\beta,\gamma)}\right)$$
Let $$T_3^{\neq}|_V= \sum_{\substack{\alpha ,\alpha' \in \T(\immv)\\ \alpha \neq \alpha'}} \left(\sum_{\substack{\beta \in \simmv(\alpha) \\ \gamma \in \simmv(\alpha')}} {N - k - \Delta(\beta,\gamma) \choose m-\Delta(\beta,\gamma)}\right)$$
Observe that $$T_3|_V= T_3^{=}|_V + T_3^{\neq}|_V$$
For $\alpha \in \simm_Y$, let
\begin{equation}\label{eqn:T3eq}
T_3^{=}|_V(\alpha) = \sum_{\substack{\beta,\gamma \in \simmv(\alpha) \\ \beta \neq \gamma}} {N - k - \Delta(\beta,\gamma) \choose m-\Delta(\beta,\gamma)}
\end{equation}
For $\alpha, \alpha' \in \simm_Y$ such that $\alpha \neq \alpha'$, let 
\begin{equation}\label{eqn:T3neq}
T_3^{\neq}|_V(\alpha, \alpha') = \sum_{\substack{\beta \in\simmv(\alpha) \\ \gamma \in \simmv(\alpha')}} {N - k - \Delta(\beta,\gamma) \choose m-\Delta(\beta,\gamma)}
\end{equation}
For every $\alpha$ and $\alpha'$, $T_3^{\neq}|_V(\alpha, \alpha')$ counts for every $\beta$ extending $\alpha$ and $\gamma$ extending $\alpha'$, the number of pairs of multilinear shifts $m_{\beta}$ and $m_{\gamma}$, each  of degree $m$, such that $m_{\beta}$ is disjoint from $\beta$, $m_{\gamma}$ is disjoint from $\gamma$, and $\beta\cdot m_{\beta} = \gamma \cdot m_{\gamma}$. Consider 
\begin{align*}
{N - k - \Delta(\beta,\gamma) \choose m-\Delta(\beta, \gamma)} &= {N - k - \Delta(\beta,\gamma) \choose m-\Delta(\beta, \gamma)} \cdot \frac{{N-k \choose m}}{{N-k \choose m}}\\
&= {N-k \choose m}\cdot\frac{ {N - k - \Delta(\beta,\gamma) \choose m-\Delta(\beta, \gamma)} }{{N-k \choose m}}
\end{align*}
Now by an application of Lemma~\ref{lem:approx}, we obtain
\begin{equation}\label{eqn:binomcoeff}
{N - k - \Delta(\beta,\gamma) \choose m-\Delta(\beta, \gamma)} \approx {N-k \choose m}\cdot\left(\frac{m}{N}\right)^{\Delta(\beta,\gamma)} 
\end{equation}
Since by our choice of parameters $D < N/m$, plugging back Equation~\ref{eqn:binomcoeff} into Equation~\ref{eqn:T3eq}, we obtain
\begin{align*}
T_3^{=}|_V (\alpha) &\approx {N-k \choose m}\cdot \sum_{\substack{\beta,\gamma \in \simmv(\alpha)\\ \beta \neq \gamma}}\left(\frac{m}{N}\right)^{\Delta(\beta,\gamma)}\\
&\leq {N-k \choose m}\cdot \sum_{\beta \in \simmv(\alpha)}\left(\sum_{\gamma \in \simmv(\alpha), \gamma\neq \beta}\left(\frac{m}{N}\right)^{\Delta(\beta,\gamma)}\right)\\
&\leq {N-k \choose m}\cdot \sum_{\beta \in \simmv(\alpha)}\left(\sum_{\gamma \in \simmv(\alpha), \gamma\neq \beta}\left(D\right)^{-\Delta(\beta,\gamma)}\right)\\
&\leq {N-k \choose m}\cdot D^k \cdot \sum_{\gamma \in \simmv(\alpha)}\left(D\right)^{-\Delta(\beta,\gamma)}
\end{align*}
Now, applying  Proposition~\ref{prop:t2t3main}, we obtain
$$\E_{V\gets \cal D}\left[T_3^{=}|_V (\alpha) \right] \leq {N-k \choose m}\cdot D^k \cdot n^{o(r)}.$$
and hence 
\begin{equation}\label{eqn:t3equal}
\E_{V \gets \cal D}\left[T_3^{=}|_V \right] \leq n^r \cdot {N-k \choose m}\cdot D^k \cdot n^{o(r)} .
\end{equation}
Thus, remains to bound $\E_{V \gets \cal D}\left[T_3^{\neq}|_V \right]$.
For $\alpha, \alpha' \in \simm_Y$ such that $\alpha \neq \alpha'$, consider 
$$T_3^{\neq}|_V(\alpha, \alpha') = \sum_{\substack{\beta\in \simmv(\alpha) \\ \gamma \in \simmv(\alpha')}} {N - k - \Delta(\beta,\gamma) \choose m-\Delta(\beta,\gamma)}.$$
For $\beta \in \simmv(\alpha)$, 
Let 
\begin{align*}
T_3^{\neq}|_V(\alpha, \alpha',\beta) &= \sum_{\gamma \in \simmv(\alpha')} {N - k - \Delta(\beta,\gamma) \choose m-\Delta(\beta,\gamma)} 
\end{align*}
 Now by an application of Equation~\ref{eqn:binomcoeff}, it follows that 
$$T_3^{\neq}|_V(\alpha, \alpha',\beta)\approx {N-k \choose m}\cdot \sum_{\gamma \in \simmv(\alpha')} \left(\frac{m}{N}\right)^{\Delta(\beta,\gamma)} $$
Let $\epsilon' = 2/\Lambda$ be a constant. 
We now partition the sum over $\gamma$ into two parts, depending on whether $\Delta(\beta,\gamma) \geq (1-\epsilon')k$ or whether $\Delta(\beta,\gamma) < (1-\epsilon')k$. 
For $\alpha, \alpha' \in \T(\immv)$ such that $\alpha \neq \alpha'$, and for $\beta \in \simmv(\alpha)$,
let
$$T_{3_{\text{large} \Delta}}^{\neq}|_V(\alpha, \alpha' \beta) ={N-k \choose m} \cdot \left(\sum_{\substack{\gamma \in \simmv(\alpha') \\ \Delta(\gamma, \beta) \geq (1-\epsilon')k }}  \left(\frac{m}{N}\right)^{\Delta(\beta,\gamma)}\right)$$
and  
$$T_{3_{\text{small} \Delta}}^{\neq}|_V(\alpha, \alpha' \beta) = {N-k \choose m} \cdot \sum_{\substack{\gamma \in \simmv(\alpha') \\ \Delta(\gamma, \beta) < (1-\epsilon')k }}  \left(\frac{m}{N}\right)^{\Delta(\beta,\gamma)}$$
Thus
\begin{align*}
T_{3_{\text{large} \Delta}}^{\neq}|_V(\alpha, \alpha' \beta) &\leq {N-k \choose m} \cdot \sum_{\substack{\gamma \in \simmv(\alpha') \\ \Delta(\gamma, \beta) \geq (1-\epsilon')k }}  \left(\frac{m}{N}\right)^{\Delta(\beta,\gamma)}\\
&={N-k \choose m} \cdot \sum_{\substack{\gamma \in \simmv(\alpha') \\ \Delta(\gamma, \beta) \geq (1-\epsilon')k }}  \left(\frac{N-m}{N}\right)^{\Delta(\beta,\gamma)}\cdot  \left(\frac{m}{N-m}\right)^{\Delta(\beta,\gamma)}\\
&\leq {N-k \choose m} \cdot  \sum_{\substack{\gamma \in \simmv(\alpha') \\ \Delta(\gamma, \beta) \geq (1-\epsilon')k }}  \left(\frac{N-m}{N}\right)^{\Delta(\beta,\gamma)}\cdot  \left(\frac{m}{N-m}\right)^{ (1-\epsilon')k} \quad\quad \text{(since $\frac{m}{N-m} < 1$)}\end{align*}
Now, by our choice of parameters, $\left(\frac{m}{N-m}\right)^{k} = n^{-r}$ and $D = \frac{N}{N-m}$, we get 
$$T_{3_{\text{large} \Delta}}^{\neq}|_V(\alpha, \alpha' \beta) \leq {N-k \choose m} \cdot  \sum_{\substack{\gamma \in \simmv(\alpha') \\ \Delta(\gamma, \beta) \geq (1-\epsilon')k }}  D^{-\Delta(\beta,\gamma)}\cdot  n^{-(1-\epsilon')r} $$
From here, by applying Proposition~\ref{prop:t2t3main}, we obtain
\begin{equation}\label{eqn:T3small}
\E_{V \gets \cal D}\left[T_{3_{\text{large} \Delta}}^{\neq}|_V(\alpha, \alpha' \beta)\right]
\leq {N-k \choose m} \cdot n^{o(r)}\cdot  n^{-(1-\epsilon')r} \leq  {N-k \choose m} \cdot O(n^{(2\epsilon'-1)r}),
\end{equation}
We will  now bound 
$$T_{3_{\text{small} \Delta}}^{\neq}|_V(\alpha, \alpha' \beta) = {N-k \choose m} \cdot \sum_{\substack{\gamma \in \simmv(\alpha') \\ \Delta(\gamma, \beta) < (1-\epsilon')k }}  \left(\frac{m}{N}\right)^{\Delta(\beta,\gamma)}$$
Recall that for $\alpha, \alpha' \in \T$ such that $\alpha \neq \alpha'$, $\Delta(\alpha, \alpha') \geq r'-r$. 
For $\alpha, \alpha' \in \T(\immv)$ such that $\alpha \neq \alpha'$ and for $\beta \in \simmv(\alpha)$,
\begin{align*}
T_{3_{\text{small} \Delta}}^{\neq}|_V(\alpha, \alpha' \beta) &\leq  {N-k \choose m} \cdot \sum_{\substack{\gamma \in \simmv(\alpha') \\ \Delta(\gamma, \beta) < (1-\epsilon')k }}  \left(\frac{m}{N}\right)^{\Delta(\beta,\gamma)}\\
&= {N-k \choose m} \cdot\sum_{\substack{\gamma \in \simmv(\alpha') \\ \Delta(\gamma, \beta) < (1-\epsilon')k }}  \left(\frac{N-m}{N}\right)^{\Delta(\beta,\gamma)}\cdot  \left(\frac{m}{N-m}\right)^{\Delta(\beta,\gamma)}\\
&\leq {N-k \choose m} \cdot \sum_{\substack{\gamma \in \simmv(\alpha') \\ \Delta(\gamma, \beta) < (1-\epsilon')k }}  \left(\frac{N-m}{N}\right)^{\Delta(\beta,\gamma)} \quad\quad \text{(since $\frac{m}{N-m} < 1$)}\\
&={N-k \choose m} \cdot \sum_{\substack{\gamma \in \simmv(\alpha') \\ \Delta(\gamma, \beta) < (1-\epsilon')k }}   D^{-\Delta(\beta,\gamma)}
\end{align*}
Now, any $\gamma \in \simm_X$ can be expressed as $\prod_{i \in [r']}\gamma^{(i)}$, and $D^{-\Delta(\beta,\gamma)}= \prod_{i\in[r']} D^{-\Delta(\beta^{(i)},\gamma^{(i)})}$.  We will partition the set $[r']$ according to the number of ``agreements" of $\gamma^{(i)}$ and $\beta^{(i)}$. 

Let $A(\beta,\gamma) \subseteq [r']$ be the set of all $i$ such that $\Delta(\beta^{(i)}, \gamma^{(i)}) < k'$ (i.e. there is some $j \in [k']$ such that $\beta^{(i,j)} = \gamma^{(i,j)}$). Since $\Delta(\gamma, \beta) < (1-\epsilon')k = (1-\epsilon')k'r'$, thus $|A(\beta, \gamma)| \geq \epsilon'r'$.  Also, let $B(\alpha,\alpha') \subseteq [r']$ be the set of all $i \in[r']$ such that $\alpha^{(i)} = \alpha'^{(i)}$. Then by Lemma~\ref{lem:reedsolomon}, for $\alpha \neq \alpha'$, 
$|B(\alpha, \alpha')| \leq r$. 
\begin{claim}
Let $\alpha,\alpha' \in \T(\immv)$ be such that $\alpha \neq \alpha'$, and let $\beta \in \simmv(\alpha)$ and $\gamma \in \simmv(\alpha')$ be such that $\Delta(\beta, \gamma) < (1-\epsilon')k$. Then for any $i \in A(\beta,\gamma) \setminus B(\alpha, \alpha')$, it holds that $\Delta(\beta^{(i)}, \gamma^{(i)}) < k'$, and moreover $\beta^{(i,1)} \neq \gamma^{(i,1)}$. Moreover $|A(\beta, \gamma) \setminus B(\alpha, \alpha')| \geq \epsilon'r' -r$. 
\end{claim}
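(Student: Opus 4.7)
The first assertion, that $\Delta(\beta^{(i)}, \gamma^{(i)}) < k'$ whenever $i \in A(\beta,\gamma)$, is just a restatement of the definition of $A(\beta,\gamma)$, so it requires no work. The plan is therefore to prove the remaining two assertions in turn.

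To show $\beta^{(i,1)} \neq \gamma^{(i,1)}$ when $i \in A(\beta,\gamma) \setminus B(\alpha,\alpha')$, I will exploit the structure of the random restriction on special matrices together with the layered-graph viewpoint of $\immv^{\ast}$. Recall that for each special matrix $Y^{(i)}$, only variables from its first row survive the restriction. Consequently, in the corresponding bipartite graph, the only left vertex incident to any surviving edge is the first one; every surviving path therefore enters the $Y^{(i)}$-layer at that unique left vertex, and the right vertex it leaves from is determined entirely by which variable of $Y^{(i)}$ is used, i.e.\ by $\alpha^{(i)}$ in the case of $\beta$ and by $\alpha'^{(i)}$ in the case of $\gamma$. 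That right vertex is identified with the left vertex of the very next layer, namely the start-row of $X^{(i,1)}$ along the path. Since $i \notin B(\alpha,\alpha')$ forces $\alpha^{(i)} \neq \alpha'^{(i)}$, the paths for $\beta$ and $\gamma$ enter $X^{(i,1)}$ from different rows, and hence $\beta^{(i,1)}$ and $\gamma^{(i,1)}$ are variables from distinct rows of $X^{(i,1)}$, so they cannot be equal.

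For the cardinality bound, I will first observe that because different blocks involve disjoint sets of variables and $\beta,\gamma$ are multilinear of equal degree, Observation~\ref{obs:multilinear-dist} yields
\[
\Delta(\beta,\gamma) \;=\; \sum_{i \in [r']} \Delta(\beta^{(i)},\gamma^{(i)}).
\]
If $|A(\beta,\gamma)| < \epsilon' r'$ held, then more than $(1-\epsilon')r'$ blocks would satisfy $\Delta(\beta^{(i)},\gamma^{(i)}) = k'$, contributing at least $(1-\epsilon')k' r' = (1-\epsilon')k$ to the above sum and contradicting the hypothesis $\Delta(\beta,\gamma) < (1-\epsilon')k$. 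Hence $|A(\beta,\gamma)| \geq \epsilon' r'$. On the other hand, Lemma~\ref{lem:reedsolomon} applied to $\alpha \neq \alpha' \in \T(\immv)$ gives $|\text{Supp}(\alpha) \setminus \text{Supp}(\alpha')| \geq r'-r$, so $\alpha$ and $\alpha'$ agree in at most $r$ coordinates, i.e.\ $|B(\alpha,\alpha')| \leq r$. Combining these two facts gives $|A(\beta,\gamma) \setminus B(\alpha,\alpha')| \geq \epsilon' r' - r$, as claimed.

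The only conceptually nontrivial step is the graph-theoretic argument for $\beta^{(i,1)} \neq \gamma^{(i,1)}$; once the convention that $\alpha^{(i)}$ determines the entry row of $X^{(i,1)}$ via the first-row restriction on $Y^{(i)}$ is made explicit, the rest is simple bookkeeping. The distance-decomposition and the Reed--Solomon-style packing are direct consequences of earlier definitions and lemmas, so I anticipate no further obstacles.
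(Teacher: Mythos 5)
Your proposal is correct and follows essentially the same route as the paper. The paper handles the cardinality bound $|A(\beta,\gamma)| \ge \epsilon' r'$ and $|B(\alpha,\alpha')| \le r$ in the text immediately preceding the claim (via the same block-wise distance decomposition and Lemma~\ref{lem:reedsolomon}), and the explicit proof inside the claim addresses only the $\beta^{(i,1)} \neq \gamma^{(i,1)}$ assertion. Your graph/path phrasing of that step — the unique left vertex of $Y^{(i)}|_V$ forcing the path to exit at the right vertex labelled by $\alpha^{(i)}$, which becomes the start row of $X^{(i,1)}$ — is exactly the paper's index-based argument ($\alpha^{(i)} = y^{(i)}_{1,s}$ forces $\beta^{(i,1)} = x^{(i,1)}_{s,t}$, and $s \neq s'$ forces the two variables apart) recast in the layered-graph language that the paper itself sets up; there is no substantive difference.
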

\begin{proof}
The only tricky part is to show that $\beta^{(i,1)} \neq \gamma^{(i,1)}$, and we give a proof of this below. 
If $\alpha^{(i)} \neq \alpha'^{(i)}$, then this means that the variable in $\alpha$ corresponding to $Y^{(i)}|_V$, is distinct from the variable in $\alpha'$ corresponding to $Y^{(i)}|_V$. Any variable in $Y^{(i)}|_V$ is of the form $y^{(i)}_{1,s}$ for some $s \in [\n]$. Suppose that $\alpha^{(i)} = y^{(i)}_{1,s}$ and $\alpha'^{(i)} = y^{(i)}_{1,s'}$, for $s \neq s'$. Then for $\beta \in \simmv(\alpha)$, $\beta^{(i,1)}$ is a variable from $X^{(i,1)}$ and must be of the form $x^{(i,1)}_{s,t}$ for some $t\in[\n]$ and for $\gamma \in \simmv(\alpha)$, $\gamma^{(i,1)}$ must be of the form $x^{(i,1)}_{s',t'}$f or some $t'\in[\n]$. Since $s \neq s'$, thus $\beta^{(i,1)} \neq \gamma^{(i,1)}$. 
\end{proof}
Now for every subset $C \subseteq [r']$ such that $|C| = \epsilon'r'-r$, Let $M_C(\beta, \alpha')$ be the set of all $\gamma\in \simmv(\alpha')$ such that for all $i\in C$, $\Delta(\beta^{(i)}, \gamma^{(i)}) < k'$ and $\beta^{(i,1)} \neq \gamma^{(i,1)}$. Thus for every  $\alpha,\alpha' \in \T(\immv)$ such that $\alpha \neq \alpha'$, and for every $\beta \in \simmv(\alpha)$, every $\gamma \in \simmv(\alpha')$ such that $\Delta(\beta, \gamma) < (1-\epsilon')k$ gets counted in at least one such set $M_C(\beta,\alpha')$ for some choice of $C$.

Let $M_C(\beta, \alpha')^{(i)}$ be the set of all $\gamma^{(i)} \in \simmv(\alpha')^{(i)}$ such that if $i\in C$, then $\Delta(\beta^{(i)}, \gamma^{(i)}) < k'$ and $\beta^{(i,1)} \neq \gamma^{(i,1)}$. If $i \not \in C$ then there is no restriction. Thus it is easy to see that $M_C(\beta, \alpha') \subseteq \prod_{i\in [r']} M_C(\beta,\alpha')^{(i)}$.

Now, fixing $\alpha,\alpha' \in \T(\immv)$ such that $\alpha \neq \alpha'$, and $\beta \in \simmv(\alpha)$, we get that 
\begin{align*}
&T_{3_{\text{small} \Delta}}^{\neq}|_V(\alpha, \alpha' \beta)  \leq {N-k \choose m} \cdot\sum_{\substack{\gamma \in \simmv(\alpha') \\ \Delta(\gamma, \beta) < (1-\epsilon')k }}   D^{-\Delta(\beta,\gamma)}\\
&={N-k \choose m} \cdot\sum_{\substack{\gamma \in \simmv(\alpha') \\ \Delta(\gamma, \beta) < (1-\epsilon')k }}\prod_{i\in[r']} D^{-\Delta(\beta^{(i)},\gamma^{(i)})} \\
&\leq {N-k \choose m} \cdot\sum_{\substack{C\subset [r'], \\|C| = \epsilon'r'-r}} \left(\sum_{\gamma \in M_C(\beta,\alpha')}\left(\prod_{i\in C} D^{-\Delta(\beta^{(i)},\gamma^{(i)})} \cdot \prod_{i\in [r']\setminus C} D^{-\Delta(\beta^{(i)},\gamma^{(i)})} \right)\right)\\
&\leq {N-k \choose m} \cdot\sum_{\substack{C\subset [r'], \\ |C| = \epsilon'r'-r}}\left(\prod_{i \in C}\left(\sum_{\gamma^{(i)} \in M_C(\alpha')^{(i)}} D^{-\Delta(\beta^{(i)},\gamma^{(i)})} \right)\cdot \prod_{i\in [r']\setminus C}\left(\sum_{\gamma^{(i)} \in M_C(\alpha')^{(i)}} D^{-\Delta(\beta^{(i)},\gamma^{(i)})}\right)\right)
\end{align*}
Now, observe that  $i \in C$, $M_C(\alpha')^{(i)} \subseteq \A_{V}^{(i)}(\beta^{(i)})$. Thus, by Lemma~\ref{lem:disagreeagree} and Lemma~\ref{lem:calcsum}, we get that 
\begin{align*}
&\E_{V\gets \cal D}\left[T_{3_{\text{small} \Delta}}^{\neq}|_V(\alpha, \alpha' \beta)\right] \leq {N-k \choose m} \cdot \E_{V\gets \cal D}\left[\sum_{\substack{\gamma \in \simmv(\alpha'), \\ \Delta(\gamma, \beta) < (1-\epsilon')k } }  D^{-\Delta(\beta,\gamma)}\right] \\
&\leq {N-k \choose m} \cdot \sum_{\substack{C\subset [r'], \\ |C| = \epsilon'r'-r}} \left(\prod_{i \in C} \E_{V \gets \cal D}\left[\left(\sum_{\gamma^{(i)} \in M_C(\alpha')^{(i)}} D^{-\Delta(\beta^{(i)},\gamma^{(i)})} \right)\right]\prod_{i\in [r']\setminus C}\E_{V \gets \cal D}\left[\left(\sum_{\gamma^{(i)} \in M_C(\alpha')^{(i)}} D^{-\Delta(\beta^{(i)},\gamma^{(i)})}\right)\right]\right)\\
&\leq{N-k \choose m} \cdot {r' \choose \epsilon'r'-r}\cdot \left(O\left(\frac{1}{n}\right)\right)^{\epsilon'r'-r}2^{O(r')}\\
&= {N-k \choose m} \cdot\left(O\left(\frac{1}{n}\right)\right)^{\epsilon'r'-r} \cdot n^{o(r)}
\end{align*}
Thus since $\epsilon'r'-r > r$,  
$$\E[T_{3_{\text{small} \Delta}}^{\neq}(\alpha, \alpha' \beta)] \leq {N-k \choose m} \cdot\left(\frac{1}{n}\right)^{\epsilon'r'-r} \cdot n^{o(r)} \leq {N-k \choose m} \cdot n^{-r + o(r)}.$$
Putting this together with earlier computation showing that 
$$\E[T_{3_{\text{large} \Delta}}^{\neq}(\alpha, \alpha' \beta)] \leq {N-k \choose m} \cdot O(n^{(2\epsilon'-1)r}),$$
 we conclude that
$$\E[T_{3}^{\neq}(\alpha, \alpha' \beta)] \leq {N-k \choose m} \cdot O(n^{(2\epsilon'-1)r}).$$
Summing over $\beta \in \simmv(\alpha)$, we get that 
$$\E_{V \gets \cal D}\left[T_{3}^{\neq}|_V(\alpha, \alpha')\right]
\leq{N-k \choose m} \cdot  D^k\cdot  n^{(2\epsilon'-1)r}.$$
Summing over $\alpha, \alpha' \in \T(\immv)$ such that $\alpha \neq \alpha'$, we get that 
$$\E_{V \gets \cal D}\left[T_{3}^{\neq}|_V\right]
\leq n^{2r} \cdot {N-k \choose m} \cdot  D^k\cdot  n^{(2\epsilon'-1)r} = n^r \cdot {N-k \choose m} \cdot  D^k\cdot  n^{(2\epsilon')r}.$$
Putting this together with the bound in Equation~\ref{eqn:t3equal}, we conclude that 
$$\E_{V \gets {\cal D}}[T_3|_V] \leq n^r \cdot {N-k \choose m} \cdot  D^k\cdot  n^{\frac{4}{\Lambda} r}.$$
\end{proof}


\section{Open problems}~\label{sec:openprobs}
Our results (and those by~\cite{KLSS14}) give $n^{\Omega(\sqrt n)}$ lower bounds for polynomials computed by homogeneous $\spsp$ circuits. This suggests a very natural strategy of trying to prove lower bounds for any class of circuits $\cal C$. If one can show  that some polynomial $P \in \cal C$ can be computed by a $n^{o(\sqrt n)}$ sized homogeneous $\spsp$ circuit, then our results would immediately imply a lower bound for $\cal C$. 

Recall that the depth reduction of Tavenas~\cite{Tavenas13} shows that ever polynomial in$\VP$ can be expressed as a homogeneous $\spsp$ circuit of size $n^{O(\sqrt n)}$. Unfortunately since our lower bounds hold for a polynomial in $\VP$, thus the bound on the size of the depth 4 circuit obtained in the depth reduction cannot be improved. Although they cannot be improved for general circuits in $\VP$, they might be possible to improve for other rich and interesting classes of circuits such as formulas or even homogeneous formulas. (The results of ~\cite{KS-formula} had shown that a more efficient depth reduction for homogeneous formulas is not possible when one wants to reduce to  homogeneous $\spsp^{[\sqrt n]}$ circuits, but for general homogeneous $\spsp$ circuits this might still be possible.)

Another more general question that seems even more important now is to truly understand the potential of the shifted partial derivative method  (and its variants as used in this work and earlier works) for proving lower bounds for general arithmetic circtuits. These techniques do seem to be giving significantly stronger lower bounds than we were able to show some years ago. Do they have the potential of separating $\VP$ from $\VNP$? Or is there some inherent underlying reason that suggests we might need different techniques?

\bibliographystyle{alpha}

\bibliography{refs}
\appendix
\section{Proof of Lemma~\ref{lem:probab}}\label{app:prob-proofs}


\begin{proof}
We will prove the lemma via contradiction. We will in fact, show that $$Pr_{X\leftarrow {\cal R}}[f(X) \geq 0.01\cdot (\e_{X \leftarrow {\cal R}}[g(X)])] \geq 0.1$$ Since, for all $x$, $f(x) \leq g(x)$, this would imply that $$Pr_{X\leftarrow {\cal R}}[f(X) \geq 0.01\cdot (\e_{X \leftarrow {\cal R}}[f(X)])] \geq 0.1$$

So, for the sake of contradiction, let us assume that $$Pr_{X\leftarrow {\cal R}}[f(X) \geq 0.01\cdot (\e_{X \leftarrow {\cal R}}[g(X)])] < 0.1$$ For the rest of the proof, all the probabilities are over $X \leftarrow {\cal R}$.
Define 
\begin{itemize}
\item $R_1 = \{x  : f(x) < 0.01\cdot \e[g]\}$
\item $R_2 = R\setminus R_1$
\item $W = \{x \in R : 0.9\cdot \e[g] \leq g(x) \leq 1.1\cdot \e[g] \}$
\end{itemize}
 We know that $Pr[X \in W] \geq 0.99$.
If possible, let the assertion of the lemma be false. This implies that $Pr[X \in R_1] \geq 0.9$ and $Pr[X \in R_2] \leq 0.1$. Let $Z \subseteq W \cap R_1$ be  a subset of $R$ such that  $Pr[X \in Z] = 0.89$. 
Now 
$$\e[g] = \sum_{x \in R }Pr[X = x]g(x) = \sum_{x \in Z }Pr[X = x]g(x) + \sum_{x \in R\setminus Z }Pr[X = x]g(x)$$ 
Substituting the values now, we get
$$\e[g] \geq Pr[X \in Z]\cdot 0.9\cdot \e[g] + \sum_{x \in R\setminus Z }Pr[X = x]g(x)$$ 
Simplifying further, we get 
$$ \sum_{x \in R\setminus Z }Pr[X = x]g(x) \leq \e[g]\cdot (1-0.9\cdot Pr[X \in Z]) \leq 0.2\cdot \e[g]$$
We will now compute an upper bound on the expected value of $f$ and arrive at a contradiction. 
$$\e[f] = \sum_{x \in R }Pr[X = x]f(x) = \sum_{x \in Z }Pr[X = x]f(x) + \sum_{x \in R\setminus Z }\Pr[X = x]f(x)$$
Observe that 
\begin{itemize}
\item $\sum_{x \in Z }Pr[X = x]f(x) \leq 0.01\cdot \e[g]\cdot Pr[X \in Z] \leq 0.01\times 0.89 \times \e[g] = 0.0089\cdot \e[g]$
\item $\sum_{x \in R\setminus Z }Pr[X = x]f(x) \leq \sum_{x \in R\setminus Z }Pr[X = x]g(x) \leq 0.2\cdot \e[g]$
\end{itemize}

So, we obtain
 $$\e[f] \leq 0.3\cdot \e[g] < 0.5\cdot \e[g]$$
which is a contradiction.
\end{proof}

\section{Proof of Lemma~\ref{lem:inc-exc-sample}}\label{app:incexc}

\begin{proof}
Let $\lambda' > \lambda$ be any constant.
For each $i \in [l]$, we construct the set $\tilde{W}_i$ by picking every element of $W_i$ independently with probability  
$\frac{1}{\lambda'}$.  By linearity of expectations, $\e(|\tilde{W}_i|) = \frac{1}{\lambda'}|W_i|$. Similarly, for any $i \neq j$, $\e(|\tilde{W}_i \cap \tilde{W}_j|) = \frac{1}{{\lambda'}^2}|W_i \cap W_j|$. By the principle of inclusion-exclusion, $|\cup_{i \in [l]} \tilde{W}_i| \geq \sum_{i \in [l]}|\tilde{W}_i| - \sum_{i, j \in [l], i \neq j}|\tilde{W}_i \cap \tilde{W}_j|$. By the linearity of expectations, $\e(|\cup_{i \in [l]} \tilde{W}_i|) \geq \sum_{i \in [l]}\e(|\tilde{W}_i|) - \sum_{i, j \in [l], i \neq j}\e(|\tilde{W}_i \cap \tilde{W}_j|)$, which is at least $(1/\lambda'-\lambda/{\lambda'}^2)\sum_{i \in [l]}|W_i|$. Hence, there is some choice of random bits, such that the size of $\cup_{i \in [l]} \tilde{W}_i$ is at least $(1/\lambda'-\lambda/{\lambda'}^2)\sum_{i \in [l]}|W_i|$. Now, taking $\lambda' = 2\lambda$ completes the proof.
\end{proof}

\end{document}